\documentclass[12pt]{article}
\usepackage[margin=.88in]{geometry}

\usepackage{multirow}
\usepackage{amsmath, amssymb,amsthm} %amsfonts
\usepackage{bm}
\usepackage[pdftex]{graphicx}
\usepackage{algorithm, algorithmic}
\usepackage{comment}
\usepackage{caption}
\usepackage{subcaption}
\usepackage{enumitem}
\usepackage[sort]{natbib}
\usepackage{diagbox}

\usepackage{hyperref}
\hypersetup{
	colorlinks=true,
	linkcolor=blue,
	filecolor=blue,      
	urlcolor=red,
	citecolor=blue,
}

\DeclareMathAlphabet\mathbfcal{OMS}{cmsy}{b}{n}

\newcommand{\argmax}{\operatornamewithlimits{argmax}}
\newcommand{\argmin}{\operatornamewithlimits{argmin}}

\newcommand{\be}{\begin{equation}}
	\newcommand{\ee}{\end{equation}}
 \newcommand{\bs}{\begin{split}}
	\newcommand{\es}{\end{split}}
\newcommand{\bea}{\begin{eqnarray}}
	\newcommand{\eea}{\end{eqnarray}}
\newcommand{\beas}{\begin{eqnarray*}}
	\newcommand{\eeas}{\end{eqnarray*}}

\usepackage{color}

\newcommand{\E}{\mathbb{E}}
\newcommand{\PP}{\mathbb{P}}
\newcommand{\R}{\mathbb{R}}

\newcommand{\bI}{\mathbf{I}}
\newcommand{\bi}{\mathbf{i}}

\newcommand{\bcA}{{\mathbfcal{A}}}
\newcommand{\bcM}{{\mathbfcal{M}}}

\newcommand{\bcE}{{\mathbfcal{E}}}

\newcommand{\bcK}{{\mathbfcal{K}}}

\newtheorem*{theorem*}{Theorem}
\newtheorem{theorem}{Theorem}

\newtheorem{lemma}{Lemma}

\newtheorem{definition}{Definition}
\newtheorem{remark}{Remark}

\usepackage{setspace}
\title{Bandable Cumulant Tensors: Optimal Estimation and Applications in Non-Gaussian Data Modeling}

\author{Runshi Tang\footnote{Department of Statistics, Columbia University}, ~ 
Anru R. Zhang\footnote{Department of Biostatistics \& Bioinformatics and Department of Computer Science, Duke University}, ~
Yuefeng Han\footnote{Department of Applied and Computational Mathematics and Statistics, University of Notre Dame}, ~
and ~ 
Wei Biao Wu\footnote{Department of Statistics, University of Chicago}}

\date{}

\begin{document}

\maketitle

\begin{abstract}
Higher-order cumulants capture the non-Gaussian dependence that covariance misses, but they are hard to use in high dimensions. An order-$d$ cumulant tensor has $p^d$ entries, and the plug-in sample cumulant is generally not even rate-optimal under the tensor spectral norm. For ordered data, both difficulties admit one remedy: assuming that higher-order interactions decay away from the main tensor diagonal, we introduce a bandable cumulant class and a tapered sample cumulant estimator that computes only $O(pk^{d-1})$
local entries at bandwidth $k$ and never forms the full tensor. Under exponential-type tail conditions, we prove nonasymptotic spectral-norm bounds that separate tapering bias from stochastic error, and a minimax lower bound over the same class that matches the leading bias and stochastic terms of the upper bound; for sub-Gaussian observations, the tapered estimator attains the minimax rate whenever $n\gg(k+\log p)^{d-1}$ at the oracle bandwidth $k$, with the ambient dimension entering only through $\log p$. Localization also suppresses the higher-order fluctuations behind this suboptimality, so tapering plays a stronger role here than in bandable covariance estimation. The spectral-norm guarantee transfers directly to downstream tasks, yielding plug-in error bounds for cumulant Yule--Walker estimation in autoregressive models, minimum-distance estimation in moving-average models, and matched-filter source localization in sensor arrays. Simulations corroborate the theory, and real-data analyses of RR-interval, air-quality, and Neuropixels recordings illustrate the
resulting stability gains.
\end{abstract}

\section{Introduction}

% Covariance matrices play a central role in multivariate statistics. They
% summarize second-order dependence and provide the foundation for principal
% component analysis, linear prediction, spectral analysis, and many other
% procedures
% \citep{anderson2003multivariate,muirhead1982aspects,jolliffe2002principal}.
% For Gaussian data, this second-order description is complete, since the
% multivariate normal distribution is determined by its mean vector and
% covariance matrix
% \citep{anderson2003multivariate,muirhead1982aspects}. Outside the Gaussian
% setting, however, covariance can miss important structure. Features such as
% asymmetry, phase dependence, nonlinear interaction, and latent non-Gaussian
% dependence are naturally higher-order and have long motivated the use of
% cumulants and higher-order spectra in statistics and signal processing
% \citep{brillinger1981time,mccullagh1987tensor,mendel1991tutorial,nikias1993higher}.
For Gaussian data, second-order analysis is complete: the mean vector and
covariance matrix determine the distribution, so covariance-based methods
such as principal component analysis, linear prediction, and spectral
analysis lose nothing
\citep{anderson2003multivariate,muirhead1982aspects,jolliffe2002principal}.
Real data rarely are, and covariance then misses the structure of interest.
Asymmetry, phase dependence, nonlinear interaction, and latent non-Gaussian
dependence are intrinsically higher-order, and have long motivated cumulants
and higher-order spectra in statistics and signal processing
\citep{brillinger1981time,mccullagh1987tensor,mendel1991tutorial,nikias1993higher}.
These ideas are also central in independent component analysis and blind source
separation, where non-Gaussian higher-order information can identify latent
structure not visible from covariance alone
\citep{comon1994independent,hyvarinen2001independent,nordhausen2018independent}.

Cumulants are particularly useful because they remove lower-order contributions
from moments and isolate joint dependence at a specified order. A key feature is
that cumulants of order $d\ge3$ vanish for Gaussian random variables.
Consequently, higher-order cumulants are sensitive to non-Gaussian signal
structure but insensitive to independent Gaussian contamination. This built-in immunity to Gaussian noise drives higher-order spectral
analysis, nonlinear dependence diagnostics, and non-Gaussian time-series
modeling
\citep{nikias1987bispectrum,mendel1991tutorial,nikias1993signal,zhang2018asymptotic,giannakis1990identifiability,swami1990linear,gospodinov2015minimum}.

The difficulty is that higher-order cumulants are not only much larger than
covariance matrices, but also statistically more delicate to estimate. For a
random vector in $\mathbb R^p$, the order-$d$ cumulant is a tensor with $p^d$
entries. More importantly, unlike the empirical covariance, the sample cumulant
is a nonlinear polynomial in empirical moments of multiple orders. For
$d\ge3$, the resulting plug-in estimator is generally not rate-optimal under
tensor spectral norm: higher-order empirical fluctuations can dominate the
optimal stochastic scale in the unstructured problem \citep{tang2026detection}.
Thus, the statistical question is not only how to estimate a large tensor, but
how to regularize it in a way that suppresses unstable global fluctuations while
respecting the structure of the underlying dependence.

% This paper develops a bandability framework for higher-order cumulant tensors.
% The idea is motivated by the literature on bandable covariance matrices, where
% entries far from the main diagonal are assumed to decay and tapering or banding
% can improve estimation under spectral norm
% \citep{cai2012minimax,cai2016structured}. We extend this principle from
% second-order dependence to higher-order cumulants. The guiding assumption is
% that, for ordered data, the strongest higher-order interactions are local. Such
% locality is natural for temporal lags, spatial sensor arrays,
% frequency-indexed measurements, genomic positions, and other sequence-ordered
% features, where nearby coordinates often represent related scientific units and
% long-range higher-order interactions are expected to be weaker.
This paper develops a bandability framework for higher-order cumulant
tensors, built on a simple premise: when coordinates are ordered by time,
space, frequency, or genomic position, nearby coordinates represent related
scientific units and the strongest higher-order interactions are local. The
second-order version of this premise is the classical bandable covariance
model, in which off-diagonal decay allows tapering to achieve the optimal
spectral-norm rate \citep{cai2012minimax,cai2016structured}; we extend the
principle to higher-order cumulants.

To formalize this locality, for a multi-index $(i_1,\ldots,i_d)$ we measure
its distance from the main tensor diagonal by
\[
    \operatorname{diam}(i_1,\ldots,i_d)
    =
    \max_{a,b\in[d]} |i_a-i_b|.
\]
A cumulant tensor is called bandable if its entries decay as this diameter
grows. The tensor setting, however, has a different geometry. A diagonal tube of
width $k$ contains $O(pk^{d-1})$ entries, rather than $O(pk)$ as in the matrix
case, leading to a different approximation bias and bias--variance tradeoff.

To estimate such tensors, we introduce a tapered sample cumulant estimator that
preserves entries near the main tensor diagonal and downweights or removes
entries farther away. This localization reduces the variance from noisy
long-range cumulants and lowers the computational cost: for fixed order $d$ and
bandwidth $k$, only $O(pk^{d-1})$ entries need to be computed and stored. The statistical payoff, however, goes beyond this reduction in size and variance. In the unstructured higher-order problem, the full
plug-in sample cumulant contains higher-order stochastic terms that can make it
rate-suboptimal. Localization replaces these global fluctuations, governed by
the ambient dimension $p$, with local fluctuations governed by the bandwidth
$k$. Tapering therefore does more here than in bandable covariance estimation,
where it is purely a bias--variance device: at the right bandwidth and in
the regimes identified below, it turns the simple plug-in estimator into a
minimax rate-optimal one.

The framework also leads naturally to regularized plug-in procedures for
downstream statistical tasks. In autoregressive models, tapered cumulants
regularize higher-order Yule--Walker equations. In moving-average models, the
cumulant tensor is exactly banded, leading to a regularized minimum-distance
estimator based on non-Gaussian identifying equations. In spatial sensor
arrays, higher-order cumulants remove additive Gaussian noise and yield a
tapered matched-filter score for localizing non-Gaussian sources. These
examples illustrate the broader principle that existing cumulant-based
procedures can often be stabilized by replacing raw empirical cumulants with
their tapered counterparts.

\paragraph{Main Contributions.}
The paper makes three main contributions.

First, we introduce the bandable cumulant tensor class and the tapered
sample cumulant estimator, prove nonasymptotic spectral-norm upper bounds
with an explicit decomposition into tapering bias and stochastic error, and
establish a minimax lower bound that matches the leading bias and stochastic
terms of the upper bound. The analysis pinpoints where the higher-order
problem departs from covariance estimation: localization reduces the
higher-order empirical fluctuations that make the unstructured plug-in
sample cumulant rate-suboptimal. For sub-Gaussian observations and a
bandwidth $k$ of the oracle order, these fluctuations are negligible once
$ n\gg (k+\log p)^{d-1},$ and the simple tapered plug-in estimator attains the structured minimax
rate.
To the best of our knowledge, this is the first minimax theory for
estimating higher-order cumulant tensors under structural decay
assumptions.

Second, we show that the tapered estimator serves as a regularized input to
downstream cumulant-based procedures: cumulant Yule--Walker estimation for
autoregressive models, minimum-distance estimation for moving-average
models, and matched-filter source localization in spatial sensor arrays. In
each case, under explicit stability conditions, the parameter or
localization error is controlled by the tensor spectral-norm error, so the
bias--variance guarantees for the tensor transfer directly to the
downstream task. Because cumulants of order $d\ge3$ vanish for Gaussian
variables, the estimating equations and the localization score are
unaffected by independent additive Gaussian measurement noise that biases
their second-order counterparts.

Third, we make the estimator practical. It runs in $O(npk^{d-1})$ time and
$O(pk^{d-1})$ memory without ever forming the full order-$d$ tensor, and a
Lepski-type stability rule selects the bandwidth from data. Simulations
corroborate the bias--variance theory, and three real-data analyses of RR
intervals, air-quality sensors, and paired patch-clamp/Neuropixels
waveforms show that tapered cumulants stabilize downstream estimating
equations, localize non-Gaussian sources more accurately than a
covariance-based score, and match the accuracy of the raw third-order score
at a small fraction of its computational and storage cost.

\paragraph{Organization.}
The rest of the paper is organized as follows.
Section~\ref{sec:bandable-cumulants-estimator} introduces cumulant tensors, the
bandable cumulant class, and the tapered sample cumulant estimator.
Section~\ref{sec:stat-guarantee} gives nonasymptotic upper bounds, minimax
lower bounds, and the resulting rate comparisons.
Section~\ref{sec:application} develops plug-in applications of tapered
cumulants. Section~\ref{sec:computation-tuning} discusses computation and
tuning. Section~\ref{sec:simulation} presents simulation studies, and
Section~\ref{sec:real-data} gives real-data illustrations. Proofs and
additional technical details are collected in the Supplement.

\subsection{Related Literature}

Moments and cumulants are classical objects in probability and statistics.
Moment-based estimation goes back at least to the method-of-moments tradition
\citep{pearson1894contributions}, while cumulants provide an algebraic way to
separate genuine higher-order dependence from lower-order moment
contributions. Their combinatorial structure is naturally described through
partitions and partition lattices
\citep{speed1983cumulants,mccullagh1987tensor}. In the univariate setting,
moments and cumulants have also been studied from the viewpoint of the moment
problem and distributional identifiability \citep{schmudgen2017moment}. Our
work uses these classical algebraic objects, but the main focus is on
high-dimensional random vectors, where the order-$d$ moment or cumulant becomes
a tensor.

The order-$2$ cumulant is the covariance matrix, and our work is closely
connected to the large literature on structured covariance estimation. For
high-dimensional random vectors, structure is essential for consistent
estimation under strong norms, and common examples include bandable, sparse,
low-rank, factor, spiked, Toeplitz, and other structured covariance models
\citep{bickel2008regularized,cai2010optimal,cai2012minimax,cai2016structured,koltchinskii2017concentration}.
For ordered variables and stationary processes, banding and tapering are
natural regularization tools because nearby coordinates or lags are expected to
be more strongly dependent than distant ones
\citep{wu2009banding,cai2016structured}. For matrix- and tensor-valued
observations, another major line of work exploits product or additive multiway
structure. Examples include separable matrix-normal covariance models,
Kronecker-product covariance models, regularized transposable covariance
models, sums of Kronecker products, Kronecker-sum decompositions, and sparse
graphical models for matrix-variate or tensor-valued data
\citep{dutilleul1999mle,werner2008estimation,allen2010transposable,tsiligkaridis2013covariance,greenewald2013kronecker,zhou2014gemini,greenewald2015robust,greenewald2019tensor,wang2022kronecker}.
More recent work has also considered approximately separable matrix-variate
covariance through core shrinkage, tensor-train or dimension-free structured
covariance estimation, and mode-wise spiked covariance models
\citep{hoff2023core,puchkin2024dimension,patarasau2025structured,tang2025mode}.
The present paper is closest in spirit to bandable covariance estimation, but
the higher-order setting changes both the geometry and the stochastic behavior:
a diagonal tube in an order-$d$ tensor has width-dependent size
$O(pk^{d-1})$, and the empirical cumulant contains nonlinear lower-order moment
correction terms absent from covariance estimation.

Higher-order moment and cumulant tensors of random vectors have appeared in
several areas. In signal processing and time-series analysis, higher-order
cumulants and spectra are used to describe nonlinear and non-Gaussian
dependence beyond covariance
\citep{brillinger1981time,mendel1991tutorial,nikias1987bispectrum,nikias1993higher,zhang2018asymptotic}.
Independent component analysis provides another important setting where
non-Gaussian higher-order information is used for identifiability and
estimation
\citep{comon1994independent,cardoso1999high,hyvarinen2001independent,nordhausen2018independent,auddy2025large}.
Moment tensors are also a key tool in learning latent variable models, where
low-order observable moments are decomposed to recover model parameters
\citep{anandkumar2012method,anandkumar2014tensor}. More directly related to
the present paper, \cite{tang2026detection} studies unstructured estimation and
detection of high-order moment and cumulant tensors under tensor spectral norm.

This work is also related to structured tensor decomposition. Tensor
decomposition methods, including Tucker and CP decompositions, provide
low-dimensional representations of multiway arrays and have become standard
tools for high-dimensional tensor data analysis
\citep{delathauwer2000multilinear,kolda2009tensor}. Because many tensor
optimization problems are computationally hard in general
\citep{hillar2013most}, a substantial body of work studies special structures
or statistical models that make estimation and computation tractable. Examples
include tensor methods for latent variable models \citep{anandkumar2014tensor},
tensor SVD and low-rank tensor estimation \citep{zhang2018tensor},
tensor-on-tensor regression and related structured tensor models
\citep{luo2024tensor}, structured CP/PARAFAC decomposition with sparse or
incoherent factors \citep{rambhatla2020provable}, and recent nonasymptotic
theory for CP decomposition and alternating least squares under
signal-plus-noise models \citep{tang2025revisit}.

\subsection{Notation}

For a positive integer $m$, write $[m]=\{1,\ldots,m\}$. For vectors,
$\|\cdot\|_2$ denotes the Euclidean norm, and
$\mathbb S^{p-1}=\{u\in\mathbb R^p:\|u\|_2=1\}$ denotes the unit sphere in
$\mathbb R^p$. For matrices, $\|\cdot\|$ and $\|\cdot\|_F$ denote the operator
norm and Frobenius norm, respectively. For a tensor $\mathcal A$,
$\mathcal A_{\mathbf i}$ denotes the entry indexed by the multi-index
$\mathbf i$. We use $\langle\cdot,\cdot\rangle$ for the usual Euclidean inner
product, with the natural entrywise extension to matrices and tensors.

For vectors $u_1,\ldots,u_d\in\mathbb R^p$, let
$u_1\otimes\cdots\otimes u_d$ denote their tensor product. In particular, for
$x\in\mathbb R^p$, $x^{\otimes d}$ denotes the order-$d$ tensor with entries $(x^{\otimes d})_{i_1,\ldots,i_d}    =    x_{i_1}\cdots x_{i_d}$. For a tensor $\mathcal A\in(\mathbb R^p)^{\otimes d}$, its spectral norm is
defined as
\[
    \|\mathcal A\|
    =
    \sup_{u_1,\ldots,u_d\in\mathbb S^{p-1}}
    \left|
    \left\langle
    \mathcal A,
    u_1\otimes\cdots\otimes u_d
    \right\rangle
    \right|.
\]
When $d=2$, this definition coincides with the usual matrix operator norm.

For two tensors of the same size, $\mathcal A*\mathcal B$ denotes their
entrywise product. We write $\mathbf 1$ for the all-one tensor of the
appropriate dimension and $\mathbf I\{\cdot\}$ for the indicator function.
Constants $C,c,C_1,c_1,\ldots$ may change from line to line. Their dependence
on fixed quantities is indicated by subscripts when needed; for example,
$C_{\alpha,d}$ denotes a constant depending only on $\alpha$ and $d$. We write
$a\lesssim b$ if $a\le Cb$ for a universal constant $C$, and $a\asymp b$ if
both $a\lesssim b$ and $b\lesssim a$.

For a real-valued random variable $Z$ and $\gamma>0$, define the Orlicz norm
\[
    \|Z\|_{\Phi_\gamma}
    =
    \inf\left\{
    t>0:
    \mathbb E\exp\left(\frac{|Z|^\gamma}{t^\gamma}\right)\le 2
    \right\}.
\]
For a random vector $X\in\mathbb R^p$, the quantity $\sup_{u\in\mathbb S^{p-1}}\|u^\top X\|_{\Phi_\gamma}$ measures the uniform $\Phi_\gamma$ exponential-type tail size of one-dimensional projections of
$X$. All other symbols and model-specific notation are introduced where they first appear.

\section{Bandable Higher-Order Cumulant Tensors and the Tapered Estimator}
\label{sec:bandable-cumulants-estimator}

\subsection{Cumulant Tensors}
\label{subsec:cumulant-tensors}

Let $X=(X_1,\ldots,X_p)^\top\in\R^p$ be a random vector satisfying $\mathbb E\|X\|_2^d<\infty$.
For $1\le r\le d$, define the $r$-th moment tensor $\bcM_r=\E[X^{\otimes r}]\in(\R^p)^{\otimes r}$ by
\[
    (\bcM_r)_{i_1,\ldots,i_r}
    =
    \E[X_{i_1}\cdots X_{i_r}].
\]
We define cumulant tensors through the moment--cumulant formula.
Let $\mathcal P([r])$ be the set of all partitions of $[r]=\{1,\ldots,r\}$.
For a partition $\pi\in\mathcal P([r])$, write $|\pi|$ for the number of blocks in $\pi$.
For a block $B=\{j_1,\ldots,j_s\}\subset[r]$ with $j_1<\cdots<j_s$, set $i_B=(i_{j_1},\ldots,i_{j_s})$.
The order-$r$ cumulant tensor $\bcK_r\in(\R^p)^{\otimes r}$ is defined entrywise by
\begin{equation}\label{eq_mk_formula}
    (\bcK_r)_{i_1,\ldots,i_r}
    =
    \sum_{\pi\in\mathcal P([r])}
    \mu(\pi)
    \prod_{B\in\pi}
    (\bcM_{|B|})_{i_B},
    \qquad
    \mu(\pi)=(-1)^{|\pi|-1}(|\pi|-1)!.
\end{equation}
Equivalently, the moment tensors and cumulant tensors satisfy
\[
    (\bcM_r)_{i_1,\ldots,i_r}
    =
    \sum_{\pi\in\mathcal P([r])}
    \prod_{B\in\pi}
    (\bcK_{|B|})_{i_B}.
\]
These algebraic definitions agree with the derivatives of the cumulant generating function whenever the cumulant generating function exists in a neighborhood of the origin.
In particular, $\bcK_1=\E[X]$.
If $X$ is centered, then $\bcK_2=\bcM_2$ and $\bcK_3=\bcM_3$.

For fixed order $d$, separating the one-block partition from all remaining partitions gives tensor-valued polynomial identities
\begin{equation}
\label{eq_relation_cumulant_moment}
    \bcK_d
    =
    \bcM_d
    +
    F_d(\bcM_1,\ldots,\bcM_{d-1}),
    \qquad
    \bcM_d
    =
    \bcK_d
    +
    G_d(\bcK_1,\ldots,\bcK_{d-1}),
\end{equation}
where $F_d$ and $G_d$ depend only on $d$.
Thus the difference between the order-$d$ moment tensor and the order-$d$ cumulant tensor is determined entirely by lower-order moment or cumulant tensors.

Given i.i.d. observations $X_1,\ldots,X_n$, define the empirical moment tensors by
\[
    \widehat{\bcM}_r
    =
    \frac1n\sum_{\ell=1}^n X_\ell^{\otimes r},
    \qquad
    1\le r\le d.
\]
The order-$d$ sample cumulant tensor is the plug-in estimator obtained by replacing each population moment tensor in \eqref{eq_relation_cumulant_moment} by its empirical version:
\begin{equation}
\label{eq_sample_cumulant_tensor}
    \widehat{\bcK}_d
    =
    \widehat{\bcM}_d
    +
    F_d(\widehat{\bcM}_1,\ldots,\widehat{\bcM}_{d-1}).
\end{equation}
This plug-in definition is used throughout the paper.

\subsection{Bandable Cumulant Structure}
\label{subsec:bandable-cumulant-structure}

The bandability assumption requires an ordering of the coordinates.
Such an ordering may come from time, space, frequency, genomic location, a sequence index, or another scientifically meaningful arrangement of the variables.
For a multi-index $\bi=(i_1,\ldots,i_d)\in[p]^d$, define its diameter by
\[
    \operatorname{diam}(\bi)
    =
    \max_{a,b\in[d]} |i_a-i_b|.
\]
This quantity measures how far the tensor entry indexed by $\bi$ is from the main tensor diagonal $i_1=\cdots=i_d$.
Small diameter corresponds to a local higher-order interaction among nearby coordinates, while large diameter corresponds to a long-range interaction.

\begin{definition}[Bandable cumulant tensor]
For $\alpha>0$ and $\beta>0$, an order-$d$ cumulant tensor $\bcK_d$ is called $(\alpha,\beta)$-bandable if
\[
    |(\bcK_d)_{\bi}|
    \le
    \beta\left(1+\operatorname{diam}(\bi)\right)^{-(\alpha+d-1)}
\]
for every $\bi\in[p]^d$.
We denote the corresponding class by
\[
    \mathcal K_{\alpha,\beta}^d
    =
    \left\{
    \bcK_d\in(\R^p)^{\otimes d}:
    |(\bcK_d)_{\bi}|
    \le
    \beta\left(1+\operatorname{diam}(\bi)\right)^{-(\alpha+d-1)}
    \text{ for all } \bi\in[p]^d
    \right\}.
\]
\end{definition}

\begin{remark}[Normalization of the bandability exponent]
The exponent $\alpha+d-1$ accounts for the geometry of a diagonal tube in an order-$d$ tensor.
For fixed $d$, the number of indices satisfying $\operatorname{diam}(\bi)\le k$ is of order $pk^{d-1}$, since after one coordinate is fixed, the remaining $d-1$ coordinates may vary within a window of length proportional to $k$.
Thus the term $d-1$ compensates for the growth in the number of entries away from the main diagonal.
With this normalization, tapering at bandwidth $k$ leads to the spectral-norm approximation bias $\beta k^{-\alpha-d/2+1}$; see Section~\ref{sec:stat-guarantee}.
When $d=2$, the condition reduces to $|(\bcK_2)_{ij}|  \le \beta(1+|i-j|)^{-(\alpha+1)}$,
which is the usual bandable covariance condition \citep{cai2016structured,cai2012minimax}.
\end{remark}

\subsection{Banding and Tapering}
\label{subsec:tapered-estimator}

For a bandwidth $k\ge1$, define the hard-banding operator $\mathcal B_k$ by
\begin{equation}
\label{eq:banded-op}
    \mathcal B_k(\bcA)_{\bi}
    =
    \bcA_{\bi} \cdot \bI\{\operatorname{diam}(\bi)\le k\}
\end{equation}
for any order-$d$ tensor $\bcA$.
The corresponding hard-banded sample cumulant estimator is
\[
    \widehat{\bcK}_{d,B,k}
    =
    \mathcal B_k(\widehat{\bcK}_d).
\]
Hard banding keeps all empirical cumulants inside the diagonal tube of radius $k$ and sets all other entries to zero.

We now define the tapering estimator. For an integer $k\ge2$, let $h_k=\lfloor k/2\rfloor$ and define
\begin{equation}
\label{eq:taper-estimator-w}
    w_k(m)
    =
    \begin{cases}
    1, & 0\le m\le h_k,\\
    \dfrac{k-m}{k-h_k}, & h_k<m<k,\\
    0, & m\ge k.
    \end{cases}
\end{equation}
Thus $w_k(m)$ equals one near the main diagonal, decreases linearly in the transition region, and vanishes outside the diagonal tube of radius $k$.
Let $\mathcal W_k$ be the deterministic weight tensor
\[
    (\mathcal W_k)_{\bi}
    =
    w_k\!\left(\operatorname{diam}(\bi)\right).
\]
The tapered sample cumulant estimator is
\begin{equation}
\label{eq:tapered-cumulant-estimator}
    \widehat{\bcK}_{d,T,k}
    =
    \mathcal W_k*\widehat{\bcK}_d,
    \qquad
    (\widehat{\bcK}_{d,T,k})_{\bi}
    =
    w_k\!\left(\operatorname{diam}(\bi)\right)
    (\widehat{\bcK}_d)_{\bi}.
\end{equation}
Here $*$ denotes entrywise multiplication.
We visualize the raw cumulant tensor, the tapering function, and the tapered cumulants as in Figure \ref{fig:banding-tapering}. 
\begin{figure}[ht]
    \centering
    \includegraphics[width=\textwidth]{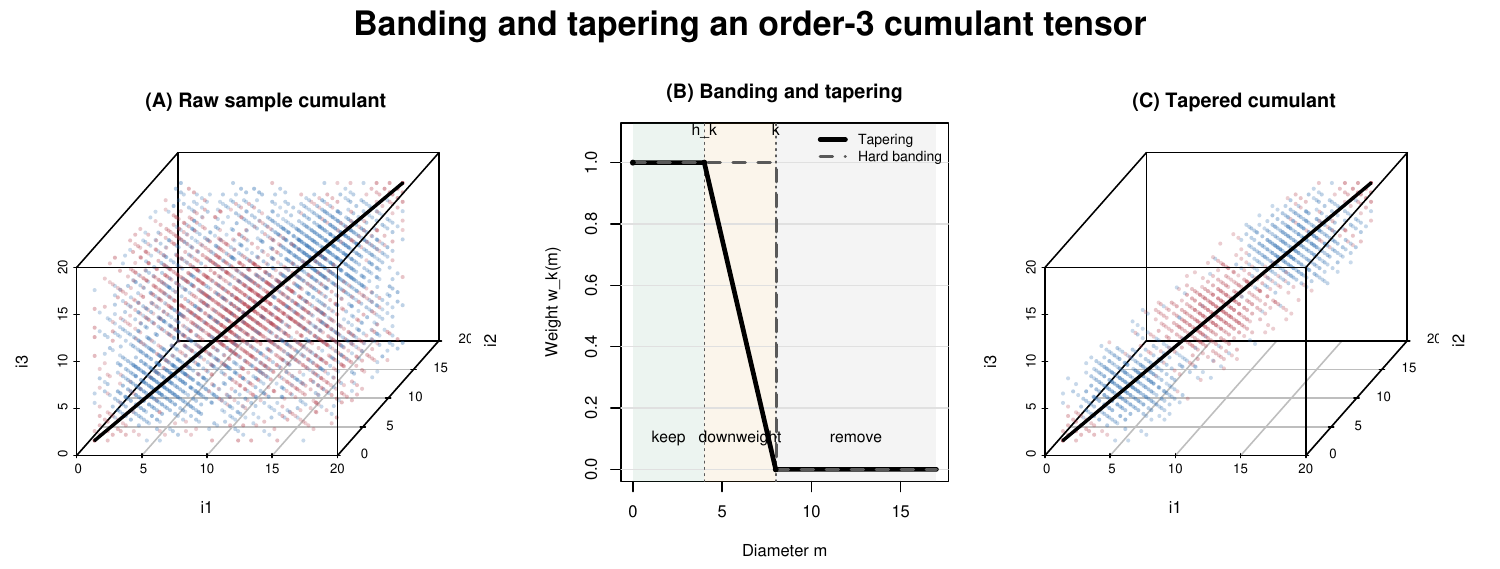}
    \caption{Illustration of banding and tapering for an order-$3$ cumulant tensor. }
    \label{fig:banding-tapering}
\end{figure}
The associated population tapered tensor is
\[
    \bcK_{d,T,k}
    =
    \mathcal W_k*\bcK_d.
\]
Therefore,
\[
    \widehat{\bcK}_{d,T,k}-\bcK_d
    =
    \mathcal W_k*(\widehat{\bcK}_d-\bcK_d)
    +
    (\mathcal W_k-\mathbf 1)*\bcK_d.
\]
The first term is the stochastic error after localization, while the second term is the deterministic approximation bias caused by tapering.
The upper bound in Section~\ref{sec:stat-guarantee} controls these two terms separately.

\section{Statistical Guarantees}
\label{sec:stat-guarantee}

This section gives nonasymptotic spectral-norm guarantees for the tapered cumulant estimator and establishes a minimax lower bound over the bandable cumulant class that matches the leading bias and stochastic terms of the upper bound.

\subsection{Upper Bound}
\label{subsec:upper-bound}

We first state an upper bound for a fixed bandwidth $k$.

\begin{theorem}[Upper bound for the tapered sample cumulant estimator]
\label{thm:tapered-cumulant-upper-bound}
Let $d\ge2$ be fixed, and let $X_1,\ldots,X_n$ be i.i.d. copies of a centered random vector $X\in\R^p$.
Let $\bcK_d=\bcK_d(X)$ be the order-$d$ cumulant tensor.
Suppose that $\bcK_d\in\mathcal K_{\alpha,\beta}^d$ and that, for some $0<\gamma\le d$, 
$
    \sup_{u\in\mathbb S^{p-1}}
    \|u^\top X\|_{\Phi_\gamma}
    \le
    K.
$
For $x\ge0$, define
\[
    \Delta_{k,x}
    =
    \sqrt{\frac{k+\log p+x}{n}}
    +
    \frac{(k+\log p+x)^{d/\gamma}}{n}.
\]
Then for every integer $2\le k\le p$ and every $x\ge0$, with probability at least $1-Ce^{-x}$,
\[
    \left\|
    \widehat{\bcK}_{d,T,k}
    -
    \bcK_d
    \right\|
    \le
    C_{\alpha,d}\beta k^{-\alpha-d/2+1}
    +
    C_{\gamma,d,K}
    \Delta_{k,x}
    (1+\Delta_{k,x})^{d-1}.
\]
Consequently,
\[
    \mathbb E
    \left\|
    \widehat{\bcK}_{d,T,k}
    -
    \bcK_d
    \right\|
    \le
    C_{\alpha,d}\beta k^{-\alpha-d/2+1}
    +
    C_{\gamma,d,K}
    \Delta_{k,0}
    (1+\Delta_{k,0})^{d-1}.
\]
\end{theorem}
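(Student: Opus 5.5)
We follow the decomposition displayed at the end of Section~\ref{subsec:tapered-estimator},
\[
\widehat{\bcK}_{d,T,k}-\bcK_d=\mathcal W_k*(\widehat{\bcK}_d-\bcK_d)+(\mathcal W_k-\mathbf 1)*\bcK_d ,
\]
and bound the two terms separately.

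\textbf{Bias term.} We first reduce to a matrix-type argument. Partition $[p]$ into consecutive blocks of length $m$. Because $(\mathcal W_k-\mathbf 1)*\bcK_d$ vanishes when $\operatorname{diam}(\bi)\le h_k\asymp k$, we decompose dyadically by diameter. Let $D_j=\{\bi:2^{j-1}k/2<\operatorname{diam}(\bi)\le 2^jk/2\}$. On each shell, use the elementary bound that the spectral norm of a tensor is at most $\max_{a}\max_{i_a}\sum_{\bi_{-a}}|\mathcal A_{\bi}|$ raised appropriately. More concretely, apply a Schur-type test: $|\langle\mathcal A,u_1\otimes\cdots\otimes u_d\rangle|\le\sum_{\bi}|\mathcal A_{\bi}|\prod_a|u_{a,i_a}|$. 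We then use a H\"older/AM--GM inequality $\prod_a|u_{a,i_a}|\le d^{-1}\sum_a|u_{a,i_a}|^{2}\cdot(\ldots)$. This alone gives only the $\ell_1$ row-sum bound $\beta\sum_{\mathrm{diam}>k}(1+\mathrm{diam})^{-\alpha-d+1}\asymp\beta k^{-\alpha}$, which is too weak.

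To get the extra factor $k^{-(d-2)/2}$, we instead split each shell into cubes of side $\asymp 2^jk$ along the diagonal. Within a cube, we bound the spectral norm by the Frobenius norm, which is at most $\beta(2^jk)^{-\alpha-d+1}\cdot(2^jk)^{d/2}$. Because the supports are almost block-diagonal (each coordinate index meets $O(1)$ cubes), the spectral norm of the sum is at most a constant times the maximum over cubes, by Cauchy--Schwarz over blocks. This yields $\beta(2^jk)^{-\alpha-d/2+1}$ per shell, and summing the geometric series gives $C_{\alpha,d}\beta k^{-\alpha-d/2+1}$.

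\textbf{Stochastic term.} The key structural step is again localization. Tapering can be written as an average of hard bandings, $\mathcal W_k=\frac1{k-h_k}\sum_{m=h_k}^{k-1}\mathbf I\{\operatorname{diam}\le m\}$. It is also dominated by a block-diagonal cover: cover $[p]$ by $O(p/k)$ overlapping windows $J$ of length $2k$ such that every $\bi$ with $\operatorname{diam}\le k$ lies in some $J^d$. We would use the standard trick of \citet{cai2012minimax}, writing the banded tensor as a signed combination of at most three block-diagonal sums of subtensors restricted to $J^d$. The spectral norm of a block-diagonal tensor with disjoint index blocks is the maximum over blocks. Hence
\[
\|\mathcal W_k*(\widehat{\bcK}_d-\bcK_d)\|\lesssim\max_J\|(\widehat{\bcK}_d-\bcK_d)_{J^d}\|.
\]
The subtensor $(\widehat{\bcK}_d-\bcK_d)_{J^d}$ is exactly the sample-cumulant error of the $2k$-dimensional vector $X_J$. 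We then invoke the unstructured spectral-norm bound for sample cumulants (as in \citet{tang2026detection}) in dimension $2k$ at confidence level $x+\log p$, and take a union bound over the $O(p)$ windows.

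If that bound must be derived, we expand $\widehat{\bcK}_d-\bcK_d$ via \eqref{eq_relation_cumulant_moment} as a polynomial in $\widehat{\bcM}_r-\bcM_r$, $r\le d$. Each monomial has spectral norm at most a product of $\|\widehat{\bcM}_{r}-\bcM_{r}\|$ over its factors, times bounded population moments. Each moment error is controlled by an $\varepsilon$-net over $(\mathbb S^{2k-1})^r$, of cardinality $e^{Ck}$, combined with a Bernstein inequality for sums of the $\Phi_{\gamma/r}$ variables $\prod_a u_a^\top X_\ell$. This gives
\[
\|\widehat{\bcM}_r-\bcM_r\|\lesssim\sqrt{t/n}+t^{r/\gamma}/n,\qquad t=k+\log p+x .
\]
Multiplying these factors yields the form $\Delta(1+\Delta)^{d-1}$. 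The expectation bound then follows by integrating the tail over $x$.

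\textbf{Main obstacle.} The hardest step is the moment concentration for heavy-tailed products. Sums of $\Phi_{\gamma/d}$ variables have tails that are not sub-exponential when $\gamma<d$. Obtaining the clean $t^{d/\gamma}/n$ term requires a truncation argument, or a generalized Bernstein inequality for $\alpha$-subexponential variables, applied uniformly over the net. For the first attempt we would use the latter.
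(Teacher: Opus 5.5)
Your bias argument (dyadic shells in the diameter, two shifted grids of cubes of side $\asymp 2^jk$, a Frobenius bound on each cube, then the block-diagonal maximum) matches the paper. So does your treatment of the local cumulant error: the restriction to $J^d$ is the sample cumulant of $X_J$, followed by the partition expansion, nets, and the generalized Bernstein inequality. The gap is in the localization step for the stochastic term.

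\textbf{The failing step.} You correctly write $\mathcal W_k=\frac{1}{k-h_k}\sum_{m=h_k}^{k-1}\mathbf I\{\operatorname{diam}\le m\}$. You then assert that each hard-banded tensor $\mathcal B_m(\bcA)$ is a signed combination of three block-diagonal sums of cube restrictions $\bcA_{J^d}$. This is false.

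\begin{itemize}
\item Take the two shifted grids of length $2m$ and the grid of length $m$, and call their block-diagonal projections $P_0,P_1,Q$. The combination $P_0+P_1-Q$ keeps exactly the entries whose indices fall in at most two adjacent length-$m$ blocks. That set includes entries of diameter up to $2m-1$, so this is a block-banding, not $\mathcal B_m$.
\item No bounded combination of cube restrictions can reproduce $\mathcal B_m$. Already for $d=2$, the band $\{|i-j|\le m\}$ restricted to a $2m\times 2m$ window contains a triangular truncation of an $m\times m$ off-diagonal block. Its Schur-multiplier norm grows like $\log m$, so $\|\mathcal B_m(\bcA)\|\le C\max_J\|\bcA_{J^d}\|$ fails for suitable $\bcA$.
\end{itemize}

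Bounding each banding in your average separately therefore loses a $\log k$ factor. Alternatively it forces you to work with non-cubical masks. For those, neither the identification with the sample cumulant of $X_J$ nor the factorization $\Pi_J(\bcM_\pi)=\bigotimes_j\Pi_J(\bcM_{|B_j|})$ of the partition terms is available.

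\textbf{The repair.} Use the average over $m$ as a whole rather than term by term.

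\begin{itemize}
\item An index of diameter $D$ lies in exactly $(m-D)_+$ windows $I_{\ell,m}=\{\ell,\ldots,\ell+m-1\}\cap[p]$. Hence $\sum_{r=0}^{m-1}\mathcal B_r(\bcA)=S_m(\bcA):=\sum_\ell\Pi_{I_{\ell,m}}(\bcA)$.
\item It follows that $\mathcal W_k*\bcA=\{S_k(\bcA)-S_{h_k}(\bcA)\}/(k-h_k)$.
\item Grouping the windows in $S_m$ by $\ell$ modulo $m$ gives $m$ block-diagonal sums. So $\|S_m(\bcA)\|\le m\max_\ell\|\Pi_{I_{\ell,m}}(\bcA)\|$, and therefore
\[
\|\mathcal W_k*\bcA\|\le\frac{k+h_k}{k-h_k}\max_{m\in\{k,h_k\}}\max_\ell\|\Pi_{I_{\ell,m}}(\bcA)\|\le 3\max_{m\in\{k,h_k\}}\max_\ell\|\Pi_{I_{\ell,m}}(\bcA)\|.
\]
\end{itemize}

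This is the paper's reduction and the standard tapering device from bandable covariance estimation. For even $k$ it amounts to averaging your three-piece grid decomposition over grid offsets instead of over bandwidths. Apply it to $\bcA=\widehat{\bcK}_d-\bcK_d$ with a union bound over the two window lengths. It then feeds directly into your local cumulant bound, and the rest of your argument goes through.
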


The bound has two parts.
The first is the approximation bias term $C_{\alpha,d}\beta k^{-\alpha-d/2+1}$ that comes from the tail of the bandable cumulant tensor outside the central region where the taper equals one.
The second is the stochastic error from estimating the retained local cumulant entries, which depends on $k$ rather than $p$, apart from the logarithmic factor.

\subsection{Minimax Lower Bound}
\label{subsec:minimax-lower-bound}

We now state a minimax lower bound, which further shows that the bias--variance tradeoff in Theorem~\ref{thm:tapered-cumulant-upper-bound} is unavoidable.

\begin{theorem}[Lower bound for bandable cumulant tensors]
\label{thm_bandable_cumulant_lower_bound}
Let $d\ge3$ be fixed, and let 
\[
    \mathcal P_{\alpha,\beta}^d
    =
    \left\{
    X\in\R^p:
    \E X=0,\ 
    \|v^\top X\|_{\Phi_2}\le 2,\ \forall v\in\mathbb S^{p-1},
    \text{ and } \bcK_d(X)\in\mathcal K_{\alpha,\beta}^d
    \right\}
\]
be the sub-Gaussian bandable distributional class. There exist constants $k_0=k_0(d)\ge2$ and $c_d>0$, depending only on $d$, such that, for every $p\ge k_0$ and $n\ge1$,
\[
    \inf_{\widehat{\bcK}_d}
    \sup_{X\in\mathcal P_{\alpha,\beta}^d}
    \E
    \left\|
    \widehat{\bcK}_d-\bcK_d(X)
    \right\|
    \ge
    c_d
    \sup_{1\le k\le p}
    \min
    \left\{
    \beta k^{-\alpha-d/2+1},
    \sqrt{\frac{k+\log p}{n}},
    1
    \right\}.
\]
\end{theorem}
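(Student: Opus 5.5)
Fix $k\in[1,p]$ and let $\epsilon_k=\min\{\beta k^{-\alpha-d/2+1},\sqrt{(k+\log p)/n},1\}$. The plan is to show that the minimax risk is at least $c_d\epsilon_k$ for each $k$; taking the supremum over $k$ then finishes the proof. Since $\sqrt{(k+\log p)/n}\asymp \max\{\sqrt{k/n},\sqrt{\log p/n}\}$, it suffices to prove two separate lower bounds. The first is a ``bias-type'' bound of order $\min\{\beta k^{-\alpha-d/2+1},\sqrt{k/n},1\}$, obtained by Assouad's lemma on a block of size $k$. The second is a ``$\log p$'' bound of order $\min\{\beta,\sqrt{\log p/n},1\}$, obtained by Fano's lemma over $\asymp p$ well-separated locations. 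The second bound suffices for the $\log p$ part of $\epsilon_k$ because $\beta k^{-\alpha-d/2+1}\le\beta$. Both constructions use the same distributional device. We take $X=Z+\tau\,\xi\, v$, or a small collection of such perturbations, where $Z\sim N(0,I_p/2)$ contributes nothing to cumulants of order $d\ge3$. The variable $\xi$ is a bounded, centered, non-Gaussian scalar with $\kappa_d(\xi)\neq0$; concretely, a centered two-point or shifted Rademacher-type law with nonzero $d$th cumulant. The vector $v$ is supported on a window of length $\lesssim k$. Then $\bcK_d(X)=\tau^d\kappa_d(\xi)v^{\otimes d}$ exactly, the sub-Gaussian constraint $\|u^\top X\|_{\Phi_2}\le2$ holds once $\tau\|v\|_2$ is a small constant, and bandability holds provided $\tau^d\|v\|_\infty^d\lesssim\beta k^{-(\alpha+d-1)}$, because every entry of $v^{\otimes d}$ has diameter at most $k$.

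For the Assouad part, we take a window $\{1,\dots,k\}$, a hypercube $\omega\in\{0,1\}^{k}$, and $v_\omega=k^{-1/2}(\mathbf 1+\delta\,\mathrm{sgn\ pattern}(\omega))$. A cleaner alternative is to index the perturbations by $\omega$ through a sum $\sum_j\omega_j\,\tau\xi_j e_j$-type design with independent $\xi_j$. The bandable constraint must be satisfied through $\|\bcK_d\|_{\max}\lesssim\beta k^{-(\alpha+d-1)}$ over the block. For $v\propto k^{-1/2}$ on the block, the signal $\tau^d\kappa_d v^{\otimes d}$ has spectral norm $\tau^d\kappa_d$ and max entry $\tau^dk^{-d/2}$, so the bandable constraint allows $\tau^d\asymp\beta k^{-\alpha-d/2+1}$. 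This is exactly the bias rate, and it is what produces the $\beta k^{-\alpha-d/2+1}$ term. The plan is therefore to use a family $X_\omega=Z+\tau\xi\,v_\omega$ with $v_\omega=k^{-1/2}\sum_{j\le k}(1+\eta\omega_j)e_j$, normalized. The spectral separation $\|v_\omega^{\otimes d}-v_{\omega'}^{\otimes d}\|$ should be compared to the Hamming distance through a rank-one perturbation argument, which gives $\|v_\omega^{\otimes d}-v_{\omega'}^{\otimes d}\|\gtrsim\|v_\omega-v_{\omega'}\|_2$. For the KL divergence between neighboring $\omega$, we use the Gaussian smoothing by $Z$ together with a moment-matching bound, namely KL $\lesssim$ (covariance difference)$^2$ plus higher-order terms. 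This requires tuning $\eta$ so that $n\cdot\mathrm{KL}\lesssim1$ per coordinate flip. This gives a separation of order $\min\{\tau^d,\sqrt{k/n}\}$.

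The main obstacle is controlling the KL divergence, or $\chi^2$, between non-Gaussian mixtures $Z+\tau\xi v$ so that it scales like the square of the cumulant change rather than like the covariance change. The covariance difference $\tau^2(v_\omega v_\omega^\top-v_{\omega'}v_{\omega'}^\top)$ would otherwise dominate. The first thing to try is to fix the covariance across hypotheses: we choose $\xi$ with $\mathbb E\xi^2$ absorbed into an adjusted Gaussian component, $Z_\omega\sim N(0,I/2-\tau^2\mathbb E\xi^2 v_\omega v_\omega^\top)$. This makes all hypotheses share the first $d-1$ cumulants, or at least the covariance. A Hermite/Edgeworth expansion of the likelihood ratio then gives $\chi^2\lesssim\|\Delta\bcK_d\|_F^2$ plus higher order. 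Since the cumulant differences are rank-one-like, $\|\cdot\|_F\asymp$ spectral norm, and the Assouad/Fano bookkeeping yields the $\sqrt{k/n}$ rate. If the lower cumulants cannot be matched exactly, we will restrict to $\xi$ with symmetric low-order moments, e.g.\ a sign-symmetric design for even orders and a centered skewed law for odd orders, so that only the $d$th cumulant differs at leading order. For the Fano part, we take $p/k_0$ disjoint windows and place a single spike $\tau\xi e_j$-type bump in one of them. The pairwise KL is $\lesssim n\tau^{2d}$, and Fano with $\log(p/k_0)$ hypotheses gives a separation $\gtrsim\min\{\beta,\sqrt{\log p/n},1\}$. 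Combining the two bounds gives the stated rate.
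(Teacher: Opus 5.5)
Your reduction is sound, and in one respect simpler than the paper's. Since $\sqrt{(k+\log p)/n}\asymp\max\{\sqrt{k/n},\sqrt{\log p/n}\}$ and $\beta k^{-\alpha-d/2+1}\le\beta$, it suffices to combine a single-window bound with a diagonal-spike bound over $e_1,\dots,e_p$. Your bandability bookkeeping (entries $\asymp\tau^dk^{-d/2}$ on a window of diameter $<k$) also matches the paper's.

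The gap is in the step you flag as the main obstacle. In the additive model $Z+\tau\xi v$, two hypotheses differ in every cumulant of order $j\ge2$, by $\tau^j\kappa_j(\xi)(v^{\otimes j}-v'^{\otimes j})$. The divergence is driven by the lowest order at which they differ, not by order $d$. Your covariance adjustment removes $j=2$. Your fallback choice of $\xi$ does not remove $3\le j\le d-1$ once $d\ge5$: a skewed law has $\kappa_3(\xi)\neq0$, and a symmetric law generically has $\kappa_4(\xi)\neq0$. Then the per-flip KL is of order $\tau^6\|v-v'\|_2^2$ rather than $\tau^{2d}\|v-v'\|_2^2$.

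This costs the rate. At the balancing bandwidth, $\epsilon:=\beta k^{-\alpha-d/2+1}\asymp\sqrt{k/n}$ is small and forces $\tau^d\asymp\epsilon$, so (in the $\kappa_3(\xi)\neq0$ case) your bookkeeping yields only about $\epsilon^{(2d-3)/d}\ll\epsilon$. The claim ``pairwise KL $\lesssim n\tau^{2d}$'' in your $\log p$ part has the same problem.

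To repair this you need two things:
\begin{itemize}
\item a bounded $\xi$ with $\kappa_3(\xi)=\cdots=\kappa_{d-1}(\xi)=0\neq\kappa_d(\xi)$; such laws exist, e.g.\ discrete laws matching $N(0,1)$ moments through order $d-1$;
\item an actual proof of the neighbor divergence bound. Your Hermite heuristic lives in $L^2$ of a Gaussian, but the density of $G+\tau\xi$, $G\sim N(0,\tfrac12-\tau^2\E\xi^2)$, divided by the $N(0,\tfrac12)$ density, tends to zero in the tails. So it does not directly control $\chi^2(P_\omega\|P_{\omega'})$.
\end{itemize}

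The paper avoids all of this by tilting $\gamma_p$ with $\exp\{\theta h_M(u^\top x)\}$, where $h_M$ is a bounded truncation of $H_d$. The likelihood ratio then lies in $[e^{-2\theta M},e^{2\theta M}]$. $\mathrm{KL}(P_{u,\theta}\|\gamma_p)=O(\theta^2)$ follows by Taylor-expanding the log-partition function, and $\kappa_d=\Theta(\theta)$ because $\E[h_M(U)H_d(U)]>0$. The $d$th cumulant and $\sqrt{\mathrm{KL}}$ are therefore first order in the same parameter, and no lower cumulants need matching.

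Second, the hypercube as written violates Assouad's separation hypothesis. Normalizing $k^{-1/2}\sum_j(1+\eta\omega_j)e_j$ kills the component of $\omega-\omega'$ along $\mathbf 1$. For instance $v_{\mathbf 0}=v_{\mathbf 1}$, so $\min_{H(\omega,\omega')\ge1}\|\Delta\bcK_d\|/H(\omega,\omega')=0$. There are two fixes:
\begin{itemize}
\item drop the normalization (the norm then stays in $[1,1+\eta]$), which restores a separation $\gtrsim\tau^d\eta\sqrt{H(\omega,\omega')/k}$ for $\eta\le c_d$;
\item do as the paper does: run Fano over a Varshamov--Gilbert sign code placed on $\lfloor p/k\rfloor$ disjoint blocks, with pairwise correlations at most $1/2$ and KL measured against one Gaussian reference. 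This needs no neighbor-to-neighbor divergence bound at all, and it absorbs the $\log(p/k)$ location term.
\end{itemize}

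Your ``cleaner alternative'' $\sum_j\omega_j\tau\xi_je_j$ does not work. It produces a diagonal cumulant tensor $\sum_j a_je_j^{\otimes d}$, whose spectral norm is $\max_j|a_j|$. The separation is then $\tau^d|\kappa_d(\xi)|$ regardless of Hamming distance, and Assouad yields only $n^{-1/2}$. Moreover, diameter-zero entries make the bandability constraint independent of $k$.
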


\begin{remark}[Lower-bound construction]
The main technical issue in the lower bound is to construct least favorable distributions whose order-$d$ cumulant perturbations are compatible with the bandable geometry. We use perturbations localized to diagonal tubes of width about $k$, allow these perturbations to occur at approximately $p/k$ disjoint locations, and calibrate their amplitudes according to the decay constraint. This gives an effective testing complexity of order
$
    k+\log(ep/k)
    \asymp
    k+\log p
$
and hence the stochastic scale $\sqrt{(k+\log p)/n}$, while the bandability constraint limits the signal size to $\beta k^{-\alpha-d/2+1}$. Optimizing over the tube width $k$ then gives the stated minimax lower bound. The construction is related to rank-one perturbation methods for unstructured high-order cumulant estimation \citep{tang2026detection}, but the localization, location packing, and bandwidth optimization are specific to the present bandable setting.
\end{remark}

The lower bound is stated over a sub-Gaussian distributional class whose order-$d$ cumulant tensor belongs to the same bandable class as in the upper bound. The next subsection compares the two bounds and derives the resulting minimax rate.

% Item 6: replace the complete Consequences and Comparisons subsection with the following.

\subsection{Consequences and Comparisons}
\label{subsec:rate-comparison}

\paragraph{Oracle bandwidth and minimax rate.}
Let $a=\alpha+\frac d2-1$ denote the exponent in the approximation term. Theorem~\ref{thm_bandable_cumulant_lower_bound} shows that the relevant bias--variance tradeoff at bandwidth $k$ is between the approximation scale $\beta k^{-a}$ and the leading stochastic scale $\sqrt{(k+\log p)/n}$. Accordingly, define the leading-order oracle bandwidth by
\[
    k_{\mathrm{or}}
    \in
    \argmin_{2\le k\le p}
    \left\{
    \beta k^{-a}
    +
    \sqrt{\frac{k+\log p}{n}}
    \right\}.
\]

Let $k_\star=(\beta^2n)^{1/(2\alpha+d-1)}$. If $k_0\le k_\star\le p$ and $\log p\lesssim k_\star$, then the logarithmic location term is dominated by the local block width, so $k_{\mathrm{or}}\asymp k_\star$. In this case,
\[
    \beta k_\star^{-a}
    \asymp
    \sqrt{\frac{k_\star}{n}}
    \asymp
    \beta^{1/(2\alpha+d-1)}
    n^{-(\alpha+d/2-1)/(2\alpha+d-1)}.
\]
Therefore, Theorem~\ref{thm_bandable_cumulant_lower_bound} gives
\[
    \inf_{\widehat{\bcK}_d}
    \sup_{X\in\mathcal P_{\alpha,\beta}^d}
    \mathbb E
    \left\|
    \widehat{\bcK}_d-\bcK_d(X)
    \right\|
    \gtrsim
    \min
    \left\{
    \sqrt{\frac{k_\star}{n}},
    1
    \right\}.
\]

Conversely, choosing $k\asymp k_\star$ in Theorem~\ref{thm:tapered-cumulant-upper-bound} yields, in the regime $\Delta_{k,0}\le1$,
\[
    \mathbb E
    \left\|
    \widehat{\bcK}_{d,T,k}
    -
    \bcK_d
    \right\|
    \lesssim
    \sqrt{\frac{k_\star}{n}}
    +
    \frac{(k_\star+\log p)^{d/\gamma}}{n}.
\]
Thus, if
\[
    \frac{(k_\star+\log p)^{d/\gamma}}{n}
    \lesssim
    \sqrt{\frac{k_\star}{n}},
\]
then the tapered plug-in estimator attains the structured minimax rate $\sqrt{k_\star/n}$. Since $\log p\lesssim k_\star$, a sufficient condition is $n\gg(k_\star+\log p)^{2d/\gamma-1}$. In particular, for sub-Gaussian observations with $\gamma=2$, this condition becomes $n\gg(k_\star+\log p)^{d-1}$.

\paragraph{Comparison with bandable covariance estimation.}
When $d=2$, the order-$2$ cumulant tensor is the covariance matrix, and the leading bias--variance tradeoff becomes
\[
    \beta k^{-\alpha}
    +
    \sqrt{\frac{k+\log p}{n}}.
\]
This agrees with the familiar tradeoff in bandable covariance estimation under spectral norm, where banding and tapering are known to be rate-optimal \citep{cai2010optimal,cai2012minimax,cai2016structured}. The term $k$ represents the complexity of estimation within a local matrix block, while $\log p$ accounts for the multiplicity of possible block locations. The higher-order case, however, is not a direct matrix analogue. A diagonal tube of width $k$ in an order-$d$ tensor contains $O(pk^{d-1})$ entries rather than $O(pk)$, which changes the approximation geometry and leads to the bias term $\beta k^{-\alpha-d/2+1}$. In addition, for $d\ge3$, the plug-in cumulant contains the higher-order fluctuation $(k+\log p)^{d/\gamma}/n$, which has no analogue in ordinary covariance estimation. Thus tapering plays a stronger role in the higher-order problem: it regularizes both the tensor geometry and the instability of empirical cumulants.

\paragraph{Comparison with unstructured high-order cumulant estimation.}
The bandable result should also be contrasted with the unstructured high-order cumulant problem. For fixed $d\ge3$, \cite{tang2026detection} show that, under sub-Gaussianity and without structural assumptions on $\bcK_d$, the minimax spectral-norm rate is $\sqrt{p/n}\wedge1$, while the raw plug-in sample cumulant is generally not rate-optimal because of higher-order empirical fluctuations. In the present bandable setting, the leading stochastic scale is reduced from $\sqrt{p/n}$ to $\sqrt{(k+\log p)/n}$. The bandwidth $k$ represents the effective local dimension, while $\log p$ is the unavoidable cost of identifying the location of the local perturbation. Consequently, bandability can yield a substantial improvement whenever $k+\log p\ll p$.

The comparison also clarifies the role of the higher-order plug-in term. In the unstructured setting, the raw sample cumulant contains a fluctuation of order $p^{d/\gamma}/n$, whereas tapering replaces this global term by $(k+\log p)^{d/\gamma}/n$. When this term is negligible relative to the leading bias and square-root stochastic terms, the tapered plug-in estimator attains the corresponding structured minimax rate. Outside that regime, the lower bound does not determine whether the higher-order term is information-theoretically unavoidable or specific to the plug-in estimator. This is the higher-order cumulant analogue of the classical principle that structural decay can permit estimation at a rate faster than the unrestricted high-dimensional benchmark \citep{koltchinskii2017concentration,cai2010optimal,cai2016structured}.

\section{Plug-in Applications of Tapered Cumulants}\label{sec:application}

\subsection{Cumulant Yule--Walker Estimation for Autoregressive Models}
\label{subsec:cumulant-yule-walker}

We first illustrate how the tapered cumulant estimator can regularize higher-order estimating equations for autoregressive models.
Ordinary Yule--Walker estimation identifies autoregressive coefficients through second-order autocovariance equations \citep{brockwell1991time,hamilton1994time}.
For non-Gaussian time series, higher-order cumulants provide additional estimating equations and are insensitive to independent additive Gaussian noise at orders $d\geq3$ \citep{brillinger1981time,mendel1991tutorial,nikias1993higher}.
The construction below replaces the covariance Yule--Walker equations by cumulant equations and estimates their entries using the tapered cumulant tensor, in the same spirit that banding regularizes covariance-based linear prediction for ordered data \citep{wu2009banding}.

Let $\{Y_t\}_{t\in\mathbb Z}$ be a centered stationary autoregressive process of order $r_0\in\mathbb N\cup\{\infty\}$ satisfying
\begin{equation}
\label{eq:ar-model}
    Y_t
    =
    \sum_{j=1}^{r_0}
    \phi_jY_{t-j}
    +
    \varepsilon_t.
\end{equation}
When $r_0=\infty$, assume that $\sum_{j\geq1}|\phi_j|<\infty$ and that the series in \eqref{eq:ar-model} converges in $L^d$.
When $r_0<\infty$, set $\phi_j=0$ for $j>r_0$.
Assume that the innovations $\{\varepsilon_t\}_{t\in\mathbb Z}$ are i.i.d., centered, satisfy $\mathbb E|\varepsilon_t|^d<\infty$, and that $\varepsilon_t$ is independent of the past $(Y_{t-1},Y_{t-2},\ldots)$.
Let $\tau_d$ denote the scalar order-$d$ cumulant of $\varepsilon_t$.

Assume further that the process admits the causal linear representation
\[
    Y_t
    =
    \sum_{\ell=0}^{\infty}
    \psi_\ell\varepsilon_{t-\ell},
    \qquad
    \psi_0=1,
\]
where, for some constants $C_\psi<\infty$ and $\rho\in(0,1)$,
\[
    |\psi_\ell|
    \leq
    C_\psi\rho^\ell,
    \qquad
    \ell\geq0.
\]
For finite-order autoregressive processes, this condition follows from the stability assumption that the autoregressive polynomial has no roots in the closed unit disk.
For an infinite-order process, it holds, for example, if the autoregressive power series extends to a zero-free disk of radius strictly larger than one \citep{brockwell1991time,chen2016stability}.
For notational convenience, set $\psi_\ell=0$ for $\ell<0$.

Fix $p\geq2$ and define the lag vector
\[
    Z_t
    =
    (Y_t,Y_{t-1},\ldots,Y_{t-p+1})^\top
    \in\R^p.
\]
Let $\bcK_d=\bcK_d(Z_t)\in(\R^p)^{\otimes d}$ be the order-$d$ cumulant tensor of $Z_t$.
For arbitrary nonnegative lags $h_0,\ldots,h_{d-1}$, define
\[
    \kappa_d(h_0,\ldots,h_{d-1})
    =
    \operatorname{cum}
    \left(
    Y_{t-h_0},
    \ldots,
    Y_{t-h_{d-1}}
    \right).
\]
Whenever $h_0,\ldots,h_{d-1}\in\{0,\ldots,p-1\}$, this cumulant is an entry of $\bcK_d$:
\[
    \kappa_d(h_0,\ldots,h_{d-1})
    =
    (\bcK_d)_{h_0+1,\ldots,h_{d-1}+1}.
\]
The causal representation implies that these cumulants are localized around the main tensor diagonal.

\begin{lemma}[Bandability of autoregressive cumulants]
\label{lem:ar-cumulant-bandability}
Under the assumptions above, for every $\alpha>0$, the cumulant tensor $\bcK_d$ belongs to $\mathcal K_{\alpha,\beta_{\mathrm{AR}}}^d$, where
\[
    \beta_{\mathrm{AR}}
    =
    \frac{
    |\tau_d|C_\psi^d
    }{
    1-\rho^d
    }
    \sup_{D\geq0}
    (1+D)^{\alpha+d-1}\rho^D
    <
    \infty.
\]
\end{lemma}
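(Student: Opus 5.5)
We will compute the entries of $\bcK_d$ explicitly from the causal linear representation and then bound them by a geometric factor in the diameter. The route is multilinearity of joint cumulants together with the independence of the innovations.

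\emph{Step 1: explicit entries.} Fix lags $h_0,\ldots,h_{d-1}\in\{0,\ldots,p-1\}$ and substitute $Y_{t-h_a}=\sum_{\ell\ge0}\psi_\ell\varepsilon_{t-h_a-\ell}$. Since the series converge in $L^d$ (implied by $|\psi_\ell|\le C_\psi\rho^\ell$ and $\E|\varepsilon_t|^d<\infty$), the joint cumulant is continuous in $L^d$ and multilinear. Expanding over the $d$ summation indices, a joint cumulant of innovations vanishes unless all $d$ of them are the same innovation $\varepsilon_s$; in that case it equals $\tau_d$. Reindexing by $s=t-m$ therefore gives
\[
\kappa_d(h_0,\ldots,h_{d-1})=\tau_d\sum_{m\in\mathbb Z}\prod_{a=0}^{d-1}\psi_{m-h_a}.
\]
Here we use the convention $\psi_\ell=0$ for $\ell<0$. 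Interchanging the infinite sums with the cumulant is the only delicate analytic point. We handle it by truncating the series at level $L$, applying finite multilinearity, and letting $L\to\infty$. Absolute convergence of the limit comes from the bound in Step~2, and convergence of the truncated cumulants follows because cumulants of order $d$ are polynomials in moments of order at most $d$, which converge under $L^d$ convergence by H\"older's inequality.

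\emph{Step 2: geometric bound.} Let $h_{\max}=\max_a h_a$. A term of the sum is nonzero only if $m\ge h_{\max}$. Using $|\psi_\ell|\le C_\psi\rho^\ell$,
\[
|\kappa_d|\le|\tau_d|C_\psi^d\sum_{m\ge h_{\max}}\rho^{\sum_a(m-h_a)}.
\]
Write $m=h_{\max}+j$ with $j\ge0$. Then $\sum_a(m-h_a)=dj+\sum_a(h_{\max}-h_a)\ge dj+D$, where $D=h_{\max}-h_{\min}=\operatorname{diam}(\bi)$, because the term indexed by the minimizing lag alone contributes $D$. Summing the geometric series in $j$ yields
\[
|(\bcK_d)_{\bi}|\le\frac{|\tau_d|C_\psi^d}{1-\rho^d}\,\rho^{D}.
\]
Here we used that the entry index $\bi=(h_0+1,\ldots,h_{d-1}+1)$ has the same diameter as the lags.

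\emph{Step 3: conversion to the bandable form.} We write $\rho^D=(1+D)^{-(\alpha+d-1)}\cdot(1+D)^{\alpha+d-1}\rho^D$ and bound the second factor by its supremum over $D\ge0$. This supremum is finite because $\rho<1$: the function $x\mapsto(1+x)^{\alpha+d-1}\rho^x$ is continuous on $[0,\infty)$ and tends to $0$. The resulting bound is exactly membership in $\mathcal K^d_{\alpha,\beta_{\mathrm{AR}}}$.

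The main obstacle is the rigorous justification in Step~1 of the cumulant expansion for infinite sums. The bound itself is routine once the formula is in hand.
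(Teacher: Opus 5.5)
Your proof is correct and follows the paper's argument essentially step by step: the linear-process cumulant formula, reindexing from the largest lag, the bound $\sum_a(h_{\max}-h_a)\ge D$, geometric summation giving the factor $1/(1-\rho^d)$, and conversion to polynomial decay via the supremum. The only addition is your truncation-and-H\"older justification for interchanging the infinite sums with the cumulant, a step the paper simply takes for granted when it invokes ``the cumulant formula for a linear process.''
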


We next derive the cumulant Yule--Walker equations.
Let $h_1,\ldots,h_{d-1}\geq1$.
Since $Y_{t-h_1},\ldots,Y_{t-h_{d-1}}$ belong to the past relative to time $t$, the joint cumulant involving $\varepsilon_t$ and these variables vanishes.
Applying \eqref{eq:ar-model} inside the first argument of the cumulant and using multilinearity gives
\begin{equation}
\label{eq:ar-cumulant-yw-equation}
    \kappa_d(0,h_1,\ldots,h_{d-1})
    =
    \sum_{j=1}^{r_0}
    \phi_j
    \kappa_d(j,h_1,\ldots,h_{d-1}).
\end{equation}

Let $r\leq p-1$ be a finite working order, which need not equal the true order $r_0$ and may increase with the sample size, and define
\[
    \phi^{(r)}
    =
    (\phi_1,\ldots,\phi_r)^\top
    \in\R^r.
\]
Let
\[
    \mathcal H
    =
    \{h^{(1)},\ldots,h^{(m)}\}
    \subset
    \{1,\ldots,p-1\}^{d-1},
    \qquad
    h^{(a)}
    =
    (h_1^{(a)},\ldots,h_{d-1}^{(a)}).
\]
Define $b\in\R^m$, $A=A_r\in\R^{m\times r}$, and $q_r\in\R^m$ by
\[
    b_a
    =
    \kappa_d(0,h^{(a)}),
    \qquad
    A_{aj}
    =
    \kappa_d(j,h^{(a)}),
\]
and
\[
    (q_r)_a
    =
    \sum_{j>r}
    \phi_j
    \kappa_d(j,h^{(a)}),
    \qquad
    a=1,\ldots,m,
    \quad
    j=1,\ldots,r.
\]
Then \eqref{eq:ar-cumulant-yw-equation} gives
\begin{equation}
\label{eq:ar-truncated-linear-system}
    b
    =
    A\phi^{(r)}
    +
    q_r.
\end{equation}

We estimate $A$ and $b$ using selected entries of the tapered sample cumulant tensor.
Let $\widehat{\bcK}_{d,T,k}$ be the tapered estimator with bandwidth $k$, and define
\[
    \widehat\kappa_{d,k}(h_0,\ldots,h_{d-1})
    =
    (\widehat{\bcK}_{d,T,k})_{h_0+1,\ldots,h_{d-1}+1}.
\]
Set
\[
    \widehat b_a
    =
    \widehat\kappa_{d,k}(0,h^{(a)}),
    \qquad
    \widehat A_{aj}
    =
    \widehat\kappa_{d,k}(j,h^{(a)}),
    \qquad
    a=1,\ldots,m,
    \quad
    j=1,\ldots,r.
\]
The tapered cumulant Yule--Walker estimator is
\[
    \widehat\phi_{d,k}^{(r)}
    =
    (\widehat A^\top\widehat A)^{-1}
    \widehat A^\top\widehat b,
\]
whenever $\widehat A$ has full column rank.

\begin{theorem}[Growing-order tapered cumulant Yule--Walker estimation]
\label{thm:cumulant-yule-walker}
Let $d\geq3$ be fixed and suppose the autoregressive assumptions above hold.
Assume that $m\ge r$ and $s_{A,r}:=\sigma_{\min}(A)>0$, and write
\[
    \delta_k
    =
    \left\|
    \widehat{\bcK}_{d,T,k}
    -
    \bcK_d
    \right\|.
\]
Whenever
\begin{equation}
\label{eq:ar-yw-spectral-small-error}
    \sqrt m\,\delta_k
    \leq
    \frac{s_{A,r}}{2},
\end{equation}
the matrix $\widehat A$ has full column rank and
\begin{equation}
\label{eq:ar-yw-spectral-coefficient-bound}
    \left\|
    \widehat\phi_{d,k}^{(r)}
    -
    \phi^{(r)}
    \right\|_2
    \leq
    \frac{2}{s_{A,r}}
    \left[
    \sqrt m
    \left(
    1+\|\phi^{(r)}\|_2
    \right)
    \delta_k
    +
    \|q_r\|_2
    \right].
\end{equation}

Moreover, fix any $\alpha>0$, and let $\beta_{\mathrm{AR}}=\beta_{\mathrm{AR}}(\alpha)$ be the constant in Lemma~\ref{lem:ar-cumulant-bandability}. Suppose that the tapered estimator is computed from $n$ i.i.d. copies of the marginal distribution of the lag vector $Z_t$, that 
    $\sup_{u\in\mathbb S^{p-1}}
    \|u^\top Z_t\|_{\Phi_\gamma}
    \leq
    K$
for some $0<\gamma\leq d$, and $2\le k\le p$.
For $x\geq0$, define
\[
    \zeta
    =
    C_{\alpha,d}
    \beta_{\mathrm{AR}}
    k^{-\alpha-d/2+1}
    +
    C_{\gamma,d,K}
    \Delta_{k,x}
    (1+\Delta_{k,x})^{d-1},
\]
with $\Delta_{k,x}$ as in Theorem~\ref{thm:tapered-cumulant-upper-bound}. Then, with probability at least $1-Ce^{-x}$,
\[
    \left\|
    \widehat\phi_{d,k}^{(r)}
    -
    \phi^{(r)}
    \right\|_2
    \leq
    \frac{2}{s_{A,r}}
    \left[
    \sqrt m
    \left(
    1+\|\phi^{(r)}\|_2
    \right)
    \zeta
    +
    \|q_r\|_2
    \right],
\]
provided
\[
    \sqrt m\,
    \zeta
    \leq
    \frac{s_{A,r}}{2}.
\]
\end{theorem}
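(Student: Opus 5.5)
The proof is a deterministic least-squares perturbation argument followed by a plug-in of Theorem~\ref{thm:tapered-cumulant-upper-bound}.

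\textbf{Step 1: entrywise errors are bounded by the spectral error.} Write $E=\widehat A-A$ and $e=\widehat b-b$. Each entry of $E$ and $e$ is a single entry of $\widehat{\bcK}_{d,T,k}-\bcK_d$. Any tensor entry is bounded by the spectral norm, since it equals $\langle\cdot,e_{i_1}\otimes\cdots\otimes e_{i_d}\rangle$ with unit vectors. Hence every entry of $E$ and $e$ is at most $\delta_k$ in absolute value.

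\textbf{Step 2: the columns and rows are fibers.} Column $j$ of $E$ has entries indexed by $(j+1,h^{(a)}+1)$ for $a=1,\ldots,m$. These are $m$ distinct entries, because the $h^{(a)}$ are distinct. Bounding the Euclidean norm of $E$ crudely by entries would give $\|E\|\le\sqrt{mr}\,\delta_k$, which is too weak. I would instead write $v^\top E u=\sum_{a,j}v_au_j\,\Delta_{j+1,h^{(a)}+1}$, where $\Delta=\widehat{\bcK}_{d,T,k}-\bcK_d$. This is $\langle \Delta,\ u'\otimes \mathcal V\rangle$, where $u'$ is $u$ shifted into the first mode and $\mathcal V$ is the order-$(d-1)$ tensor placing $v_a$ at position $h^{(a)}+1$, so that $\|\mathcal V\|_F=\|v\|_2=1$.

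A spectral-norm bound against a Frobenius-unit $\mathcal V$ is not directly available. The simplest route is $|\langle \Delta,u'\otimes\mathcal V\rangle|\le\sum_a|v_a|\,|\langle\Delta,u'\otimes e_{h^{(a)}+1}\rangle|\le\|v\|_1\delta_k\le\sqrt m\,\delta_k$. Here $e_{h^{(a)}+1}$ is a rank-one unit tensor of order $d-1$, so each term is at most $\delta_k$. This yields $\|E\|\le\sqrt m\,\delta_k$, and likewise $\|e\|_2\le\sqrt m\,\delta_k$, which is exactly the $\sqrt m$ scaling in the statement. This is the step I regard as the crux, and it works.

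\textbf{Step 3: perturbation bound.} Weyl's inequality gives $\sigma_{\min}(\widehat A)\ge s_{A,r}-\sqrt m\,\delta_k\ge s_{A,r}/2>0$ under \eqref{eq:ar-yw-spectral-small-error}, so $\widehat A$ has full column rank. From \eqref{eq:ar-truncated-linear-system},
\[
\widehat b-\widehat A\phi^{(r)}=e-E\phi^{(r)}+q_r .
\]
Applying the left inverse $\widehat A^{+}=(\widehat A^\top\widehat A)^{-1}\widehat A^\top$, whose operator norm is $1/\sigma_{\min}(\widehat A)\le 2/s_{A,r}$, gives
\[
\widehat\phi^{(r)}_{d,k}-\phi^{(r)}=\widehat A^{+}\bigl(e-E\phi^{(r)}+q_r\bigr).
\]
Therefore the error is at most $\frac{2}{s_{A,r}}\bigl[\sqrt m\,\delta_k(1+\|\phi^{(r)}\|_2)+\|q_r\|_2\bigr]$, which is \eqref{eq:ar-yw-spectral-coefficient-bound}.

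\textbf{Step 4: probabilistic version.} Lemma~\ref{lem:ar-cumulant-bandability} puts $\bcK_d$ in $\mathcal K^d_{\alpha,\beta_{\mathrm{AR}}}$. Centering and the $\Phi_\gamma$ condition are assumed, so Theorem~\ref{thm:tapered-cumulant-upper-bound} applies to the i.i.d. copies of $Z_t$. It gives $\delta_k\le\zeta$ with probability at least $1-Ce^{-x}$. On this event the condition $\sqrt m\,\zeta\le s_{A,r}/2$ implies \eqref{eq:ar-yw-spectral-small-error}. The deterministic bound from Step 3 is monotone in $\delta_k$, so replacing $\delta_k$ by $\zeta$ finishes the proof.
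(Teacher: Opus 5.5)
Your proposal is correct and is essentially the paper's proof. The paper bounds each row of $\widehat A-A$ in Euclidean norm by $\delta_k$, by testing $\widehat{\bcK}_{d,T,k}-\bcK_d$ against $\widetilde v\otimes e_{h_1^{(a)}+1}\otimes\cdots\otimes e_{h_{d-1}^{(a)}+1}$, and then uses $\|\widehat A-A\|\le\|\widehat A-A\|_F\le\sqrt m\,\delta_k$; this row (fiber) bound is the same one underlying your $\|v\|_1$ argument, and the Weyl step, the pseudoinverse identity, and the plug-in of Lemma~\ref{lem:ar-cumulant-bandability} and Theorem~\ref{thm:tapered-cumulant-upper-bound} match the paper's.
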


\begin{remark}[True and working orders]
The true order $r_0$ and the working order $r$ play different roles.
If $r_0<\infty$ and $r\geq r_0$, then $q_r=0$ and the theorem directly controls estimation of the full autoregressive coefficient vector.
If $r<r_0$, including the case $r_0=\infty$, then $q_r$ is the truncation remainder and must vanish sufficiently quickly relative to $s_{A,r}$ for consistent growing-order estimation.
\end{remark}

\begin{remark}[Spectral and entrywise control]
The deterministic result only requires a spectral-norm bound for the cumulant estimator.
For independent copies of $Z_t$, this bound follows from Theorem~\ref{thm:tapered-cumulant-upper-bound}, while a single stationary trajectory requires a dependent-data analogue.
Section~\ref{sec:supp-yw-spectral-entrywise} compares this approach with entrywise control of the raw sample cumulant.
Tensor spectral-norm control avoids the worst-case $\sqrt r$ conversion from entrywise error to operator-norm error in establishing rank stability.
\end{remark}

\subsection{Cumulant Minimum-Distance Estimation for Moving-Average Models}
\label{subsec:cumulant-ma}

We next consider moving-average models, where the bandable cumulant structure
is especially transparent. Moving-average processes are classical models for
short-memory dependence and are commonly estimated by likelihood,
quasi-likelihood, or second-order moment methods
\citep{brockwell1991time,hamilton1994time}. In non-Gaussian models, however,
higher-order cumulants provide additional identifying information. In
particular, second-order methods can suffer from the usual invertibility
ambiguity, whereas higher-order cumulants can distinguish moving-average
coefficient vectors that generate the same autocovariance function when the
innovations are non-Gaussian
\citep{giannakis1990identifiability,swami1990linear,gospodinov2015minimum}.
The tapered version below regularizes these empirical cumulants before they
enter the minimum-distance criterion.

Let $\{Y_t\}_{t\in\mathbb Z}$ be a centered stationary moving-average process
satisfying
\begin{equation}
\label{eq:ma-model}
    Y_t
    =
    \sum_{\ell=0}^q
    \theta_\ell \varepsilon_{t-\ell},
    \qquad
    \theta_0=1,
\end{equation}
where $\{\varepsilon_t\}_{t\in\mathbb Z}$ are i.i.d. centered innovations satisfying $\mathbb E|\varepsilon_t|^d<\infty$. Let $\tau_d$ denote the scalar order-$d$
cumulant of $\varepsilon_t$, and assume that $\tau_d\ne0$. For notational
convenience, set $\theta_j=0$ for $j\notin\{0,\ldots,q\}$.

Fix $p\ge q+1$ and define the lag vector
\[
    Z_t
    =
    (Y_t,Y_{t-1},\ldots,Y_{t-p+1})^\top
    \in\R^p.
\]
Let $\bcK_d=\bcK_d(Z_t)\in(\R^p)^{\otimes d}$ be the order-$d$
cumulant tensor of $Z_t$. By stationarity, this tensor does not depend on
$t$. For lags $h_0,\ldots,h_{d-1}\in\{0,\ldots,p-1\}$, define
\[
    \kappa_d(h_0,\ldots,h_{d-1})
    =
    (\bcK_d)_{h_0+1,\ldots,h_{d-1}+1}.
\]
Thus $\kappa_d(h_0,\ldots,h_{d-1})$ is the joint order-$d$ cumulant of
$Y_{t-h_0},\ldots,Y_{t-h_{d-1}}$.

The model-implied cumulants have a simple convolution form. Expanding each
lagged variable $Y_{t-h_\ell}$ in \eqref{eq:ma-model} and using multilinearity
of cumulants gives a sum over products of innovation coefficients and joint
cumulants of innovations. Since the innovations are independent, all mixed
innovation cumulants vanish unless the innovation time indices coincide. If the
common innovation is $\varepsilon_{t-a}$, then the coefficient in the
$\ell$-th factor is $\theta_{a-h_\ell}$. Therefore,
\begin{equation}
\label{eq:ma-cumulant-identity}
    \kappa_d(h_0,\ldots,h_{d-1})
    =
    \tau_d
    \sum_{a\in\mathbb Z}
    \prod_{\ell=0}^{d-1}
    \theta_{a-h_\ell}.
\end{equation}
Because $\theta_j=0$ outside $\{0,\ldots,q\}$, this identity also shows that
the cumulant tensor is exactly supported on a diagonal tube of width $q$.

\begin{lemma}[Exact banding of moving-average cumulants]
\label{lem:ma-cumulant-bandability}
Under the assumptions above, the cumulant tensor $\bcK_d$ is exactly banded
with bandwidth $q$. That is, if
$\operatorname{diam}(h_0,\ldots,h_{d-1})>q$, then
\[
    \kappa_d(h_0,\ldots,h_{d-1})=0.
\]
Moreover, for every $\alpha>0$,
$\bcK_d\in\mathcal K_{\alpha,\beta_{\mathrm{MA}}}^d$, where
\[
    \beta_{\mathrm{MA}}
    =
    |\tau_d|
    (q+1)
    (1+q)^{\alpha+d-1}
    (1\vee\|\theta\|_\infty)^d .
\]
\end{lemma}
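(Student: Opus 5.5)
The plan is to work directly from the convolution identity \eqref{eq:ma-cumulant-identity}, which expresses every entry of $\bcK_d$ as $\tau_d\sum_{a\in\mathbb Z}\prod_{\ell=0}^{d-1}\theta_{a-h_\ell}$. Both claims follow from inspecting which $a$ contribute to this sum and how large each product can be.

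For exact banding, note that a summand is nonzero only if $0\le a-h_\ell\le q$ for every $\ell$. Equivalently, we need $a\in[\max_\ell h_\ell,\ \min_\ell h_\ell+q]$. This interval is empty precisely when $\max_\ell h_\ell-\min_\ell h_\ell>q$, that is, when $\operatorname{diam}(h_0,\ldots,h_{d-1})>q$. In that case every summand vanishes, so $\kappa_d(h_0,\ldots,h_{d-1})=0$. Since $\kappa_d(h_0,\ldots,h_{d-1})=(\bcK_d)_{h_0+1,\ldots,h_{d-1}+1}$ and index shifts do not change the diameter, this is the stated support property of $\bcK_d$.

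For the bandability constant, fix a multi-index $\bi$ with $\operatorname{diam}(\bi)\le q$; the previous step handles all other entries, whose value $0$ trivially satisfies the bound. The admissible $a$ range over an interval of length $q-\operatorname{diam}\le q$, so there are at most $q+1$ of them. Each product is bounded by $\|\theta\|_\infty^d\le(1\vee\|\theta\|_\infty)^d$. Hence
\[
|(\bcK_d)_{\bi}|\le|\tau_d|(q+1)(1\vee\|\theta\|_\infty)^d.
\]
It then suffices to observe that $(1+\operatorname{diam}(\bi))^{\alpha+d-1}\le(1+q)^{\alpha+d-1}$. Multiplying and dividing by this factor gives
\[
|(\bcK_d)_{\bi}|\le\beta_{\mathrm{MA}}\,(1+\operatorname{diam}(\bi))^{-(\alpha+d-1)},
\]
so $\bcK_d\in\mathcal K_{\alpha,\beta_{\mathrm{MA}}}^d$ for every $\alpha>0$.

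There is no real obstacle here. The only point needing care is the derivation of \eqref{eq:ma-cumulant-identity} itself, which the paper already sketches. It rests on two facts: multilinearity of joint cumulants, and the vanishing of mixed cumulants of independent innovations unless all time indices coincide. Both are standard, so I will take the identity as given.
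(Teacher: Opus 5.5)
Your proposal is correct and follows essentially the same route as the paper. In both arguments, \eqref{eq:ma-cumulant-identity} shows that no integer $a$ satisfies $a-h_\ell\in\{0,\ldots,q\}$ for all $\ell$ once the diameter exceeds $q$; otherwise at most $q+1$ values of $a$ contribute, and the bound $(1+\operatorname{diam})^{\alpha+d-1}\le(1+q)^{\alpha+d-1}$ converts this into the bandability constant.

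One small point: $\theta_0=1$ appears in the products, so bound each factor directly by $1\vee\|\theta\|_\infty$ rather than passing through $\|\theta\|_\infty^d$. The intermediate bound $\|\theta\|_\infty^d$ can fail if $\|\theta\|_\infty$ is taken over $(\theta_1,\ldots,\theta_q)$ only, but your final constant is unaffected.
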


We now form a minimum-distance estimator by matching selected empirical
cumulants to their model-implied values. Let
\[
    \mathcal H
    =
    \{h^{(1)},\ldots,h^{(m)}\}
    \subset
    \{0,\ldots,q\}^{d-1},
    \qquad
    h^{(a)}
    =
    (h_1^{(a)},\ldots,h_{d-1}^{(a)}).
\]
For each $h^{(a)}\in\mathcal H$, write
$\kappa_d(0,h^{(a)})=\kappa_d(0,h_1^{(a)},\ldots,h_{d-1}^{(a)})$ and define
$b\in\R^m$ by
\[
    b_a
    =
    \kappa_d(0,h^{(a)}),
    \qquad
    a=1,\ldots,m.
\]

For $\theta=(\theta_1,\ldots,\theta_q)^\top$, with $\theta_0=1$ and
$\theta_j=0$ for $j\notin\{0,\ldots,q\}$, define
\[
    \Psi_a(\theta)
    =
    \sum_{s\in\mathbb Z}
    \theta_s
    \theta_{s-h_1^{(a)}}
    \cdots
    \theta_{s-h_{d-1}^{(a)}},
    \qquad
    a=1,\ldots,m.
\]
Then \eqref{eq:ma-cumulant-identity} gives the population equations
\[
    b_a
    =
    \tau_d\Psi_a(\theta),
    \qquad
    a=1,\ldots,m.
\]
Equivalently, with
\[
    \eta
    =
    (\theta_1,\ldots,\theta_q,\tau_d)^\top
    \in\R^{q+1},
\]
define $M:\R^{q+1}\to\R^m$ by
\[
    (M(\eta))_a
    =
    \tau_d\Psi_a(\theta),
    \qquad
    a=1,\ldots,m.
\]
The population moment equation is
\[
    b=M(\eta).
\]

We estimate $b$ using selected entries of the tapered sample cumulant tensor.
Let $\widehat{\bcK}_{d,T,k}$ be the tapered estimator with bandwidth $k$, and
set
\[
    \widehat\kappa_{d,k}(h_0,\ldots,h_{d-1})
    =
    (\widehat{\bcK}_{d,T,k})_{h_0+1,\ldots,h_{d-1}+1}.
\]
For $h^{(a)}\in\mathcal H$, define
\[
    \widehat b_a
    =
    \widehat\kappa_{d,k}(0,h^{(a)}),
    \qquad
    a=1,\ldots,m.
\]
A tapered cumulant minimum-distance estimator minimizes
$\|\widehat b-M(\eta)\|_2^2$ over a chosen parameter space. We first analyze
a local estimator over a neighborhood of the true parameter; the corresponding
result for a global minimum-distance estimator is discussed in
Remark~\ref{rem:ma-local-global}.

The following theorem separates the deterministic stability of the
minimum-distance criterion from the probabilistic accuracy of the tapered
cumulant estimator. The first part bounds the parameter error directly by the
realized tensor estimation error, while the second part inserts the
high-probability bound from
Theorem~\ref{thm:tapered-cumulant-upper-bound}.

\begin{theorem}[Tapered cumulant minimum-distance estimation for moving-average models]
\label{thm:cumulant-ma}
Let $d\ge3$ be fixed. Let
$\eta_\star=(\theta_\star,\tau_{d,\star})^\top\in\R^{q+1}$ be the true
parameter, and suppose that $b=M(\eta_\star)$. Let $m\ge q+1$ and 
$J_\star=\nabla M(\eta_\star)\in\R^{m\times(q+1)}$ satisfy
\[
    \sigma_{\min}(J_\star)\ge s>0.
\]
Assume that there exist constants $\rho_0>0$ and $L>0$ such that $M$ is
continuously differentiable on
\[
    \mathcal N_{\rho_0}
    =
    \{\eta:\|\eta-\eta_\star\|_2\le\rho_0\}
\]
and
\[
    \|\nabla M(\eta)-J_\star\|
    \le
    L\|\eta-\eta_\star\|_2,
    \qquad
    \eta\in\mathcal N_{\rho_0}.
\]
Suppose that
\[
    \rho_0
    \le
    \frac{s}{2L},
\]
and let
\[
    \widehat\eta
    \in
    \argmin_{\eta\in\mathcal N_{\rho_0}}
    \|\widehat b-M(\eta)\|_2^2.
\]
Then the set of minimizers is nonempty, and every such minimizer satisfies
\[
    \|\widehat\eta-\eta_\star\|_2
    \le
    \frac{4\sqrt m}{s}
    \|\widehat{\bcK}_{d,T,k}-\bcK_d\|.
\]

Moreover, fix any $\alpha>0$, and let
$\beta_{\mathrm{MA}}=\beta_{\mathrm{MA}}(\alpha)$ be the constant in
Lemma~\ref{lem:ma-cumulant-bandability}. In the independent-replicate setting
where the observations are i.i.d. copies of the marginal distribution of the
lag vector $Z_t$ and $\sup_{u\in\mathbb S^{p-1}}  \|u^\top Z_t\|_{\Phi_\gamma}    \leq  K$
for some $0<\gamma\leq d$, for every integer
$2\le k\le p$ and every $x\ge0$, with probability at least $1-Ce^{-x}$,
\[
    \|\widehat\eta-\eta_\star\|_2
    \le
    \frac{4\sqrt m}{s}
    \left[
    C_{\alpha,d}
    \beta_{\mathrm{MA}}
    k^{-\alpha-d/2+1}
    +
    C_{\gamma,d,K}
    \Delta_{k,x}(1+\Delta_{k,x})^{d-1}
    \right].
\]
\end{theorem}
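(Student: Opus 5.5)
The argument has a deterministic part and a probabilistic part. The deterministic part has three steps: (i) bound $\|\widehat b-b\|_2$ by the tensor error, (ii) establish a local lower bound $\|M(\eta)-M(\eta_\star)\|_2\ge \tfrac{s}{2}\|\eta-\eta_\star\|_2$ on $\mathcal N_{\rho_0}$, and (iii) combine these through the optimality of $\widehat\eta$. The probabilistic part then substitutes Theorem~\ref{thm:tapered-cumulant-upper-bound} together with Lemma~\ref{lem:ma-cumulant-bandability}.

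\textbf{Step (i): existence and the bound on $\widehat b-b$.} Since $M$ is continuous on the compact ball $\mathcal N_{\rho_0}$, the criterion attains its minimum there, so the set of minimizers is nonempty. For each $a$, write $\widehat b_a-b_a=\langle \widehat{\bcK}_{d,T,k}-\bcK_d,\,e_1\otimes e_{h^{(a)}_1+1}\otimes\cdots\otimes e_{h^{(a)}_{d-1}+1}\rangle$. Because the tensor spectral norm dominates any inner product with a product of unit vectors, $|\widehat b_a-b_a|\le \delta:=\|\widehat{\bcK}_{d,T,k}-\bcK_d\|$. It follows that $\|\widehat b-b\|_2\le\sqrt m\,\delta$.

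\textbf{Step (ii): local injectivity of $M$.} For $\eta\in\mathcal N_{\rho_0}$, the mean value theorem in integral form, valid along the segment because the ball is convex, gives
\[
M(\eta)-M(\eta_\star)=J_\star(\eta-\eta_\star)+\int_0^1\bigl(\nabla M(\eta_\star+t(\eta-\eta_\star))-J_\star\bigr)(\eta-\eta_\star)\,dt .
\]
The Lipschitz assumption bounds the integral by $\int_0^1 Lt\|\eta-\eta_\star\|_2^2\,dt=\tfrac L2\|\eta-\eta_\star\|_2^2$. Since $\|\eta-\eta_\star\|_2\le\rho_0\le s/(2L)$, this is at most $\tfrac s4\|\eta-\eta_\star\|_2$. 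Combining this with $\|J_\star v\|_2\ge s\|v\|_2$ yields $\|M(\eta)-M(\eta_\star)\|_2\ge \tfrac{3s}{4}\|\eta-\eta_\star\|_2\ge\tfrac s2\|\eta-\eta_\star\|_2$. This is the only place where any care is needed: one must keep the Lipschitz control centered at $\eta_\star$ and invoke the radius condition $\rho_0\le s/(2L)$.

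\textbf{Step (iii): combining.} Since $\eta_\star$ is feasible, optimality of $\widehat\eta$ gives $\|\widehat b-M(\widehat\eta)\|_2\le\|\widehat b-M(\eta_\star)\|_2=\|\widehat b-b\|_2$. The triangle inequality then gives
\[
\|M(\widehat\eta)-M(\eta_\star)\|_2\le 2\|\widehat b-b\|_2\le 2\sqrt m\,\delta .
\]
Together with step (ii), this yields $\|\widehat\eta-\eta_\star\|_2\le \tfrac{2}{s}\cdot 2\sqrt m\,\delta=\tfrac{4\sqrt m}{s}\delta$, which is the deterministic claim.

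\textbf{Probabilistic part.} By Lemma~\ref{lem:ma-cumulant-bandability}, $\bcK_d\in\mathcal K^d_{\alpha,\beta_{\mathrm{MA}}}$ for every $\alpha>0$. In the i.i.d.\ replicate setting the observations are centered and satisfy the $\Phi_\gamma$ condition, so all hypotheses of Theorem~\ref{thm:tapered-cumulant-upper-bound} hold. That theorem gives, with probability at least $1-Ce^{-x}$,
\[
\delta\le C_{\alpha,d}\beta_{\mathrm{MA}}k^{-\alpha-d/2+1}+C_{\gamma,d,K}\Delta_{k,x}(1+\Delta_{k,x})^{d-1}.
\]
Substituting this into the deterministic bound gives the stated high-probability bound. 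No condition on the size of $\delta$ is needed, because $\widehat\eta$ is constrained to $\mathcal N_{\rho_0}$ by construction.
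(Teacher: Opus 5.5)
Your proposal is correct and follows essentially the same route as the paper's proof. Both arguments use entrywise domination $|\widehat b_a-b_a|\le\|\widehat{\bcK}_{d,T,k}-\bcK_d\|$, then the optimality-plus-triangle-inequality bound $\|M(\widehat\eta)-M(\eta_\star)\|_2\le 2\sqrt m\,\delta$, then the integral mean-value lower bound $\tfrac{3s}{4}\|\eta-\eta_\star\|_2$ obtained from $\rho_0\le s/(2L)$, and finally substitution of Theorem~\ref{thm:tapered-cumulant-upper-bound} via Lemma~\ref{lem:ma-cumulant-bandability}. The only cosmetic difference is that you relax $3s/4$ to $s/2$ before combining, whereas the paper obtains $8\sqrt m/(3s)$ and then bounds it by $4\sqrt m/s$.
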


\begin{remark}[Smoothness of the moment map]
For fixed $q$, $m$, and $d$, each component of $M(\eta)$ is a polynomial in $(\theta,\tau_d)$. Hence $M$ is infinitely differentiable and its Jacobian is locally Lipschitz. In particular, on any compact neighborhood $\mathcal N_{\rho_0}$ of $\eta_\star$, there exists a constant $L<\infty$ such that
\[
    \|\nabla M(\eta)-\nabla M(\eta_\star)\|
    \le
    L\|\eta-\eta_\star\|_2,
    \qquad
    \eta\in\mathcal N_{\rho_0}.
\]
Thus, the smoothness condition in Theorem~\ref{thm:cumulant-ma} holds automatically. If $q$ increases with the sample size, however, the constant $L$ may depend on $q$.
\end{remark}

\begin{remark}[Exact banding versus polynomial bandability]
\label{rem:ma-exact-banding}
The moving-average model highlights the distinction between polynomial
bandability and exact banding. Since $\bcK_d$ is supported on the set
$\{\bi:\operatorname{diam}(\bi)\le q\}$, there is no approximation bias if the
bandwidth is chosen to retain this support. In particular, if the order $q$ is
known, one may use the hard-banded estimator
\[
    \widehat{\bcK}_{d,B,q}
    =
    \mathcal B_q(\widehat{\bcK}_d).
\]
Because $\mathcal B_q(\bcK_d)=\bcK_d$, the error is purely stochastic:
\[
    \widehat{\bcK}_{d,B,q}-\bcK_d
    =
    \mathcal B_q(\widehat{\bcK}_d-\bcK_d).
\]
By the same local-support argument used in the proof of
Theorem~\ref{thm:tapered-cumulant-upper-bound}, the stochastic error is
governed by the effective block length $q+1$. Thus, under the same assumptions
as in Theorem~\ref{thm:tapered-cumulant-upper-bound}, with probability at least
$1-Ce^{-x}$,
\[
    \|\widehat{\bcK}_{d,B,q}-\bcK_d\|
    \lesssim
    \Delta_{q+1,x}(1+\Delta_{q+1,x})^{d-1}.
\]
For $q\ge1$, this is of the same order as
$\Delta_{q,x}(1+\Delta_{q,x})^{d-1}$ up to constants.

The tapered estimator has the same zero-bias property whenever the taper is
flat on the support of $\bcK_d$. For example, if $k=2q$, then
$w_{2q}(m)=1$ for all $m\le q$, so
\[
    \mathcal W_{2q}*\bcK_d=\bcK_d.
\]
Consequently, the tapered estimator also has no approximation bias and its
stochastic error is of order
\[
    \Delta_{2q,x}(1+\Delta_{2q,x})^{d-1},
\]
matching the hard-banded estimator up to constants.
\end{remark}

\begin{remark}[Global minimum-distance estimators]
\label{rem:ma-local-global}
Theorem~\ref{thm:cumulant-ma} is stated for a local minimum-distance estimator
over $\mathcal N_{\rho_0}$. This restriction is primarily a theoretical device,
since $\mathcal N_{\rho_0}$ depends on the unknown parameter $\eta_\star$.
In practice, one may instead compute a global estimator
\[
    \widetilde\eta
    \in
    \argmin_{\eta\in\Theta}
    \|\widehat b-M(\eta)\|_2^2.
\]
Assume that the conditions of Theorem~\ref{thm:cumulant-ma} hold, that
$\eta_\star\in\Theta$, and that a global minimizer exists. Suppose further
that the model is globally separated outside the local neighborhood: for some
$c_0>0$,
\[
    \inf_{\eta\in\Theta:\|\eta-\eta_\star\|_2>\rho_0}
    \|M(\eta)-M(\eta_\star)\|_2
    \ge
    c_0.
\]
Whenever
\[
    2\sqrt m\,
    \|\widehat{\bcK}_{d,T,k}-\bcK_d\|
    <
    c_0,
\]
every global minimizer $\widetilde\eta$ belongs to
$\mathcal N_{\rho_0}$ and satisfies
\[
    \|\widetilde\eta-\eta_\star\|_2
    \le
    \frac{4\sqrt m}{s}
    \|\widehat{\bcK}_{d,T,k}-\bcK_d\|.
\]

In the independent-replicate setting of
Theorem~\ref{thm:cumulant-ma}, the same conclusion holds with probability at
least $1-Ce^{-x}$ provided
\[
    2\sqrt m
    \left[
        C_{\alpha,d}\beta_{\mathrm{MA}}
        k^{-\alpha-d/2+1}
        +
        C_{\gamma,d,K}
        \Delta_{k,x}(1+\Delta_{k,x})^{d-1}
    \right]
    <
    c_0.
\]
\end{remark}

\subsection{Non-Gaussian Source Localization in Spatial Sensor Arrays}
\label{subsec:spatial-source-localization}

We next give an example where the coordinate ordering is spatial rather than
temporal. Source localization is a classical problem in array signal
processing, where measurements are collected from sensors with known spatial
locations and the goal is to infer the location, direction, or spatial footprint
of latent sources. Classical approaches include covariance-based subspace
methods such as MUSIC and ESPRIT \citep{schmidt1986multiple,roy1989esprit,krim1996sensor}.
These second-order methods can be sensitive to Gaussian background noise,
whereas cumulants of order $d\ge3$ vanish for Gaussian variables. Thus
higher-order cumulants can isolate non-Gaussian source structure that is
invisible or less stable in covariance-based analysis. Higher-order statistics
have therefore been used in array processing and blind source separation,
especially for non-Gaussian sources and in the presence of additive Gaussian
noise \citep{cardoso1993blind,cardoso1999high,hyvarinen2001independent}. The
procedure described below can be viewed as a tapered-cumulant plug-in version
of spatial matched filtering.

Let $X\in\R^p$ denote one centered snapshot from an ordered sensor array and
suppose that
\[
    X
    =
    \sum_{\ell=1}^L a_\ell S_\ell
    +
    \xi,
\]
where $S_1,\ldots,S_L$ are independent centered non-Gaussian source variables,
$a_\ell=(a_{\ell 1},\ldots,a_{\ell p})^\top\in\R^p$ is the spatial loading
vector of source $\ell$, and $\xi$ is centered Gaussian noise independent of the
sources. Let $\tau_{d,\ell}$ be the scalar order-$d$ cumulant of $S_\ell$.
By multilinearity of cumulants, independence of the sources, and the fact that
Gaussian noise has zero cumulants of order $d\ge3$, the order-$d$ cumulant
tensor of $X$ satisfies
\begin{equation}
\label{eq:spatial-cumulant-representation}
    \bcK_d
    =
    \sum_{\ell=1}^L
    \tau_{d,\ell}
    a_\ell^{\otimes d}.
\end{equation}
Thus the higher-order cumulant tensor removes the additive Gaussian measurement
noise and retains the non-Gaussian spatial source structure.

We assume that each source has a local spatial footprint. Specifically, suppose
that source $\ell$ has center $c_\ell\in[p]$ and that there exist constants
$A<\infty$ and $\rho\in(0,1)$ such that
\begin{equation}
\label{eq:spatial-loading-decay}
    |a_{\ell i}|
    \le
    A\rho^{|i-c_\ell|},
    \qquad
    i\in[p],
    \quad
    \ell=1,\ldots,L.
\end{equation}
This assumption is natural when a physical source affects nearby sensors more
strongly than distant sensors. It also implies that the corresponding cumulant
tensor is bandable in the spatial ordering.

\begin{lemma}[Bandability of local source cumulants]
\label{lem:source-cumulant-bandability}
Under the local source model above and the decay condition
\eqref{eq:spatial-loading-decay}, for every $\alpha>0$, the cumulant tensor
$\bcK_d$ belongs to $\mathcal K_{\alpha,\beta_{\mathrm{src}}}^d$, where
\[
    \beta_{\mathrm{src}}
    =
    L
    \max_{1\le \ell\le L}|\tau_{d,\ell}|
    A^d
    \sup_{D\ge0}
    (1+D)^{\alpha+d-1}\rho^D
    <
    \infty.
\]
\end{lemma}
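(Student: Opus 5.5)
The plan is to bound each entry of the representation \eqref{eq:spatial-cumulant-representation} directly and then convert the resulting geometric decay in the diameter into the polynomial decay that defines $\mathcal K_{\alpha,\beta}^d$. Fix a multi-index $\bi=(i_1,\ldots,i_d)\in[p]^d$ and write $D=\operatorname{diam}(\bi)$. By \eqref{eq:spatial-cumulant-representation} and the triangle inequality,
\[
    |(\bcK_d)_{\bi}|
    \le
    \sum_{\ell=1}^L|\tau_{d,\ell}|\prod_{j=1}^d|a_{\ell i_j}|
    \le
    \max_{\ell}|\tau_{d,\ell}|\,A^d\sum_{\ell=1}^L\rho^{\sum_{j=1}^d|i_j-c_\ell|},
\]
where the second step uses the decay condition \eqref{eq:spatial-loading-decay}.

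The key combinatorial step is to show that $\sum_{j=1}^d|i_j-c_\ell|\ge D$ for every center $c_\ell$. To see this, choose $a,b$ with $|i_a-i_b|=D$. Then
\[
    D=|i_a-i_b|\le|i_a-c_\ell|+|c_\ell-i_b|\le\sum_{j}|i_j-c_\ell|,
\]
and the last inequality holds because $d\ge2$, so both terms appear in the sum. Since $\rho\in(0,1)$, each summand satisfies $\rho^{\sum_j|i_j-c_\ell|}\le\rho^D$. Summing over the $L$ sources then gives
\[
    |(\bcK_d)_{\bi}|\le L\max_\ell|\tau_{d,\ell}|A^d\rho^D .
\]

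It remains to write
\[
    \rho^D=(1+D)^{-(\alpha+d-1)}\,(1+D)^{\alpha+d-1}\rho^D
    \le(1+D)^{-(\alpha+d-1)}\sup_{D'\ge0}(1+D')^{\alpha+d-1}\rho^{D'},
\]
which yields exactly the bound $\beta_{\mathrm{src}}(1+D)^{-(\alpha+d-1)}$. Finiteness of the supremum follows because the function $t\mapsto(1+t)^{\alpha+d-1}\rho^t$ is continuous on $[0,\infty)$ and tends to $0$ as $t\to\infty$, since exponential decay dominates polynomial growth. Its maximum can even be computed explicitly, at $t=\max\{0,(\alpha+d-1)/\log(1/\rho)-1\}$.

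The only point needing care is the triangle-inequality step relating the diameter to the sum of distances from the center. It is elementary but is what makes the bound uniform over the unknown centers. The argument is otherwise identical in structure to the proof of Lemma~\ref{lem:ar-cumulant-bandability}, with the factor $L$ replacing the geometric-series factor $1/(1-\rho^d)$ since there are finitely many sources rather than a sum over lags.
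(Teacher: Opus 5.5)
Your proof is correct and takes the same route as the paper. You bound each entry of $\sum_\ell \tau_{d,\ell}a_\ell^{\otimes d}$ using the loading decay, show $\operatorname{diam}(\bi)\le\sum_j|i_j-c_\ell|$ by the triangle inequality, sum over the $L$ sources, and absorb $\rho^D$ into $(1+D)^{-(\alpha+d-1)}$ through the supremum defining $\beta_{\mathrm{src}}$; your explicit check that this supremum is finite, which the paper leaves implicit, is a small bonus.
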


We now describe a cumulant matched-filter localization procedure. For each
candidate location $c\in[p]$, let $g_c\in\R^p$ be a known local template
centered at $c$, normalized so that $\|g_c\|_2=1$. For example, one may take
$(g_c)_i$ proportional to $\rho_0^{|i-c|}$ for a fixed decay parameter
$\rho_0\in(0,1)$. Define the population cumulant score
\[
    Q(c)
    =
    \left\langle
        \bcK_d,
        g_c^{\otimes d}
    \right\rangle,
    \qquad
    c\in[p].
\]
Under the source model \eqref{eq:spatial-cumulant-representation},
\[
    Q(c)
    =
    \sum_{\ell=1}^L
    \tau_{d,\ell}
    \langle a_\ell,g_c\rangle^d.
\]
Thus $Q(c)$ is large when the template $g_c$ aligns with the spatial footprint
of a non-Gaussian source whose order-$d$ cumulant has the corresponding sign.
If the sign of the source cumulant is unknown, the same construction can be
applied to the absolute score $|Q(c)|$ or to separate positive and negative
scores.

Given the tapered cumulant estimator $\widehat{\bcK}_{d,T,k}$, define the
plug-in score
\[
    \widehat Q_k(c)
    =
    \left\langle
        \widehat{\bcK}_{d,T,k},
        g_c^{\otimes d}
    \right\rangle,
    \qquad
    c\in[p],
\]
and estimate the source location by
\begin{equation}
\label{eq:spatial-location-estimator}
    \widehat c
    \in
    \operatorname{argmax}_{1\le c\le p}
    \widehat Q_k(c).
\end{equation}
For multiple sources, one may instead take separated local maximizers of
$\widehat Q_k(c)$.

The following theorem separates a deterministic argmax perturbation result
from its probabilistic consequence. The deterministic result shows that every
maximizer of the plug-in score lies near the population source location
whenever the realized tensor estimation error is less than half of the
population score gap. The second part inserts the high-probability tapered
cumulant bound.

\begin{theorem}[Tapered cumulant matched-filter source localization]
\label{thm:source-localization}
Let $\{g_c:c\in[p]\}$ be a collection of templates satisfying
$\|g_c\|_2=1$. Suppose that $c_0$ is a population source location satisfying,
for some radius $r\ge0$,
\[
    \operatorname{Gap}(r) :=
    Q(c_0)
    -
    \max_{|c-c_0|>r} Q(c)
    >
    0.
\]
Whenever
\[
    \|\widehat{\bcK}_{d,T,k}-\bcK_d\|
    <
    \frac{\operatorname{Gap}(r)}{2},
\]
every maximizer $\widehat c$ in
\eqref{eq:spatial-location-estimator} satisfies
\[
    |\widehat c-c_0|
    \le
    r.
\]

Moreover, fix any $\alpha>0$, and let
$\beta_{\mathrm{src}}=\beta_{\mathrm{src}}(\alpha)$ be the constant in
Lemma~\ref{lem:source-cumulant-bandability}. In the setting where the
observations are i.i.d. copies of the sensor-array snapshot $X$ and $\sup_{u\in\mathbb S^{p-1}} \|u^\top X\|_{\Phi_\gamma}  \leq  K$ for some $0<\gamma\leq d$, for every
integer $2\le k\le p$ and every $x\ge0$, with probability at least
$1-Ce^{-x}$, every maximizer $\widehat c$ satisfies
\[
    |\widehat c-c_0|
    \le
    r,
\]
provided
\[
    C_{\alpha,d}
    \beta_{\mathrm{src}}
    k^{-\alpha-d/2+1}
    +
    C_{\gamma,d,K}
    \Delta_{k,x}
    (1+\Delta_{k,x})^{d-1}
    <
    \frac{\operatorname{Gap}(r)}{2}.
\]
\end{theorem}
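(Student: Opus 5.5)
The proof splits into a deterministic argmax-perturbation step and a probabilistic step that plugs in Theorem~\ref{thm:tapered-cumulant-upper-bound}.

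\textbf{Deterministic step.} Set $\delta=\|\widehat{\bcK}_{d,T,k}-\bcK_d\|$. For every $c\in[p]$, linearity of the inner product and the definition of the tensor spectral norm (take $u_1=\cdots=u_d=g_c\in\mathbb S^{p-1}$) give
\[
|\widehat Q_k(c)-Q(c)|=\bigl|\langle \widehat{\bcK}_{d,T,k}-\bcK_d,\,g_c^{\otimes d}\rangle\bigr|\le \delta .
\]
So the plug-in score is uniformly within $\delta$ of the population score. Now take any $c$ with $|c-c_0|>r$. Then
\[
\widehat Q_k(c)\le Q(c)+\delta\le Q(c_0)-\operatorname{Gap}(r)+\delta<Q(c_0)-\delta\le \widehat Q_k(c_0),
\]
where the strict inequality uses $\delta<\operatorname{Gap}(r)/2$. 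Hence $c$ is not a maximizer, because $c_0$ scores strictly higher. The maximum over the finite set $[p]$ exists, so every maximizer satisfies $|\widehat c-c_0|\le r$. If the set $\{|c-c_0|>r\}$ is empty, the conclusion is trivial; I would adopt the convention that a maximum over the empty set is $-\infty$.

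\textbf{Probabilistic step.} By Lemma~\ref{lem:source-cumulant-bandability}, $\bcK_d\in\mathcal K^d_{\alpha,\beta_{\mathrm{src}}}$ for the fixed $\alpha$. The snapshot $X$ is centered, and the $\Phi_\gamma$ condition holds with constant $K$. Theorem~\ref{thm:tapered-cumulant-upper-bound} therefore applies with $\beta=\beta_{\mathrm{src}}$. For any $2\le k\le p$ and $x\ge0$, it gives, with probability at least $1-Ce^{-x}$,
\[
\delta\le C_{\alpha,d}\beta_{\mathrm{src}}k^{-\alpha-d/2+1}+C_{\gamma,d,K}\Delta_{k,x}(1+\Delta_{k,x})^{d-1}.
\]
Under the stated proviso, this right-hand side is below $\operatorname{Gap}(r)/2$. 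On that event the deterministic step applies, which yields the claim.

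I expect no real obstacle. The only care needed is to use the same constants $C_{\alpha,d}$ and $C_{\gamma,d,K}$ as in Theorem~\ref{thm:tapered-cumulant-upper-bound}, and to note that the theorem does not require $d\ge3$. The Gaussian noise plays no role in the argument beyond its effect through the representation \eqref{eq:spatial-cumulant-representation}, which enters only via Lemma~\ref{lem:source-cumulant-bandability}.
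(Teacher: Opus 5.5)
Your proposal is correct and follows the paper's proof essentially step for step. You bound $|\widehat Q_k(c)-Q(c)|$ by the tensor spectral-norm error, run the same strict-inequality chain showing every $c$ with $|c-c_0|>r$ scores below $c_0$, and then plug Lemma~\ref{lem:source-cumulant-bandability} into Theorem~\ref{thm:tapered-cumulant-upper-bound} on the high-probability event.
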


Theorem~\ref{thm:source-localization} translates the spectral-norm accuracy of
the tapered cumulant estimator into a spatial localization guarantee. For
every $c\in[p]$, the normalization $\|g_c\|_2=1$ and the definition of the
tensor spectral norm imply
\[
    |\widehat Q_k(c)-Q(c)|
    \le
    \|\widehat{\bcK}_{d,T,k}-\bcK_d\|.
\]
Thus localization succeeds whenever the tensor estimation error is less than
half of the population separation gap. In this sense, tapering regularizes the
cumulant matched-filter score while preserving the Gaussian-noise cancellation
property of higher-order cumulants. By the reverse triangle inequality, the same bound yields
$\bigl||\widehat Q_k(c)|-|Q(c)|\bigr|
\le\|\widehat{\bcK}_{d,T,k}-\bcK_d\|$.
Consequently, Theorem~\ref{thm:source-localization} holds verbatim for the
absolute-score estimator
$\widehat c\in\operatorname{argmax}_{1\le c\le p}|\widehat Q_k(c)|$,
with $Q$ replaced by $|Q|$ in the gap condition; this is the variant used in
Section~\ref{subsec:spe1-spatial}.

\subsection{Broader Plug-in Uses of Tapered Cumulants}
\label{subsec:regularized-plugin}

The preceding examples follow a common regularized plug-in principle. Suppose
a downstream procedure depends on the population order-$d$ cumulant tensor
through a functional or estimating map $\mathcal T(\bcK_d)$. Instead of using
the raw empirical cumulant tensor, we first form the tapered estimator
$\widehat{\bcK}_{d,T,k}$ and then use the regularized plug-in quantity $\mathcal T(\widehat{\bcK}_{d,T,k})$. In the examples above, $\mathcal T$ corresponds respectively to a cumulant
Yule--Walker linear system, a minimum-distance cumulant criterion, and a
spatial matched-filter score.

This principle is not limited to the particular time-series and spatial models
considered above. It applies whenever the variables have a meaningful ordering
and the relevant higher-order dependence is approximately local. Examples
include temporal lags, spatial sensor arrays, frequency-indexed measurements,
genomic positions, and other sequence-ordered features. Existing procedures
based on empirical third-order or higher-order cumulants can therefore be
regularized by inserting the tapered cumulant estimator before the original
downstream step. Such procedures include higher-order spectral and bispectral
analysis \citep{nikias1987bispectrum,mendel1991tutorial,nikias1993signal,nikias1993higher,zhang2018asymptotic,sifft2025signalsnap},
detection and signal reconstruction under additive Gaussian noise
\citep{shamsunder1994detection,tugnait1989time,petropulu1990complex,hinich1992time,wang2021third},
phase reconstruction and nonminimum-phase system identification
\citep{lii1982deconvolution,matsuoka1984phase,pan1988complex,petropulu1995noncausal},
nonlinear system identification
\citep{brillinger1977identification,schetzen1989volterra,hickey2009higher,parker2020nonlinear,basti2024bicoherence},
and non-Gaussian ARMA estimation based on higher-order moment or cumulant
equations
\citep{huzii1981estimation,giannakis1989cumulant,friedlander1990asymptotically,taniguchi1991higher}.

The usefulness of this plug-in step comes from two complementary features of
higher-order cumulants. First, cumulants provide identifying information beyond
covariance. Second-order methods
\citep[e.g.][]{brockwell1991time,hamilton1994time,schmidt1986multiple,roy1989esprit,krim1996sensor}
may miss asymmetric, nonlinear, or otherwise non-Gaussian dependence. By
contrast, cumulants of order $d\ge3$ can capture such structure and vanish for
Gaussian variables, making them insensitive to additive Gaussian noise at those
orders.

Second, tapering regularizes cumulant-based procedures in ordered
high-dimensional problems. As discussed in Section~\ref{subsec:rate-comparison},
raw empirical cumulant tensors can be unstable in spectral norm. When
higher-order dependence is local in the coordinate ordering, distant cumulant
entries are often weaker and less informative. Tapering preserves the dominant
local entries while downweighting or removing long-range entries. Consequently,
any downstream procedure that is stable with respect to perturbations of
$\bcK_d$ inherits the bias--variance tradeoff of the tapered cumulant
estimator. This is analogous in spirit to banded covariance regularization for
ordered data \citep{wu2009banding}, but it targets higher-order non-Gaussian
dependence rather than second-order covariance structure.

\section{Computation and Tuning}
\label{sec:computation-tuning}

This section discusses computation and practical tuning. The bandwidth $k$ is
the main tuning parameter in the tapered cumulant estimator. In downstream
applications, additional model-specific parameters may also be selected, such
as the autoregressive order $r$ or the moving-average order $q$. The bandwidth
choice is guided by the bias--variance theory above, whereas the model-order
criteria below are intended as practical diagnostics for numerical work.

\subsection{Computing the Tapered Cumulant Estimator}

In implementation, one may first subtract the sample mean and then compute the
empirical cumulant from the centered empirical moments using the
moment--cumulant formula. 
After mean subtraction the third sample cumulant coincides with the third centered empirical moment, while for $d\ge 4$ the formula retains nontrivial corrections involving centered moments of orders 2 through $d-2$. The full order-$d$ sample cumulant tensor
has $p^d$ entries and is generally too large to form explicitly when $p$ is
large. The tapered estimator avoids this full storage cost because
$w_k(\operatorname{diam}(\bi))=0$ whenever
$\operatorname{diam}(\bi)\ge k$. Therefore, only entries satisfying
$\operatorname{diam}(\bi)<k$ need to be computed. The number of such entries
is at most $C_dpk^{d-1}$ for fixed $d$, so the tapered estimator can be stored
using $O(pk^{d-1})$ memory rather than $O(p^d)$ memory.

For each retained multi-index $\bi=(i_1,\ldots,i_d)$, the sample cumulant is
computed through the plug-in moment--cumulant formula
\eqref{eq_sample_cumulant_tensor}. Since $d$ is fixed, the number of
partitions of $[d]$ is a constant depending only on $d$. Thus computing all
entries in the tapered band has cost $O(npk^{d-1})$ up to constants depending
on $d$. The tapered estimator is then formed entrywise as
\[
    (\widehat{\bcK}_{d,T,k})_{\bi}
    =
    w_k(\operatorname{diam}(\bi))(\widehat{\bcK}_d)_{\bi}.
\]

\subsection{Computing the Downstream Estimators}
\label{subsec:computing-downstream}

The full tapered tensor need not be materialized if the downstream procedure
uses only selected entries. Once these selected cumulant entries have been
estimated, the downstream model-specific estimators are computed from
lower-dimensional systems. In the autoregressive cumulant Yule--Walker
estimator, the fitted cumulant equations have the form
$\widehat b_k\approx \widehat A_k\phi$, and the estimator is
\[
    \widehat\phi_{d,k}
    =
    (\widehat A_k^\top\widehat A_k)^{-1}\widehat A_k^\top\widehat b_k,
\]
provided $\widehat A_k$ has full column rank.

For the moving-average minimum-distance estimator, the criterion is nonlinear
in the moving-average coefficients but linear in the innovation cumulant
$\tau_d$. Hence $\tau_d$ can be profiled out. Let
$\Psi(\theta)=(\Psi_1(\theta),\ldots,\Psi_m(\theta))^\top$, where
$\Psi_a(\theta)$ is the model-implied cumulant equation defined in
Section~\ref{subsec:cumulant-ma}. For any fixed $\theta$ with
$\Psi(\theta)\ne0$, the least-squares minimizer over $\tau_d$ is
\[
    \widehat\tau_d(\theta)
    =
    \frac{\langle \widehat b,\Psi(\theta)\rangle}{\|\Psi(\theta)\|_2^2}.
\]
Therefore, the moving-average estimator can be computed by solving the
$q$-dimensional profiled problem
\begin{equation}
\label{eq:ma-op-theta}
    \widehat\theta
    \in
    \argmin_{\theta\in\Theta_\theta}
    \left\|
    \widehat b-\widehat\tau_d(\theta)\Psi(\theta)
    \right\|_2^2.
\end{equation}
For fixed small $q$, this profiled problem can be solved by multistart
nonlinear optimization, using quasi-Newton methods such as BFGS or its
bound-constrained variant L-BFGS-B
\citep{nocedal2006numerical,byrd1995limited}.

\subsection{Bandwidth Selection}
\label{subsec:select-bandwidth}

The bandwidth $k$ controls the bias--variance tradeoff. A smaller $k$ reduces
variance by keeping fewer empirical cumulant entries, but increases
approximation bias. The discussion in Section~\ref{subsec:rate-comparison}
suggests balancing the leading bias term
$\beta k^{-\alpha-d/2+1}$ and the leading stochastic term
$\sqrt{(k+\log p)/n}$. When $\log p$ is no larger than the resulting oracle
bandwidth, this reduces to balancing $\beta k^{-\alpha-d/2+1}$ with
$\sqrt{k/n}$, giving
\[
    k_\star
    \asymp
    (\beta^2 n)^{1/(2\alpha+d-1)}.
\]
Thus, if $\alpha$ and $\beta$ are known or can be specified and
$\log p\lesssim k_\star\lesssim p$, one may choose $k$ as the nearest
admissible integer to $k_\star$. Outside this regime, one may instead minimize
the full leading-order criterion
\[
    \beta k^{-\alpha-d/2+1}
    +
    \sqrt{\frac{k+\log p}{n}}
\]
over the admissible bandwidth grid.

In practice, $\alpha$ and $\beta$ are usually unknown. A data-driven approach
is to choose $k$ from a finite grid
$\mathcal K=\{k_1<k_2<\cdots<k_L\}$ using a Lepski-type stability criterion.
This criterion is motivated by adaptive bandwidth-selection methods that
compare estimators across bandwidths
\citep{lepskii1991,goldenshluger2011bandwidth}. For $k\le k'$, let
$R(k,k')$ be a nonnegative loss measuring how much the estimated object changes
when the bandwidth is increased from $k$ to $k'$. The loss should be chosen
according to the object used in the subsequent analysis. A generic stability
rule is
\begin{equation}
\label{eq:k-selection}
    \widehat k
    =
    \min
    \left\{
    k\in\mathcal K:
    R(k,k')
    \le
    A\,s(k')
    \text{ for all } k'\in\mathcal K,\ k'\ge k
    \right\},
\end{equation}
where $A>0$ is a tuning constant and $s(k')$ is a scale or threshold function.
The intuition is that, once $k$ is large enough for the bias to be below the
stochastic error, increasing the bandwidth should no longer change the relevant
estimated object substantially.

A theory-aligned choice of loss is the tensor-level stability loss
\[
    R_{\mathrm{tensor}}(k,k')
    =
    \|\widehat{\bcK}_{d,T,k}-\widehat{\bcK}_{d,T,k'}\|.
\]
Motivated by Theorem~\ref{thm:tapered-cumulant-upper-bound}, one may take
\[
    s(k)
    =
    \sqrt{\frac{k+\log p}{n}}
    +
    \frac{(k+\log p)^{d/\gamma}}{n},
\]
where we take $\gamma=2$ in all numerical work. The unknown constants in the high-probability bound are absorbed into the
multiplier $A$. This version is closest to the theoretical error bound because
it compares the full tapered cumulant tensors.

However, exact computation of the tensor spectral norm is generally difficult
for $d\ge3$ \citep{hillar2013most}. Therefore, in numerical work, the
tensor-level loss can be replaced by a computationally tractable proxy, such as
the maximum matricized operator norm over unfoldings, an approximate tensor
spectral norm computed by alternating maximization, or a loss defined directly
on the lower-dimensional object used in the final estimator.

For example, in the cumulant Yule--Walker application, the final estimator
depends on the selected cumulant equations through $\widehat A_k$ and
$\widehat b_k$. It is therefore natural to tune $k$ by the equation-level
stability loss
\[
    R_{\mathrm{YW}}(k,k')
    =
    \left(
    \|\widehat A_k-\widehat A_{k'}\|_F^2
    +
    \|\widehat b_k-\widehat b_{k'}\|_2^2
    \right)^{1/2}.
\]
The bandwidth is then selected as the smallest $k$ satisfying
\eqref{eq:k-selection}.

\subsection{Model-Order Selection}

In addition to the bandwidth $k$, some applications require choosing a
model-specific order parameter. For example, the model order may be the
autoregressive order in the cumulant Yule--Walker estimator or the
moving-average order in the cumulant minimum-distance estimator.

A general approach is to fit the model over a finite grid of candidate orders
$\mathcal M=\{m_1,\ldots,m_J\}$. For each $m\in\mathcal M$, we construct the
corresponding cumulant equations, estimate the model parameters, and evaluate
how well the fitted model reproduces the selected empirical cumulants. This
gives a residual diagnostic of the form
\[
    L(m)
    =
    \frac{1}{N_m}
    \|r_m(\widehat\eta_m)\|_2^2,
\]
where $r_m(\widehat\eta_m)$ is the residual vector from the selected cumulant
equations, and $N_m$ is the number of equations used for order $m$.

Since $L(m)$ often decreases as the model order increases, one may use an
information criterion of the form
\[
    \operatorname{IC}(m)
    =
    \log\{L(m)+\epsilon\}
    +
    \lambda
    \frac{\nu_m\log n_{\rm eff}}{n_{\rm eff}},
\]
where $\nu_m$ is the number of fitted parameters, $n_{\rm eff}$ is the
effective sample size, $\lambda>0$ controls the penalty strength, and
$\epsilon>0$ is a small numerical constant. For independent replicated
observations, $n_{\rm eff}=n$; for a single dependent trajectory, $n_{\rm eff}$
may be taken as the number of usable approximately independent blocks or
another effective sample-size estimate. The selected order is then chosen by
comparing the residual diagnostic and the information criterion over the candidate grid.

In Section~\ref{sec:real-data}, we illustrate this procedure for the
autoregressive and moving-average examples.

\section{Simulation Studies}
\label{sec:simulation}

We conduct three simulation studies. The first evaluates spectral-norm
estimation of bandable cumulant tensors and compares the raw sample cumulant,
hard banding, and tapering. The second studies the data-driven bandwidth
selector used in implementation. The third examines how tapered third-order
cumulants perform as inputs to the autoregressive and moving-average estimators
introduced in Section~\ref{sec:application}.

\subsection{Estimation of Bandable Cumulant Tensors}
\label{subsec:simulation-tensor-estimation}

We first examine the finite-sample performance of the tapered cumulant
estimator. The purpose of this experiment is to compare the raw sample
cumulant, hard banding, and tapering. We generate observations from a
non-Gaussian linear process. For $j=1,\ldots,p$, let
\[
    X_j
    =
    \sum_{\ell=0}^{L}
    \psi_\ell \varepsilon_{j+L-\ell}
    +
    \sigma Z_j,
\]
where $\{\varepsilon_s\}$ are independent standardized non-Gaussian
innovations, $Z=(Z_1,\ldots,Z_p)^\top$ is standard Gaussian noise independent
of the innovations, and the coefficients $\psi_\ell$ decay polynomially. Since
the Gaussian component has zero cumulants of order $d\ge3$, the higher-order
cumulant tensor is determined by the non-Gaussian linear component. The true
order-$d$ cumulant tensor is computed exactly from the loading representation
and is used as the target in the simulation.

We consider cumulant orders $d=3$ and $d=4$, and bandability parameters
$\alpha\in\{0.1,1,10\}$. For each configuration, we compare three estimators:
the raw sample cumulant tensor $\widehat{\bcK}_d$, the hard-banded estimator
$\mathcal B_k(\widehat{\bcK}_d)$ with oracle bandwidth, and the tapered
estimator $\widehat{\bcK}_{d,T,k}$ with oracle bandwidth. The hard-banded
estimator assesses the value of localization alone, whereas the tapered
estimator assesses the additional benefit of a smooth transition at the
boundary of the diagonal tube. The oracle bandwidth is selected from a finite
candidate grid by minimizing the empirical estimation error against the known
population cumulant tensor. It is used only as a benchmark, allowing us to
isolate the effect of localization from the additional difficulty of bandwidth
selection. The estimation error is measured by an approximate tensor spectral
norm using \texttt{CP-ALS} implemented by R package \texttt{rTensor} and averaged over 50 Monte Carlo replications. The CP-ALS value is a lower bound on the tensor spectral norm and is used as a common proxy for all three estimators.

\begin{figure}[htbp]
    \centering
    \includegraphics[width=0.7\textwidth]{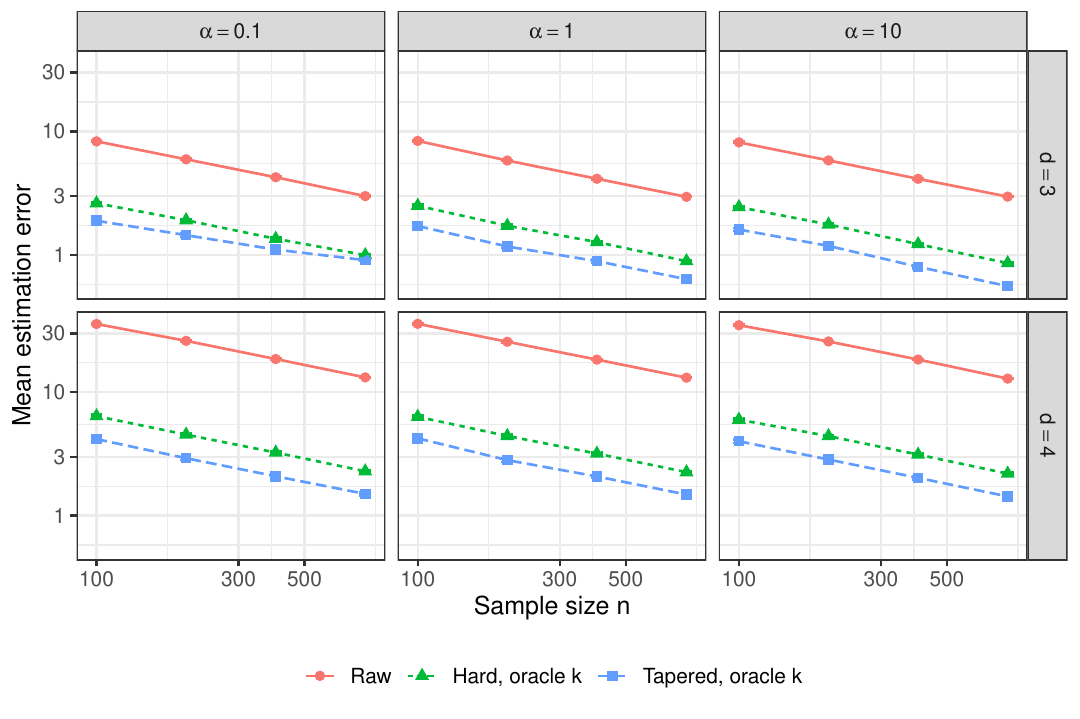}
    \caption{Mean tensor estimation error as a function of the sample size $n$.}
    \label{fig:sim_tensor_error_n}
\end{figure}

Figure~\ref{fig:sim_tensor_error_n} shows that the estimation error decreases
as the sample size increases. The raw sample cumulant has the largest error,
while both localized estimators substantially improve the estimation accuracy.
The tapered estimator gives the smallest error across the displayed settings.

\begin{figure}[htbp]
    \centering
    \includegraphics[width=0.7\textwidth]{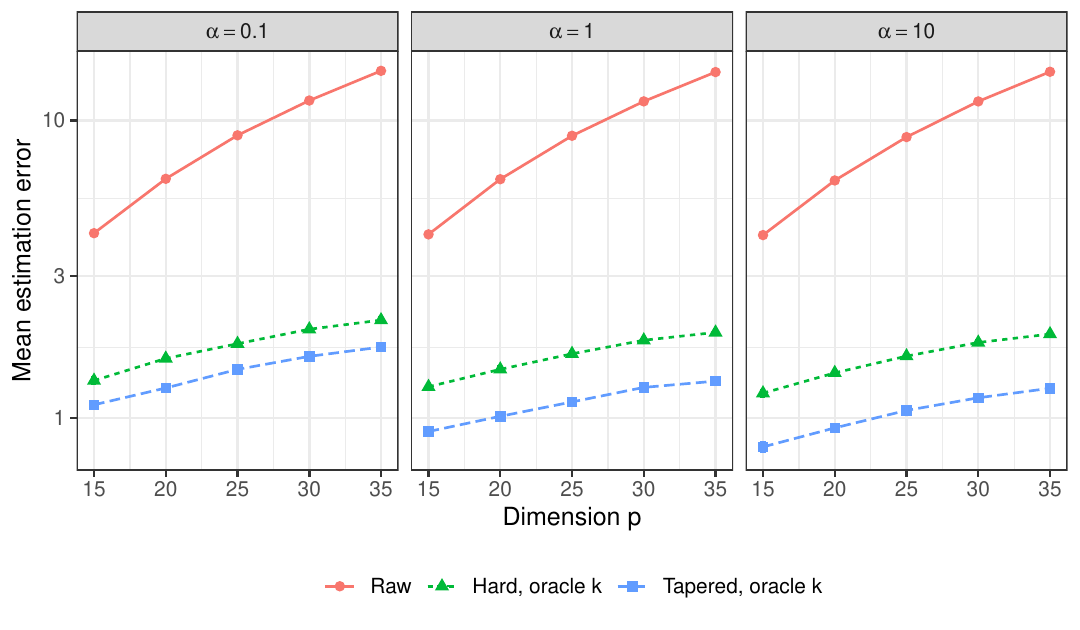}
    \caption{Mean tensor estimation error as a function of the dimension $p$.}
    \label{fig:sim_tensor_error_p}
\end{figure}

Figure~\ref{fig:sim_tensor_error_p} shows that the error increases with the
dimension $p$ for $d=3$, but the increase is much slower for the hard-banded
and tapered estimators than for the raw sample cumulant. The results also show
that larger values of $\alpha$ lead to slightly smaller estimation errors,
consistent with the fact that larger $\alpha$ corresponds to faster
off-diagonal decay and hence a more strongly bandable cumulant tensor.

Overall, these simulations corroborate the main qualitative implications of
Theorem~\ref{thm:tapered-cumulant-upper-bound}: exploiting bandability improves
cumulant tensor estimation, and tapering provides a stable localized estimator
in finite samples.

\subsection{Data-Driven Bandwidth Selection}
\label{subsec:sim-bandwidth-selection}

We next examine the data-driven bandwidth selector used in the tapered cumulant
estimator. For each simulated data set, we compute the tapered cumulant
estimator over a grid of candidate bandwidths and select $\widehat k$ using the
rule described in Section~\ref{subsec:select-bandwidth}. We vary the threshold
multiplier $A$ in \eqref{eq:k-selection} and compare the risk of the selected
estimator with the oracle risk over the same bandwidth grid.

The simulation is based on the third-order cumulants. We generate non-Gaussian
observations with polynomially decaying dependence so that the cumulant tensor
satisfies the bandable structure. The dimension is fixed at $p=20$, and we
consider sample sizes $n\in\{200,400,800\}$ and smoothness levels
$\alpha\in\{0.3,1,2\}$. For a candidate bandwidth $k$, the estimation error is
measured by the approximate tensor spectral norm
$\|\widehat{\bcK}_{3,T,k}-\bcK_3\|$. The oracle bandwidth is defined as the
minimizer of this error over the candidate grid. For each value of $A$, we
report the mean ratio between the error of the data-driven estimator and the
oracle error.

Figure~\ref{fig:bandwidth-risk-ratio} shows the resulting risk ratios between
the error of the Lepski-selected tapered estimator and the oracle error over
the candidate bandwidth grid. The risk ratio is close to one for a moderate
range of threshold multipliers, indicating that the Lepski-type rule can
recover a near-oracle bandwidth in this setting. The best performance is
obtained around $A=1$. When $A$ is substantially larger, the selected bandwidth
tends to be too small, especially for $\alpha=0.3$ and $\alpha=1$, leading to
an increased bias and a larger risk ratio. For $\alpha=2$, the method is less
sensitive once $A$ is moderately large, but very small values of $A$ still lead
to worse performance. Based on these simulations, we use $A=1$ as the default
threshold multiplier in the subsequent numerical analyses.

\begin{figure}[t]
    \centering
    \includegraphics[width=0.85\textwidth]{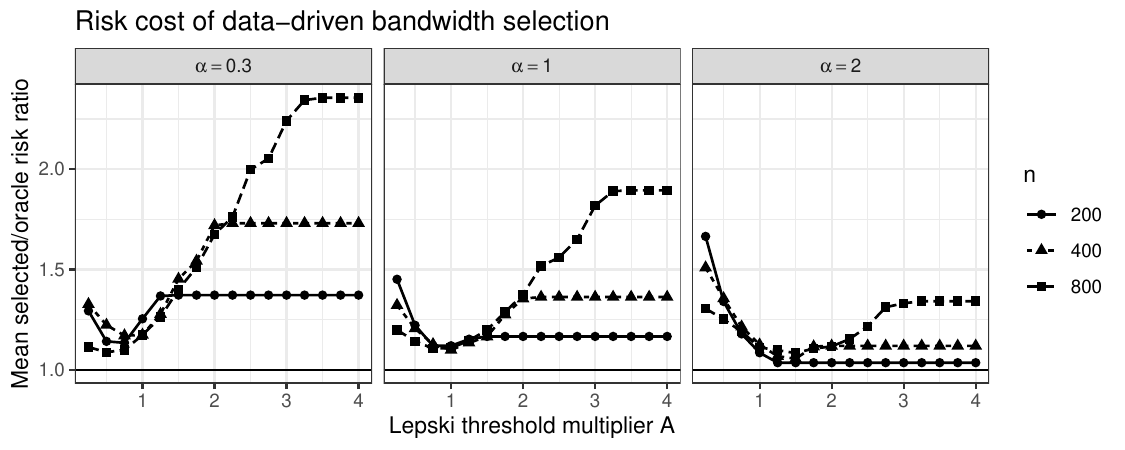}
    \caption{
    Risk ratio for data-driven bandwidth selection.
    }
    \label{fig:bandwidth-risk-ratio}
\end{figure}

\subsection{Time Series Estimation}
\label{subsec:simulation-time-series}

We next evaluate the proposed cumulant-based estimators in simple
autoregressive and moving-average models to illustrate that the cumulant
equations used in Section~\ref{subsec:cumulant-yule-walker} and
Section~\ref{subsec:cumulant-ma} lead to stable finite-sample coefficient
estimation. In both experiments, the empirical cumulant equations are formed
using tapered third-order cumulants.

We first consider autoregressive coefficient estimation. Data are generated
from a stable autoregressive model with centered non-Gaussian innovations. To
examine robustness to second-order contamination, the observed series is formed
by adding independent Gaussian measurement noise with standard deviation $0$,
$0.75$, or $1.25$. For each sample size, we estimate the autoregressive
coefficients using the proposed third-order cumulant Yule--Walker estimator and
the ordinary covariance Yule--Walker estimator. The performance metric is the
mean $\ell_2$ coefficient error over 100 Monte Carlo replications.

Figure~\ref{fig:ar_yw_comparison} reports the results. When there is no
measurement noise, both methods improve as the sample size increases, and the
ordinary Yule--Walker estimator performs slightly better, as expected from the
correctly specified second-order autoregressive equations. In the presence of
additive Gaussian measurement noise, however, the ordinary Yule--Walker
estimator exhibits a large persistent error, while the cumulant Yule--Walker
estimator continues to improve with the sample size. This behavior is
consistent with the fact that independent Gaussian noise has zero third-order
cumulant but changes the covariance structure. The experiment therefore
illustrates a setting in which higher-order cumulant equations can provide a
useful alternative to ordinary second-order Yule--Walker estimation.

\begin{figure}[t]
    \centering
    \includegraphics[width=0.8\textwidth]{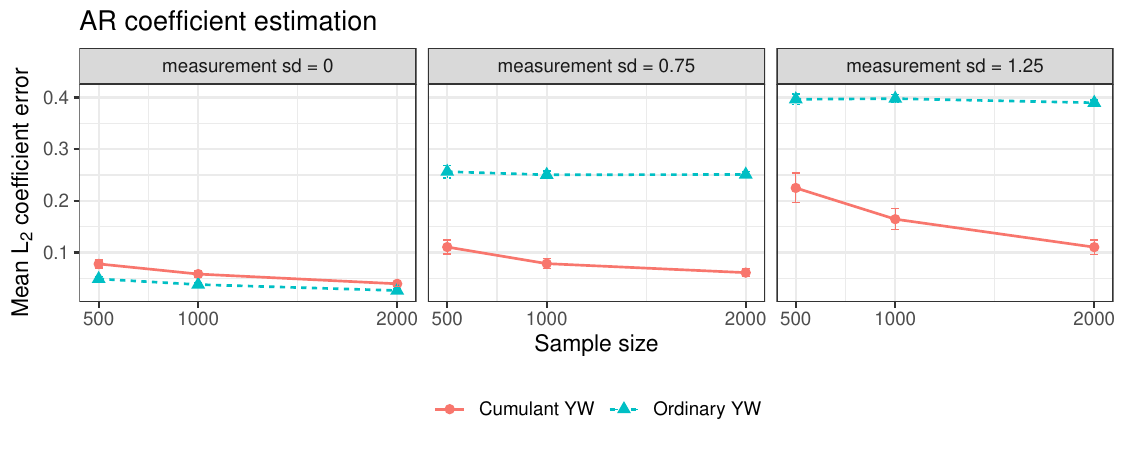}
    \caption{AR coefficient estimation.}
    \label{fig:ar_yw_comparison}
\end{figure}

We also consider moving-average coefficient estimation. Data are generated from
a fixed moving-average model with centered non-Gaussian innovations. The
coefficients are estimated by the third-order cumulant minimum-distance
estimator described in Section~\ref{subsec:cumulant-ma}. After profiling out
the innovation cumulant in closed form, we minimize the resulting nonlinear
least-squares criterion over the moving-average coefficients using a multistart
box-constrained quasi-Newton algorithm. The performance metric is again the
mean $\ell_2$ coefficient error over 100 Monte Carlo replications. The purpose of
this experiment is not to compare against a second-order baseline, but to
verify that the selected third-order cumulant equations lead to stable
finite-sample estimation.

Figure~\ref{fig:ma_md_error} shows that the estimation error decreases steadily
as the sample size increases. This indicates that the selected third-order
cumulant equations contain useful information for identifying the moving-average
coefficients. Together with the autoregressive experiment, this simulation
supports the use of tapered higher-order cumulants for finite-sample estimation
in non-Gaussian time series models.

\begin{figure}[t]
    \centering
    \includegraphics[width=0.5\textwidth]{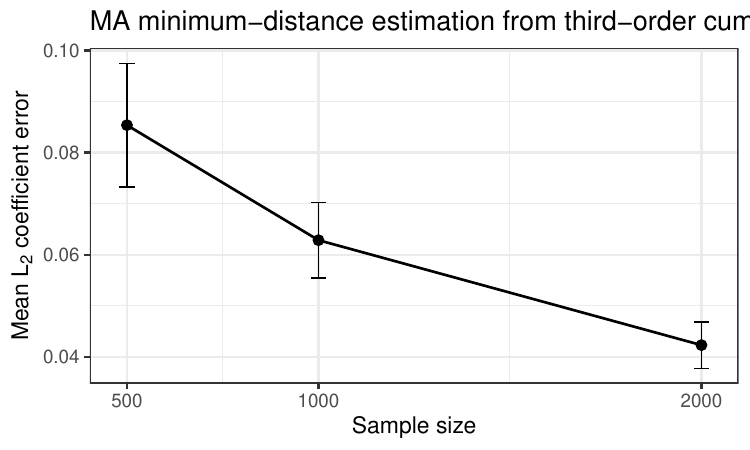}
    \caption{MA coefficient estimation using the third-order cumulant minimum-distance estimator.}
    \label{fig:ma_md_error}
\end{figure}

\section{Real Data Analyses}\label{sec:real-data}

We present three real-data examples illustrating the plug-in role of tapered
higher-order cumulants. The first two examples use temporal data and show how
tapered third-order cumulants can regularize estimating equations for
non-Gaussian time-series models. The third example uses spatially ordered
Neuropixels waveforms and illustrates the use of tapered cumulants for
localizing non-Gaussian source structure in the presence of Gaussian-like
background variation. These examples are intended as empirical demonstrations
of the proposed regularization principle rather than exact generative-model
claims.

\subsection{RR Interval Analysis}
\label{subsec:rr_interval_analysis}

We illustrate the cumulant Yule--Walker estimator in Section~\ref{subsec:cumulant-yule-walker} using the RR interval dataset of healthy subjects available through PhysioNet \citep{irurzun2021rr}.
An RR interval is the elapsed time between two consecutive R-wave peaks in an electrocardiogram, and is commonly used as a beat-to-beat measure of cardiac rhythm.
The dataset contains multiple independent subjects, each with a longitudinal sequence of RR intervals.

For each subject, we remove physiologically implausible RR intervals, discard large local jumps on the logarithmic scale, apply a logarithmic transformation, and then center and standardize the transformed trajectory.
We estimate third-order cumulants within each subject and average the subject-level cumulants, which uses each full trajectory while preserving the subject-level independence used for uncertainty quantification.
The preprocessing, cumulant aggregation, covariance baseline, bandwidth selector, and order diagnostic are given in Supplement Section~\ref{subsec:rr-implementation-supp}.

We fit the third-order cumulant Yule--Walker equations described in Section~\ref{subsec:cumulant-yule-walker}.
For the main fit, we use an AR(3) model and lag pairs in $\{1,\ldots,8\}^2$.
The taper bandwidth is selected from $\{3,5,7,9,11,13,15\}$ using the Lepski-type stability rule in Section~\ref{subsec:select-bandwidth}.
For the RR interval data, this procedure selects $\widehat k=13$.

\begin{figure}[htbp]
    \centering
    \begin{subfigure}[t]{0.53\textwidth}
        \centering
        \includegraphics[width=\linewidth]{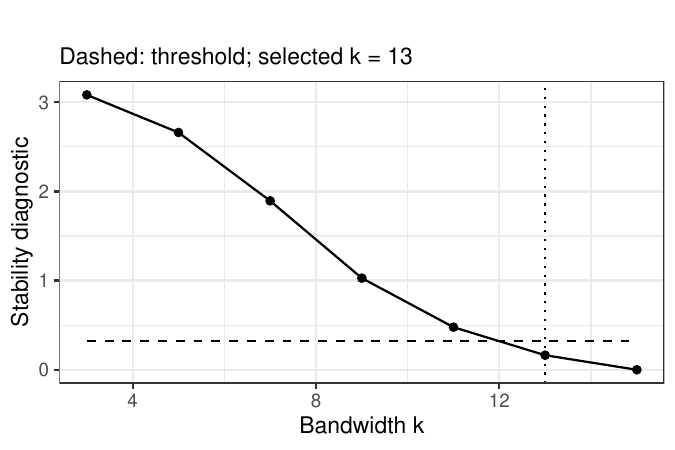}
        \caption{Bandwidth diagnostic.}
        \label{fig:rr_bandwidth}
    \end{subfigure}
    \hfill
    \begin{subfigure}[t]{0.45\textwidth}
        \centering
        \includegraphics[width=\linewidth]{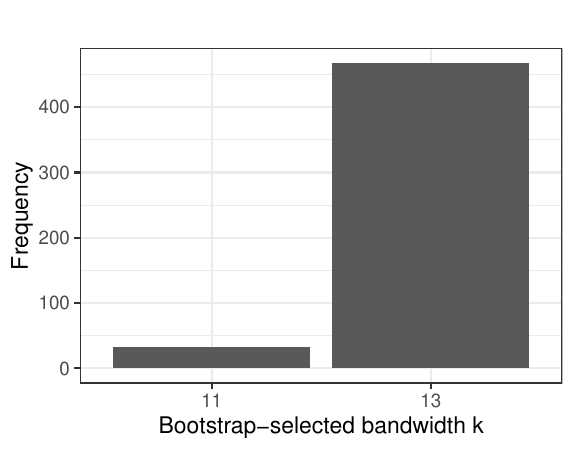}
        \caption{Bootstrap bandwidths.}
        \label{fig:rr_bootstrap_k}
    \end{subfigure}
    \caption{Bandwidth selection.}
    \label{fig:rr_bandwidth_selection}
\end{figure}

Figure~\ref{fig:rr_bandwidth_selection} shows that the stability diagnostic decreases with the bandwidth and first falls below the threshold at $k=13$.
The subject-level bootstrap also selects $k=13$ in most replications, with a smaller number selecting $k=11$.

\begin{figure}[htbp]
    \centering
    \begin{subfigure}[t]{0.44\textwidth}
        \centering
        \includegraphics[width=\linewidth]{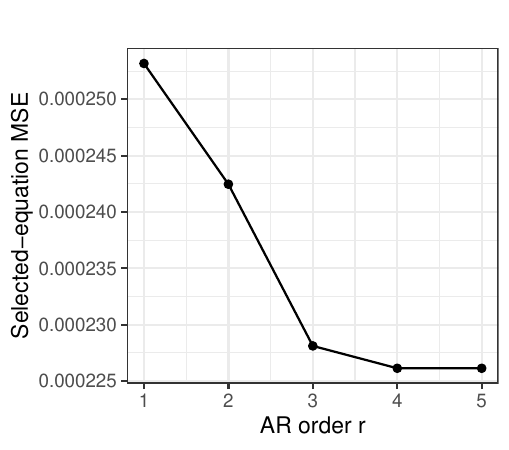}
        \caption{Order selection.}
        \label{fig:rr_order}
    \end{subfigure}
    \hfill
    \begin{subfigure}[t]{0.54\textwidth}
        \centering
        \includegraphics[width=\linewidth]{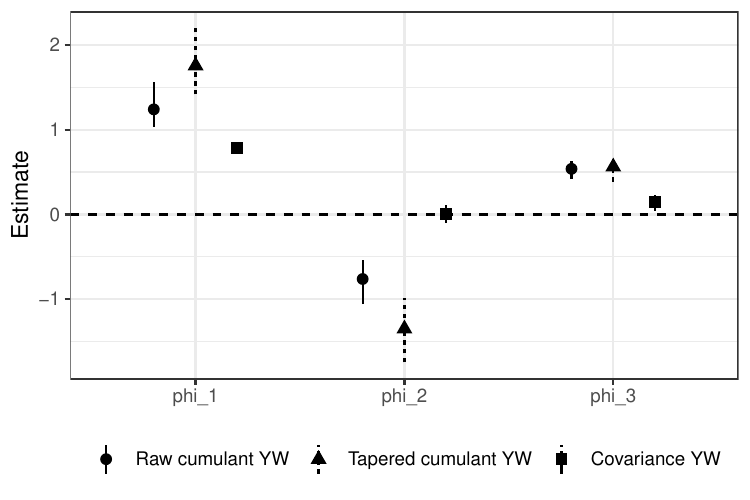}
        \caption{Coefficient estimates.}
        \label{fig:rr_coefficients}
    \end{subfigure}

    \par\medskip

    \begin{subfigure}[t]{0.48\textwidth}
        \centering
        \includegraphics[width=\linewidth]{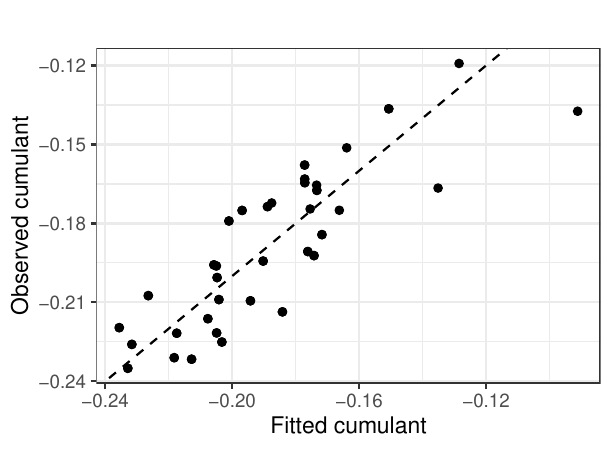}
        \caption{Tapered cumulant fit.}
        \label{fig:rr_fit}
    \end{subfigure}
    \hfill
    \begin{subfigure}[t]{0.48\textwidth}
        \centering
        \includegraphics[width=\linewidth]{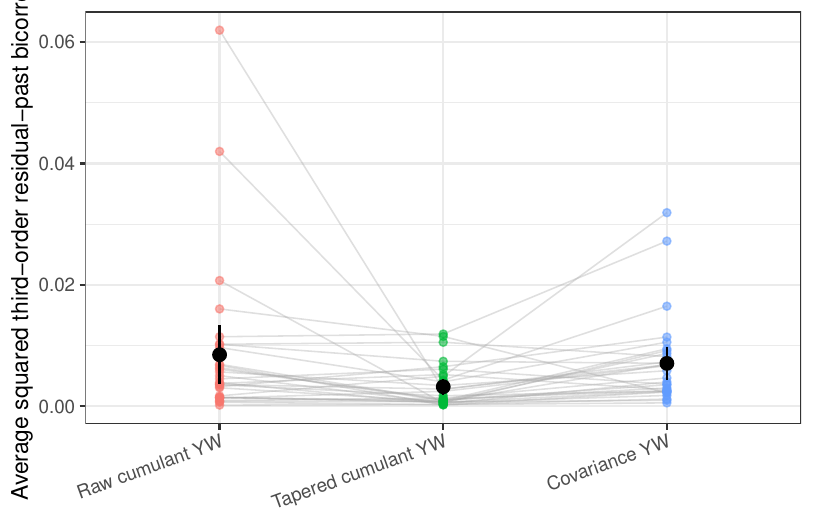}
        \caption{Third-order portmanteau scores.}
        \label{fig:rr_third_order_portmanteau}
    \end{subfigure}
    \caption{RR interval analysis. }
    \label{fig:rr_results}
\end{figure}

Figure~\ref{fig:rr_results} reports the order diagnostic, the three-method coefficient estimates, the fitted cumulant equations, and a held-out third-order goodness-of-fit comparison. The selected-equation MSE decreases through $r=3$ and changes little afterward, supporting an AR(3) model as a parsimonious description of the dominant dependence structure. Figure~\ref{fig:rr_coefficients} compares the tapered cumulant estimator with the raw cumulant Yule--Walker estimator obtained from the same lag pairs without tapering and with the ordinary covariance Yule--Walker estimator. This comparison isolates the effect of tapering while keeping the cumulant order, autoregressive order, and estimating equations fixed for the two cumulant estimators. The covariance and cumulant fits emphasize different features of the data: the former summarizes second-order linear dependence, whereas the latter is driven by asymmetric non-Gaussian dependence among lagged RR intervals. The observed and fitted selected cumulants in Figure~\ref{fig:rr_fit} mostly follow the diagonal trend, indicating that the tapered cumulant Yule--Walker system captures the main selected third-order equations.

To evaluate the feature of the fitted model that is directly targeted by cumulant Yule--Walker estimation, Figure~\ref{fig:rr_third_order_portmanteau} reports a held-out third-order portmanteau score. The construction adapts the bicorrelation diagnostics of \cite{ashley1986diagnostic} and \cite{hinich1996testing}. The first 80\% of each subject's trajectory is used for estimation, and the final 20\% is used only for evaluation. For each method, we standardize the residual--past third-order moment for every unique lag pair $1\le h_1\le h_2\le8$ and average its square. Smaller scores therefore indicate better satisfaction of the held-out third-order moment restrictions. 

The tapered cumulant estimator has the lowest across-subject average score, and most paired subject trajectories decrease from the raw to the tapered estimator. The raw cumulant estimator also produces several substantially larger subject-level scores, whereas the tapered scores are more concentrated near zero. The covariance estimator lies between the two cumulant estimators on average under this third-order criterion. These results support the intended role of tapering as a regularizer of noisy higher-lag cumulant equations.

\subsection{Air Quality Sensor Analysis Using a Moving-Average Cumulant Model}
\label{subsec:ma_air_quality}

We illustrate the cumulant minimum-distance estimator in Section~\ref{subsec:cumulant-ma} using the UCI Air Quality sensor data \citep{air_quality_360}.
The dataset consists of hourly gas-sensor measurements collected from an air-quality monitoring device.
We use the CO-targeted sensor response \texttt{PT08.S1(CO)} as a scalar time series.
Because the raw sensor record contains missing values, low-frequency variation, and calendar effects, we preprocess the series before fitting the moving-average cumulant model.

The preprocessing removes missing sensor values, applies a logarithmic transformation, regresses out hour-of-day effects, day-of-week effects, and a smooth time trend, and then differences the residual series.
The processed sequence is centered and standardized and is treated as one long realization of a stationary non-Gaussian time series.
The exact preprocessing, cumulant equations, profiled minimum-distance objective, invertibility constraint, and order criterion are given in Supplement Section~\ref{subsec:ma-air-implementation-supp}.

We fit the third-order moving-average cumulant equations described in Section~\ref{subsec:cumulant-ma}.
The empirical cumulants are estimated by time averages from the processed trajectory, and the innovation cumulant is profiled out before optimizing over the moving-average coefficients.
We consider candidate orders $q=1,\ldots,10$, and the BIC-type criterion selects $q=4$.

\begin{figure}[htbp]
    \centering

    \begin{subfigure}[t]{0.8\textwidth}
        \centering
        \includegraphics[width=\textwidth]{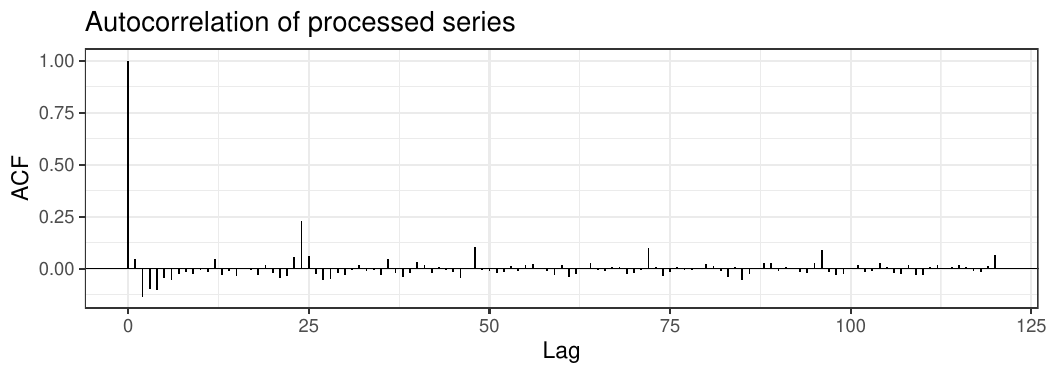}
        \caption{Processed ACF.}
        \label{fig:ma_air_acf}
    \end{subfigure}

    \begin{subfigure}[t]{0.8\textwidth}
        \centering
        \includegraphics[width=\textwidth]{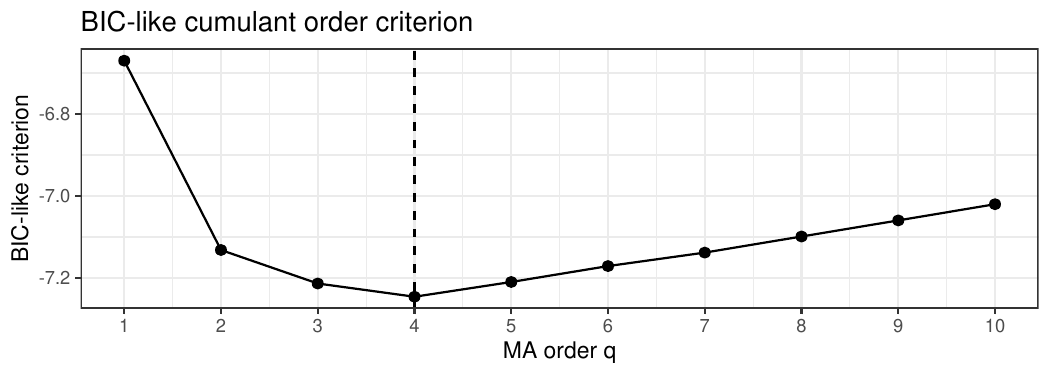}
        \caption{Order criterion.}
        \label{fig:ma_air_order_bic}
    \end{subfigure}

    \caption{Preprocessing and order selection.}
    \label{fig:ma_air_order}
\end{figure}

Figure~\ref{fig:ma_air_order} summarizes the preprocessing and order-selection diagnostics.
The autocorrelation of the processed series decays quickly after differencing, with only mild remaining seasonal spikes.
The BIC-type criterion selects an MA(4) model.

\begin{figure}[htbp]
    \centering

    \begin{subfigure}[t]{0.56\textwidth}
        \centering
        \includegraphics[width=\textwidth]{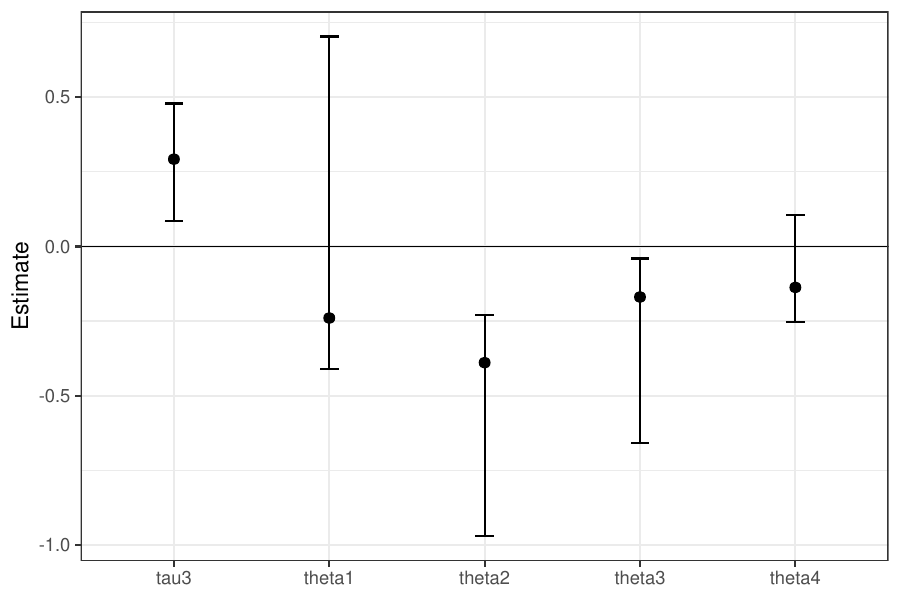}
        \caption{Coefficient estimates.}
        \label{fig:ma_air_coef}
    \end{subfigure}
    \hfill
    \begin{subfigure}[t]{0.38\textwidth}
        \centering
        \includegraphics[width=\textwidth]{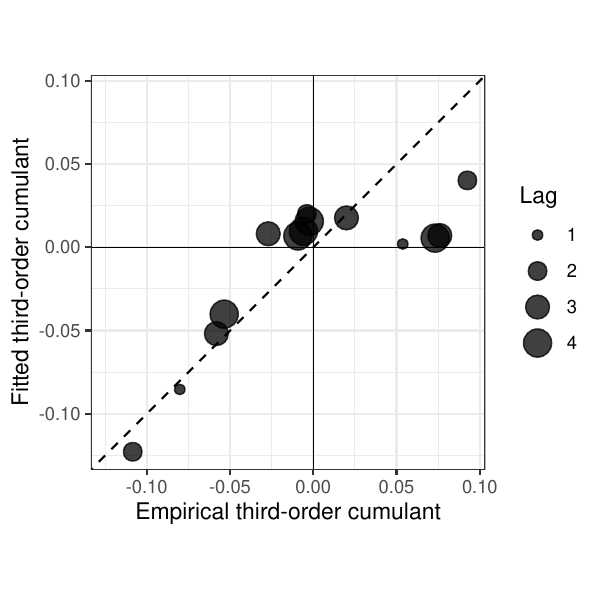}
        \caption{Cumulant fit.}
        \label{fig:ma_air_fit}
    \end{subfigure}

    \caption{Air Quality analysis.}
    \label{fig:ma_air_fit_all}
\end{figure}

Figure~\ref{fig:ma_air_fit_all} reports the fitted MA(4) cumulant model.
The coefficient plot shows the estimated moving-average coefficients with bootstrap confidence intervals.
The fitted-cumulant plot compares the empirical third-order cumulants with the fitted cumulants from the selected MA(4) model.
Most points follow the diagonal trend, although several outliers remain, indicating that the main low-lag third-order dependence is largely captured by a low-order moving-average model.
We interpret the fitted MA(4) model as a parsimonious short-memory approximation rather than an exact data-generating model, since it does not model dependence beyond lag 4.

\subsection{Spatial Localization on Paired Patch-Clamp and Neuropixels Waveforms}
\label{subsec:spe1-spatial}

We evaluate the proposed tapered cumulant localization procedure in Section~\ref{subsec:spatial-source-localization} on the SPE-1 paired patch-clamp and Neuropixels benchmark archive \citep{spe1figshare,zhao2026benchmarking}.
Neuropixels probes provide high-density extracellular recordings along a densely sampled silicon shank \citep{jun2017neuropixels}.
The SPE-1 archive stores extracted extracellular waveforms, probe geometry, template waveforms, sparse waveform-channel maps, and ground-truth unit locations in a SpikeInterface-style intermediate format \citep{buccino2020spikeinterface}.
We use the \texttt{seed\_42} copy of the archive and retain all 11 available cells across the 16 degradation conditions \texttt{D0}--\texttt{D15}, giving 176 cell-condition samples.

The purpose of this experiment is not to benchmark a full spike-localization pipeline, but to test the specific statistical claim that a tapered third-order score can preserve local non-Gaussian spatial information that may be attenuated in a covariance-based score. For each cell-condition, we convert the extracted spike waveforms into an ordered spatial sample on a local Neuropixels depth grid.
The conversion uses the waveform peak and pre-spike baseline segments to create a sparse non-Gaussian signal-plus-noise sample, embeds the 18 recorded sparse waveform channels into a 64-position local probe window using the SpikeInterface channel-sparsity metadata, and adds controlled Gaussian background variation at unobserved local positions. Full mathematical details are given in Supplement Section~\ref{subsec:spe1-preprocess-supp}.
We then apply the cumulant matched-filter localization rule in \eqref{eq:spatial-location-estimator}, using the tapered estimator in \eqref{eq:tapered-cumulant-estimator}, and compare it with an analogous covariance matched-filter baseline. 

The taper bandwidth is chosen adaptively over $k\in\{4,6,\ldots,32\}$ using the stability rule described in Section~\ref{subsec:select-bandwidth}. Figure~\ref{fig:spe1-main} summarizes the resulting localization performance from three complementary views: panel (a) shows median absolute localization error across degradation conditions, panel (b) shows paired differences relative to the covariance baseline, and panel (c) shows a representative score curve. The comparison with the covariance score is paired, since both scores are computed on the same cell-condition samples. We therefore focus on paired error differences in addition to marginal mean and median errors.

Across the 176 cell-condition samples, the tapered cumulant estimator achieves a mean absolute depth error of 254.1 $\mu$m and a median absolute depth error of 246.3 $\mu$m. For comparison, the covariance baseline has mean and median absolute errors of 356.5 $\mu$m and 330.0 $\mu$m, respectively. The tapered cumulant estimate is closer to the ground truth in 99 of the 176 paired samples, with 67 ties and 10 worse cases. Its localization performance remains stable across a range of degradation conditions, suggesting that the adaptively tapered third-order score can retain useful spatial information even when the observed waveforms are substantially degraded.

The selected cell-condition example in Figure~\ref{fig:spe1-main}(c) illustrates this behavior. For cell c28 under degradation condition D11, the ground-truth unit depth is 2373.4 $\mu$m, indicated by the dotted vertical line. The tapered cumulant score attains its maximum at 2380 $\mu$m, corresponding to an absolute localization error of only 6.6 $\mu$m. In the same example, the covariance score selects 2700 $\mu$m, corresponding to an error of 326.6 $\mu$m. The score curve therefore remains closely aligned with the true unit location despite the spatial degradation.

\begin{figure}[ht]
\centering
\includegraphics[width=\linewidth]{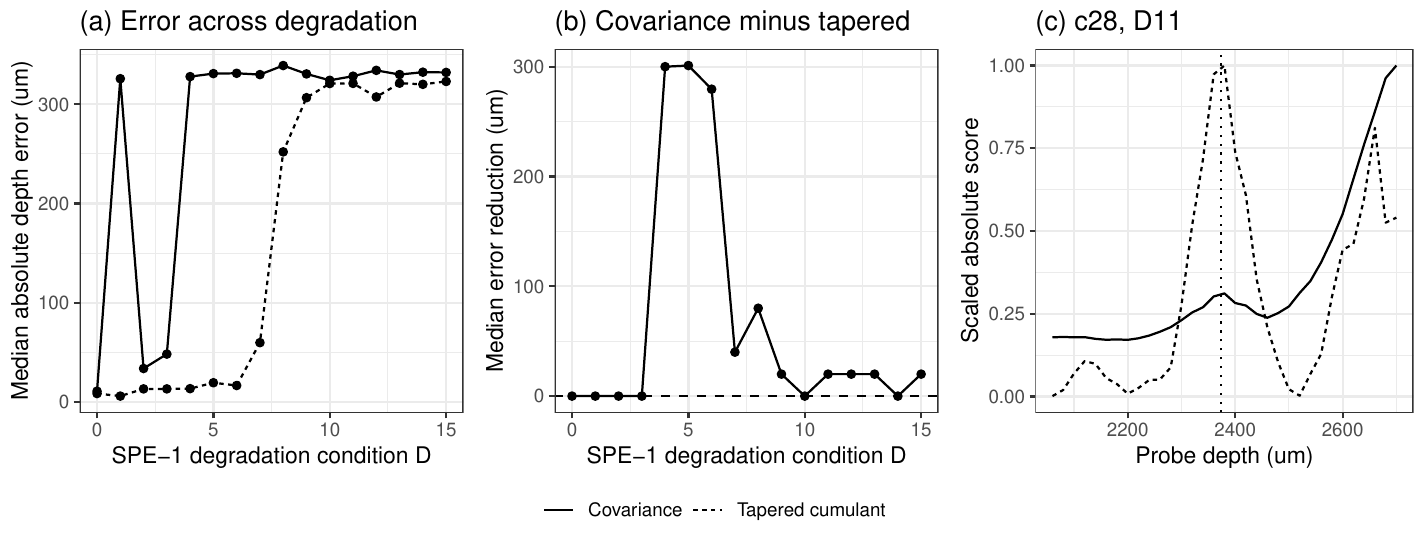}
\caption{SPE-1 localization experiment.}
\label{fig:spe1-main}
\end{figure}

\section{Discussion}
\label{sec:discussion}

This paper develops a bandability framework for estimating higher-order cumulant tensors in ordered high-dimensional problems.
The main idea is that, in many non-Gaussian settings, higher-order dependence is strongest among nearby coordinates and weakens with distance.
Using the diameter of a tensor index to measure this distance leads to a natural extension of bandable covariance models to higher-order cumulants.
The proposed tapered estimator exploits this structure by localizing the empirical cumulant tensor, which improves statistical stability and avoids forming the full $p^d$ array.

The theory gives a bias--variance characterization under tensor spectral norm.
The upper bound separates tapering bias from stochastic error, while the lower bound shows that the same tradeoff is unavoidable over the bandable cumulant class.
Thus, when the effective bandwidth is much smaller than the ambient dimension, the structured estimator can be substantially more accurate than an unstructured cumulant estimator.

The estimator also provides a regularized input for cumulant-based methods.
The autoregressive and moving-average examples show how tapered cumulants can
lead to more stable estimating equations in non-Gaussian time-series models,
and the matched-filter example shows that a tapered third-order score can
localize non-Gaussian sources in spatial sensor arrays more accurately than a
covariance-based score.

Several questions remain open.
One direction is to relax the assumption that the coordinate ordering is known.
Another is to extend the concentration theory from independent observations to dependent time-series data.
It would also be useful to develop adaptive bandwidth-selection theory and scalable proxies for tensor spectral norms in large problems.

\bibliography{reference}
\bibliographystyle{plain}

\appendix

% \crefalias{section}{appendix} % uncomment if you are using cleveref
\newpage

% \section{Additional Figures}

\section{Upper Bounds for Bandable Moment Tensors}

\begin{definition}[Bandable moment tensor]
For $\alpha>0$ and $\beta>0$, we say that an order-$d$ moment tensor $\bcM_d$ is $(\alpha,\beta)$-bandable if
\[
    |(\bcM_d)_{\bi}|
    \le
    \beta\left(1+\operatorname{diam}(\bi)\right)^{-(\alpha+d-1)}
\]
for all $\bi=(i_1,\ldots,i_d)\in[p]^d$.
We denote the corresponding class by
\[
    \mathcal F_{\alpha,\beta}^d
    =
    \left\{
    \bcM_d\in(\R^p)^{\otimes d}:
    |(\bcM_d)_{\bi}|
    \le
    \beta\left(1+\operatorname{diam}(\bi)\right)^{-(\alpha+d-1)},
    \ \forall \bi \in[p]^d
    \right\}.
\]
\end{definition}

Define the tapered estimator as
\[
    (\widehat{\bcM}_{d,T,k})_{\bi}
    =
    w_k\!\left(\operatorname{diam}(\bi)\right)
    (\widehat{\bcM}_d)_{\bi}. 
\]

\begin{theorem}[Upper bound for the tapered moment estimator]
\label{thm:tapered-moment-upper-bound}
Let $d\ge2$ be fixed, and let $X_1,\ldots,X_n$ be i.i.d. copies of a centered random vector $X\in\R^p$.
Let $\bcM_d = \E X^{\otimes d}$. 
Suppose that $\bcM_d\in\mathcal F_{\alpha,\beta}^d$ and, for some $0<\gamma\le d$, 
$
    \sup_{u\in\mathbb S^{p-1}}
    \|u^\top X\|_{\Phi_\gamma}
    \le
    K.
$
Then for every integer $2\le k\le p$ and every $x\ge0$, with probability at least $1-Ce^{-x}$,
\[
    \left\|
    \widehat{\bcM}_{d,T,k}
    -
    \bcM_d
    \right\|
    \le
    C_{\alpha,d}\beta k^{-\alpha-d/2+1}
    +
    C_{\gamma,d,K}
    \left\{
    \sqrt{\frac{k+\log p+x}{n}}
    +
    \frac{(k+\log p+x)^{d/\gamma}}{n}
    \right\}.
\]
Consequently,
\[
    \E
    \left\|
    \widehat{\bcM}_{d,T,k}
    -
    \bcM_d
    \right\|
    \le
    C_{\alpha,d}\beta k^{-\alpha-d/2+1}
    +
    C_{\gamma,d,K}
    \left\{
    \sqrt{\frac{k+\log p}{n}}
    +
    \frac{(k+\log p)^{d/\gamma}}{n}
    \right\}.
\]
\end{theorem}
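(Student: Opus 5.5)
We split the error exactly as in Section~\ref{subsec:tapered-estimator}:
\[
\widehat{\bcM}_{d,T,k}-\bcM_d=\mathcal W_k*(\widehat{\bcM}_d-\bcM_d)+(\mathcal W_k-\mathbf 1)*\bcM_d,
\]
and bound the two terms separately. The bias term is deterministic. A tensor spectral norm is bounded by the generalized Schur test
\[
\|\mathcal A\|\le \max_{a\in[d]}\max_{i_a}\sum_{\bi_{-a}}|\mathcal A_{\bi}|,
\]
which follows from $|\langle\mathcal A,u_1\otimes\cdots\otimes u_d\rangle|\le\sum_{\bi}|\mathcal A_{\bi}|\prod_a|u_{a,i_a}|$ and the AM--GM/H\"older bound $\prod_a|u_{a,i_a}|\le \tfrac1d\sum_a |u_{a,i_a}|^2$ in symmetric form. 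However, Schur alone gives $\beta\sum_{D>h_k}D^{d-2}(1+D)^{-(\alpha+d-1)}\asymp\beta k^{-\alpha}$, which misses the extra factor $k^{-(d-2)/2}$. To recover the claimed rate $k^{-\alpha-d/2+1}$, we instead treat the bias as in bandable covariance problems. We decompose $(\mathcal W_k-\mathbf 1)*\bcM_d$ into dyadic shells $\{2^jh_k<\operatorname{diam}\le 2^{j+1}h_k\}$. On each shell at scale $m=2^jh_k$ we bound the spectral norm by a weighted H\"older argument: fix one coordinate $i_1$, and note that the other $d-1$ indices range over a window of size $\asymp m$. Cauchy--Schwarz over $d-2$ of these indices, paired with the unit vectors, gains a factor $m^{(d-2)/2}$ relative to the $\ell_1$ count $m^{d-2}$. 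This yields a shell norm $\lesssim\beta m\cdot m^{(d-2)/2}m^{-(\alpha+d-1)}=\beta m^{-\alpha-d/2+1}$, and summing the geometric series in $j$ gives $C_{\alpha,d}\beta k^{-\alpha-d/2+1}$. I will verify the shell exponent carefully, since the whole point of the $\alpha+d-1$ normalization is that this calculation lands exactly on $-\alpha-d/2+1$.

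For the stochastic term, we use the standard tapering-to-blocks reduction of Cai--Zhang--Zhou, lifted to tensors. Write $w_k(\operatorname{diam}(\bi))$ as an average of block indicators: for $m\in\{h_k,k\}$-type windows, the taper equals a normalized difference of sums $\sum_{s}\mathbf I\{\bi\in[s,s+m)^d\}$ over sliding windows of lengths $k$ and $h_k$. Since $\operatorname{diam}(\bi)=\max-\min$, the number of length-$m$ windows containing $\bi$ is $(m-\operatorname{diam}(\bi))_+$. Hence $w_k(\operatorname{diam}\,\bi)=\frac{1}{k-h_k}\bigl[(k-\operatorname{diam})_+-(h_k-\operatorname{diam})_+\bigr]$, which matches \eqref{eq:taper-estimator-w}. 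The spectral norm of a sum of window restrictions is handled by a further grouping: windows starting at $s\equiv r \pmod{2m}$ are disjoint cubes, so their block-diagonal sum has norm equal to the maximum block norm. This gives
\[
\|\mathcal W_k*(\widehat{\bcM}_d-\bcM_d)\|\le C\max_{m\in\{h_k,k\}}\max_{s}\|(\widehat{\bcM}_d-\bcM_d)_{[s,s+m)^d}\|.
\]
The block-diagonal norm claim holds because the supremum over unit $u_a$ splits into block pieces, and $\sum_b\prod_a\|u_{a,b}\|\le1$ by H\"older.

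It then remains to bound, for each of at most $p$ index windows $S$ of size $\le k$, the sub-tensor $\frac1n\sum_\ell (X_{\ell,S})^{\otimes d}-\E X_S^{\otimes d}$, where $X_S\in\R^{|S|}$ inherits the $\Phi_\gamma$ bound $K$. This is the unstructured moment-tensor concentration in dimension $|S|\le k$. We use an $\epsilon$-net of $(\mathbb S^{|S|-1})^d$ of cardinality $e^{Cdk}$, and for each net point apply a Bernstein-type inequality for sums of i.i.d.\ variables $\prod_a u_a^\top X_S$, which have $\Phi_{\gamma/d}$ norm $\le K^d$ by H\"older. The $\Phi_{\gamma/d}$ Bernstein bound (Adamczak/Kuchibhotla--Chakrabortty) gives deviations of order $\sqrt{t/n}+t^{d/\gamma}/n$ with probability $1-e^{-t}$. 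Taking $t=C(k+\log p+x)$ and a union bound over the net and the $p$ windows yields the claimed stochastic term with probability $1-Ce^{-x}$. The net discretization is absorbed by the usual multilinear argument $\|\mathcal A\|\le(1-d\epsilon)^{-1}\max_{\text{net}}$. The expectation bound follows by integrating the tail in $x$; the $x^{d/\gamma}$ term is integrable, and constants depend on $d,\gamma$.

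The main obstacle is the bias shell estimate. A naive $\ell_1$ Schur bound only gives $k^{-\alpha}$, so the Cauchy--Schwarz pairing has to be arranged to exploit that a shell at scale $m$ is a union of $\asymp p/m$ overlapping cubes of side $2m$. If the direct H\"older computation does not close, my fallback is to cover each shell by cubes of side $2m$ grouped into $O(1)$ block-diagonal families as above. Each cube's spectral norm is then bounded by its Frobenius norm, $\beta m^{-(\alpha+d-1)}\cdot m^{d/2}=\beta m^{-\alpha-d/2+1}$, which gives exactly the desired rate.
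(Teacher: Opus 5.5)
Your proposal is correct and follows the paper's proof essentially step by step. The overlap is in both halves:
\begin{itemize}
\item Bias: you use the same bias/stochastic split and dyadic diameter shells, with each shell bounded by $C_d\,a\,m^{d/2}$. Your fallback of two shifted block-diagonal families, with a Frobenius bound on each cube, is exactly the paper's local-support lemma, provided the cubes have side about $4m$ for shells of diameter up to $2m$.
\item Stochastic term: you use the same identity $w_k(D)=\{(k-D)_+-(h_k-D)_+\}/(k-h_k)$, the residue-class grouping into disjoint windows, the block-diagonal norm lemma, and a net argument with the $\Phi_{\gamma/d}$ Bernstein bound and a union bound. The expectation bound then follows by tail integration, as in the paper.
\end{itemize}
Your primary Cauchy--Schwarz route for the shells also closes, and it is slightly more elementary than the paper's lemma. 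For a tensor $\mathcal A$ supported on $\operatorname{diam}(\bi)\le m$ with entries bounded by $a$, bound the $(i_1,i_2)$ band by the banded-matrix estimate $2m+1$. Then bound each of the remaining $d-2$ window sums $\sum_{|i_j-i_1|\le m}|u_{j,i_j}|$ by $\sqrt{2m+1}$. Together these give $\|\mathcal A\|\le a(2m+1)^{d/2}$ directly.
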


\section{A Further Plug-in Application: Tapered Bispectrum Estimation}
\label{subsec:tapered-bispectrum}

Section~\ref{subsec:regularized-plugin} describes a broader plug-in perspective: many procedures for non-Gaussian processes begin by estimating third-order cumulants and then use these estimates in a downstream task. This section gives one concrete example through bispectrum estimation. For a non-Gaussian stationary process, the bispectrum is the frequency-domain representation of the third-order lag cumulants and can capture nonlinear dependence and phase information that is not available from the autocovariance or the ordinary power spectrum \citep{nikias1987bispectrum,mendel1991tutorial,nikias1993signal,nikias1993higher,zhang2018asymptotic}. The proposed tapered cumulant estimator can therefore be inserted before the Fourier transform, yielding a regularized lag-window bispectrum estimator.

Let $\{Y_t\}_{t\in\mathbb Z}$ be a centered stationary scalar time series. For lags $h_0,h_1,h_2\in\mathbb Z$, write $\kappa_3(h_0,h_1,h_2)$
for the joint third-order cumulant of $Y_{t-h_0},Y_{t-h_1},Y_{t-h_2}$. Since the process is centered, this equals
\[
    \kappa_3(h_0,h_1,h_2)
    =
    \E\left[
        Y_{t-h_0}Y_{t-h_1}Y_{t-h_2}
    \right].
\]
When the cumulants are absolutely summable, the bispectrum is defined by the two-dimensional Fourier transform
\[
    f_3(\omega_1,\omega_2)
    =
    \frac{1}{(2\pi)^2}
    \sum_{h_1,h_2\in\mathbb Z}
    \gamma(h_1,h_2)
    \exp\{-i(h_1\omega_1+h_2\omega_2)\},
    \qquad
    (\omega_1,\omega_2)\in[-\pi,\pi]^2,
\]
where $\gamma(h_1,h_2)=\kappa_3(0,h_1,h_2)$. 
For Gaussian time series, all third-order cumulants vanish, and hence $f_3\equiv0$. Therefore a nonzero bispectrum is a direct diagnostic for non-Gaussian and asymmetric temporal dependence.

Our tapered cumulant estimator gives a natural regularized estimator of the bispectrum. Given empirical third-order lag cumulants $\widehat\kappa_3(0,h_1,h_2)$, define
\[
    \widehat\gamma_k(h_1,h_2)
    =
    w_k\!\left(\operatorname{diam}(0,h_1,h_2)\right)
    \widehat\kappa_3(0,h_1,h_2),
\]
where
\[
    \operatorname{diam}(0,h_1,h_2)
    =
    \max\{|h_1|,|h_2|,|h_1-h_2|\}.
\]
The tapered bispectrum estimator is then
\[
    \widehat f_{3,k}(\omega_1,\omega_2)
    =
    \frac{1}{(2\pi)^2}
    \sum_{h_1,h_2}
    \widehat\gamma_k(h_1,h_2)
    \exp\{-i(h_1\omega_1+h_2\omega_2)\},
\]
where the sum is taken over the available lag region. The taper keeps cumulants whose lag indices are close to the main diagonal and shrinks cumulants corresponding to widely separated time points. 

The error of the tapered bispectrum can be decomposed into a stochastic cumulant-estimation error and a tapering bias. Formally,
\[
\begin{split}
    \left|
    \widehat f_{3,k}(\omega_1,\omega_2)
    -
    f_3(\omega_1,\omega_2)
    \right|
    &\le
    \frac{1}{(2\pi)^2}
    \sum_{h_1,h_2}
    w_k\!\left(\operatorname{diam}(0,h_1,h_2)\right)
    \left|
    \widehat\kappa_3(0,h_1,h_2)
    -
    \kappa_3(0,h_1,h_2)
    \right|  \\
    &\quad+
    \frac{1}{(2\pi)^2}
    \sum_{h_1,h_2}
    \left\{1-w_k\!\left(\operatorname{diam}(0,h_1,h_2)\right)\right\}
    \left|
    \gamma(h_1,h_2)
    \right|.
\end{split}
\]
The first term is controlled by the accuracy of the empirical third-order cumulants on the retained lag region. The second term is the tapering bias; it includes both the transition region $\lfloor k/2\rfloor<\operatorname{diam}(0,h_1,h_2)<k$ and the tail region $\operatorname{diam}(0,h_1,h_2)\ge k$. This bias is small when the third-order cumulants are bandable or summable. Therefore, the proposed tapering scheme provides a direct way to regularize bispectrum estimation in settings where raw high-order cumulant estimates are unstable at large lags.

\section{Spectral-Norm versus Entrywise Control for Cumulant Yule--Walker Estimation}
\label{sec:supp-yw-spectral-entrywise}

This section compares tensor spectral-norm control with entrywise control for the growing-order cumulant Yule--Walker system.

Let $\bcE$ be an order-$d$ tensor perturbation.
For $\mathcal H=\{h^{(1)},\ldots,h^{(m)}\}$ and working order $r$, define
\[
    \mathcal T_b(\bcE)_a
    =
    \bcE_{1,h_1^{(a)}+1,\ldots,h_{d-1}^{(a)}+1},
\]
and
\[
    \mathcal T_A(\bcE)_{aj}
    =
    \bcE_{j+1,h_1^{(a)}+1,\ldots,h_{d-1}^{(a)}+1},
    \qquad
    a=1,\ldots,m,
    \quad
    j=1,\ldots,r.
\]
Thus, for $\bcE=\widehat{\bcK}-\bcK_d$,
\[
    \widehat b-b=\mathcal T_b(\bcE),
    \qquad
    \widehat A-A=\mathcal T_A(\bcE).
\]
Define the maximum error over the selected entries by
\[
\begin{split}
    \|\bcE\|_{\max,\mathcal H,r}
    =
    \max
    \Bigg\{
    &
    \max_{1\leq a\leq m}
    \left|
    \bcE_{1,h_1^{(a)}+1,\ldots,h_{d-1}^{(a)}+1}
    \right|,
    \\
    &
    \max_{\substack{1\leq a\leq m\\1\leq j\leq r}}
    \left|
    \bcE_{j+1,h_1^{(a)}+1,\ldots,h_{d-1}^{(a)}+1}
    \right|
    \Bigg\},
\end{split}
\]
and let
\[
    s_{A,r}
    =
    \sigma_{\min}(A).
\]

\begin{lemma}[Entrywise perturbation bound]
\label{lem:ar-yw-entrywise-bound}
Let
\[
    \delta
    =
    \|\bcE\|_{\max,\mathcal H,r}.
\]
Then
\[
    \|\mathcal T_b(\bcE)\|_2
    \leq
    \sqrt m\,\delta,
    \qquad
    \|\mathcal T_A(\bcE)\|
    \leq
    \sqrt{mr}\,\delta,
\]
and, for every $v\in\mathbb R^r$,
\[
    \|\mathcal T_A(\bcE)v\|_2
    \leq
    \sqrt m\,\delta\|v\|_1.
\]
If $s_{A,r}>0$ and
\[
    \sqrt{mr}\,\delta
    \leq
    \frac{s_{A,r}}{2},
\]
then $\widehat A$ has full column rank and
\begin{equation}
\label{eq:ar-yw-entrywise-coefficient-bound}
\begin{split}
    \left\|
    \widehat\phi^{(r)}
    -
    \phi^{(r)}
    \right\|_2
    \leq
    \frac{2}{s_{A,r}}
    \left[
    \sqrt m
    \left\{
    1+
    \min
    \left(
    \sqrt r\,\|\phi^{(r)}\|_2,
    \|\phi^{(r)}\|_1
    \right)
    \right\}
    \delta
    +
    \|q_r\|_2
    \right].
\end{split}
\end{equation}
\end{lemma}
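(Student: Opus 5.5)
The argument splits into two parts: entrywise-to-norm bounds for the linear maps $\mathcal T_b,\mathcal T_A$, and a standard least-squares perturbation argument that reuses the structure of the proof of Theorem~\ref{thm:cumulant-yule-walker}.

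\textbf{Step 1: the three norm bounds.} The vector $\mathcal T_b(\bcE)$ has $m$ entries, each at most $\delta$ in absolute value, so $\|\mathcal T_b(\bcE)\|_2\le\sqrt m\,\delta$. The matrix $\mathcal T_A(\bcE)$ is $m\times r$ with entries bounded by $\delta$. Its operator norm is therefore at most its Frobenius norm, which is at most $\sqrt{mr}\,\delta$. For the $\ell_1$ bound, I would write $\mathcal T_A(\bcE)v=\sum_j v_j\,\mathcal T_A(\bcE)_{\cdot j}$. Each column has $\ell_2$ norm at most $\sqrt m\,\delta$, and the triangle inequality then gives $\sqrt m\,\delta\|v\|_1$.

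\textbf{Step 2: rank stability.} Set $E_A=\widehat A-A=\mathcal T_A(\bcE)$. By Weyl's inequality for singular values, $\sigma_{\min}(\widehat A)\ge s_{A,r}-\|E_A\|\ge s_{A,r}-\sqrt{mr}\,\delta\ge s_{A,r}/2>0$. Hence $\widehat A$ has full column rank, and its pseudoinverse $\widehat A^{+}=(\widehat A^\top\widehat A)^{-1}\widehat A^\top$ satisfies $\|\widehat A^{+}\|\le 2/s_{A,r}$. This is the only place the $\sqrt r$ factor is forced, which is exactly the contrast with the spectral-norm route.

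\textbf{Step 3: error identity.} Since $\widehat A^{+}\widehat A=I_r$, we have $\widehat\phi^{(r)}-\phi^{(r)}=\widehat A^{+}(\widehat b-\widehat A\phi^{(r)})$. Using $b=A\phi^{(r)}+q_r$ from \eqref{eq:ar-truncated-linear-system}, the residual becomes
\[
\widehat b-\widehat A\phi^{(r)}=\mathcal T_b(\bcE)-\mathcal T_A(\bcE)\phi^{(r)}+q_r .
\]
Taking norms gives
\[
\|\widehat\phi^{(r)}-\phi^{(r)}\|_2\le\frac{2}{s_{A,r}}\Bigl[\sqrt m\,\delta+\|\mathcal T_A(\bcE)\phi^{(r)}\|_2+\|q_r\|_2\Bigr].
\]
The middle term is bounded in two ways. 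The operator bound from Step 1 gives $\sqrt{mr}\,\delta\|\phi^{(r)}\|_2$, and the $\ell_1$ bound gives $\sqrt m\,\delta\|\phi^{(r)}\|_1$. Taking the minimum of the two yields \eqref{eq:ar-yw-entrywise-coefficient-bound}.

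I expect no genuine obstacle; the only point needing care is bookkeeping. The index maps must hit exactly the entries used in $\delta$, namely first index $1$ for $b$ and $j+1$ for $A$. The minimum of the two bounds must be taken only on the $\phi$ term, not on the rank condition.
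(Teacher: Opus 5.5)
Your proposal is correct and follows the paper's proof step for step. Both arguments use Frobenius bounds for $\mathcal T_b(\bcE)$ and $\mathcal T_A(\bcE)$, then Weyl's inequality to get $\sigma_{\min}(\widehat A)\ge s_{A,r}/2$ and hence $\|\widehat A^{+}\|\le 2/s_{A,r}$. Both then use the identity $\widehat\phi^{(r)}-\phi^{(r)}=\widehat A^{+}\{\mathcal T_b(\bcE)-\mathcal T_A(\bcE)\phi^{(r)}+q_r\}$ and take the minimum of the two bounds on the $\phi^{(r)}$ term only. The one cosmetic difference is in the $\ell_1$ bound: you argue column-wise with the triangle inequality, while the paper bounds each row inner product by $\delta\|v\|_1$ and sums over the $m$ rows, which is equivalent.
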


\begin{proof}
The first two inequalities follow from
\[
    \|\mathcal T_b(\bcE)\|_2^2
    \leq
    m\delta^2,
    \qquad
    \|\mathcal T_A(\bcE)\|_{\mathrm F}^2
    \leq
    mr\delta^2.
\]
Moreover, for each $a$,
\[
    \left|
    \mathcal T_A(\bcE)_{a,\cdot}v
    \right|
    \leq
    \delta\|v\|_1,
\]
which gives
\[
    \|\mathcal T_A(\bcE)v\|_2
    \leq
    \sqrt m\,\delta\|v\|_1.
\]
Weyl's inequality implies
\[
    \sigma_{\min}(\widehat A)
    \geq
    s_{A,r}
    -
    \|\mathcal T_A(\bcE)\|
    \geq
    \frac{s_{A,r}}{2}.
\]
Hence $\|\widehat A^\dagger\|\leq 2/s_{A,r}$.
Using
\[
    \widehat\phi^{(r)}-\phi^{(r)}
    =
    \widehat A^\dagger
    \left[
    \mathcal T_b(\bcE)
    -
    \mathcal T_A(\bcE)\phi^{(r)}
    +
    q_r
    \right]
\]
and the bounds above proves \eqref{eq:ar-yw-entrywise-coefficient-bound}.
\end{proof}

Tensor spectral-norm control gives
\begin{equation}
\label{eq:ar-yw-extraction-comparison}
    \|\mathcal T_b(\bcE)\|_2
    \leq
    \sqrt m\,\|\bcE\|,
    \qquad
    \|\mathcal T_A(\bcE)\|
    \leq
    \sqrt m\,\|\bcE\|.
\end{equation}
Indeed, the $a$th row satisfies
\[
\begin{split}
    \|\mathcal T_A(\bcE)_{a,\cdot}\|_2
    &=
    \sup_{v\in\mathbb S^{r-1}}
    \left|
    \left\langle
    \bcE,
    \widetilde v
    \otimes
    e_{h_1^{(a)}+1}
    \otimes\cdots\otimes
    e_{h_{d-1}^{(a)}+1}
    \right\rangle
    \right|
    \\
    &\leq
    \|\bcE\|,
\end{split}
\]
where $\widetilde v=\sum_{j=1}^rv_je_{j+1}$.
Consequently, if
\[
    \sqrt m\,\|\bcE\|
    \leq
    \frac{s_{A,r}}{2},
\]
then
\begin{equation}
\label{eq:ar-yw-spectral-coefficient-bound-supp}
    \left\|
    \widehat\phi^{(r)}
    -
    \phi^{(r)}
    \right\|_2
    \leq
    \frac{2}{s_{A,r}}
    \left[
    \sqrt m
    \left(
    1+\|\phi^{(r)}\|_2
    \right)
    \|\bcE\|
    +
    \|q_r\|_2
    \right].
\end{equation}

The corresponding rank-stability conditions are therefore
\[
    \sqrt m\,\|\bcE\|
    \leq
    \frac{s_{A,r}}{2}
\]
under spectral-norm control and
\[
    \sqrt{mr}\,
    \|\bcE\|_{\max,\mathcal H,r}
    \leq
    \frac{s_{A,r}}{2}
\]
under entrywise control.
Thus spectral-norm control avoids the worst-case $\sqrt r$ accumulation arising when entrywise errors are converted to operator norm.

For the raw sample cumulant, let
\[
    \delta_{n,x}^{\mathrm{raw}}
    =
    \left\|
    \widehat{\bcK}_d-\bcK_d
    \right\|_{\max,\mathcal H,r}.
\]
Also, there are
\[
    N_r=m(r+1)
\]
selected entries.
A direct union bound typically gives
\[
    \delta_{n,x}^{\mathrm{raw}}
    \lesssim
    \eta_{N_r,x}
    (1+\eta_{N_r,x})^{d-1},
\]
where
\[
    \eta_{N_r,x}
    =
    \sqrt{
    \frac{\log N_r+x}{n}
    }
    +
    \frac{
    (\log N_r+x)^{d/\gamma}
    }{
    n
    }.
\]
Under $\sum_{j\geq1}|\phi_j|<\infty$, the coefficient error, conditional on rank stability, is therefore
\begin{equation*}
\label{eq:ar-yw-entrywise-leading-rate}
    O_{\mathbb P}
    \left[
    \frac{\sqrt m}{s_{A,r}}
    \left\{
    \sqrt{
    \frac{
    \log\{m(r+1)\}
    }{
    n
    }
    }
    +
    \frac{
    [\log\{m(r+1)\}]^{d/\gamma}
    }{
    n
    }
    \right\}
    +
    \frac{\|q_r\|_2}{s_{A,r}}
    \right].
\end{equation*}
The corresponding rank-stability requirement is
\[
    \frac{\sqrt{mr}}{s_{A,r}}
    \left\{
    \sqrt{
    \frac{
    \log\{m(r+1)\}
    }{
    n
    }
    }
    +
    \frac{
    [\log\{m(r+1)\}]^{d/\gamma}
    }{
    n
    }
    \right\}
    \to0.
\]

For the tapered estimator, Theorem~\ref{thm:cumulant-yule-walker} gives
\begin{equation*}
\label{eq:ar-yw-spectral-leading-rate}
\begin{split}
    \left\|
    \widehat\phi_{d,k}^{(r)}
    -
    \phi^{(r)}
    \right\|_2
    \lesssim
    \frac{\sqrt m}{s_{A,r}}
    \left[
    \beta_{\mathrm{AR}}
    k^{-\alpha-d/2+1}
    +
    \Delta_{k,x}
    (1+\Delta_{k,x})^{d-1}
    \right]
    +
    \frac{\|q_r\|_2}{s_{A,r}}.
\end{split}
\end{equation*}
For sub-Gaussian observations and in the small-error regime, the tensor estimation error has order
\[
    \beta_{\mathrm{AR}}
    k^{-\alpha-d/2+1}
    +
    \sqrt{
    \frac{k+\log p}{n}
    }
    +
    \frac{
    (k+\log p)^{d/2}
    }{
    n
    }.
\]
Ignoring tapering bias and higher-order fluctuation terms, the leading rank-stability requirements are
\[
    \text{raw entrywise:}
    \qquad
    \frac{\sqrt{mr}}{s_{A,r}}
    \sqrt{
    \frac{
    \log\{m(r+1)\}
    }{
    n
    }
    },
\]
versus
\[
    \text{tapered spectral:}
    \qquad
    \frac{\sqrt m}{s_{A,r}}
    \sqrt{
    \frac{
    k+\log p
    }{
    n
    }
    }.
\]
Thus, at the level of the leading square-root terms, the tapered spectral condition is weaker when
\[
    k+\log p
    \ll
    r\log\{m(r+1)\},
\]
provided the tapering bias and higher-order fluctuation terms are negligible.
This comparison concerns sufficient conditions obtained from a union bound for the selected raw cumulant entries and a tensor spectral-norm bound for the tapered estimator; however, it does not imply that spectral-norm estimation uniformly dominates entrywise estimation.
Its principal advantage is that it controls the entire design perturbation uniformly over coefficient directions and avoids the worst-case $\sqrt r$ conversion required by entrywise-only control of rank stability.
Conditional on invertibility, however, the $\ell_1$ refinement in Lemma~\ref{lem:ar-yw-entrywise-bound} shows that the entrywise coefficient bound need not contain an explicit $\sqrt r$ factor.
The advantage is therefore most relevant for growing working order when invertibility must be established from the same cumulant estimation bound.

\section{Additional Details and Results for the Real Data Analyses}

\subsection{RR Interval Preprocessing and Cumulant Yule--Walker Implementation}
\label{subsec:rr-implementation-supp}

This subsection gives the preprocessing and implementation details for Section~\ref{subsec:rr_interval_analysis}.
Let $X_{i,t}$ denote the raw RR interval for subject $i$, where $i=1,\ldots,n$ and $t$ indexes beats within subject.
The implementation first removes physiologically implausible intervals and large local jumps on the logarithmic scale.
Let $\mathcal I_i$ be the retained ordered index set after this deterministic filtering step, and write the retained log intervals as
\[
    U_{i,s}
    =
    \log X_{i,t_s},
    \qquad
    t_s\in\mathcal I_i,\quad s=1,\ldots,T_i.
\]
Each subject is centered and standardized separately,
\[
    Y_{i,s}
    =
    \frac{U_{i,s}-\bar U_i}{\widehat\sigma_i},
    \qquad
    \bar U_i
    =
    T_i^{-1}\sum_{s=1}^{T_i} U_{i,s},
    \qquad
    \widehat\sigma_i^2
    =
    T_i^{-1}\sum_{s=1}^{T_i}(U_{i,s}-\bar U_i)^2 .
\]

For lags $h_0,h_1,h_2\ge0$, let $L(h)=\max(h_0,h_1,h_2)$.
The subject-level centered third cumulant is estimated by
\[
    \widehat\kappa_{3,i}(h_0,h_1,h_2)
    =
    \frac{1}{T_i-L(h)}
    \sum_{s=L(h)+1}^{T_i}
    Y_{i,s-h_0}Y_{i,s-h_1}Y_{i,s-h_2}.
\]
The aggregated cumulant gives equal weight to each subject,
\[
    \widehat\kappa_3(h_0,h_1,h_2)
    =
    \frac1n
    \sum_{i=1}^n
    \widehat\kappa_{3,i}(h_0,h_1,h_2).
\]
For bandwidth $k$, the tapered version is
\[
    \widehat\kappa_{3,k}(h_0,h_1,h_2)
    =
    w_k\{\operatorname{diam}(h_0,h_1,h_2)\}
    \widehat\kappa_3(h_0,h_1,h_2),
\]
where $w_k(\cdot)$ is the taper weight in \eqref{eq:taper-estimator-w} and $\operatorname{diam}(h_0,h_1,h_2)=\max_\ell h_\ell-\min_\ell h_\ell$.

For an AR order $r$, the RR analysis uses the lag set
\[
    \mathcal H_{\mathrm{RR}}
    =
    \{1,\ldots,8\}^2.
\]
For each $h^{(a)}=(h_1^{(a)},h_2^{(a)})\in\mathcal H_{\mathrm{RR}}$, define
\[
    \widehat b_{k,a}
    =
    \widehat\kappa_{3,k}(0,h_1^{(a)},h_2^{(a)})
\]
and
\[
    \widehat A_{k,aj}
    =
    \widehat\kappa_{3,k}(j,h_1^{(a)},h_2^{(a)}),
    \qquad
    j=1,\ldots,r.
\]
These are the empirical versions of the cumulant Yule--Walker equations in \eqref{eq:ar-cumulant-yw-equation}.
The cumulant Yule--Walker coefficient estimate is
\[
    \widehat\phi_{r,k}
    =
    (\widehat A_k^\top\widehat A_k)^{-1}
    \widehat A_k^\top\widehat b_k,
\]
whenever $\widehat A_k$ has full column rank.

The bandwidth grid is
\[
    \mathcal K
    =
    \{3,5,7,9,11,13,15\}.
\]
For $k\le k'$, the equation-level stability loss is
\[
    D(k,k')
    =
    \left(
    \|\widehat A_k-\widehat A_{k'}\|_F^2
    +
    \|\widehat b_k-\widehat b_{k'}\|_2^2
    \right)^{1/2}.
\]
The selected bandwidth is the smallest $k\in\mathcal K$ satisfying the Lepski-type stability rule in \eqref{eq:k-selection}, with $R(k,k')=D(k,k')$.

The autoregressive order is assessed by the residual diagnostic
\[
    L(r)
    =
    \frac{1}{|\mathcal H_{\mathrm{RR}}|}
    \left\|
    \widehat b_{\widehat k}
    -
    \widehat A_{\widehat k,r}\widehat\phi_{r,\widehat k}
    \right\|_2^2,
    \qquad
    r=1,\ldots,5.
\]
The main analysis uses the smallest order after which this diagnostic changes little, which gives $r=3$ for the RR interval data.

For the raw cumulant comparison, let $\widehat A_{\mathrm{raw}}$ and $\widehat b_{\mathrm{raw}}$ be constructed from the same lag pairs with the taper weights replaced by one. The raw cumulant Yule--Walker estimator is
\[
    \widehat\phi_r^{\mathrm{raw}}
    =
    (\widehat A_{\mathrm{raw}}^\top\widehat A_{\mathrm{raw}})^{-1}
    \widehat A_{\mathrm{raw}}^\top\widehat b_{\mathrm{raw}}.
\]
Thus, the raw and tapered cumulant estimators differ only in whether the empirical third-order cumulants entering the same estimating equations are tapered.

For the ordinary covariance Yule--Walker comparison, the subject-level autocovariance estimate is
\[
    \widehat\gamma_i(h)
    =
    \frac{1}{T_i-h}
    \sum_{s=h+1}^{T_i}
    Y_{i,s}Y_{i,s-h},
    \qquad
    h\ge0,
\]
and the aggregated autocovariance is $\widehat\gamma(h)=n^{-1}\sum_{i=1}^n\widehat\gamma_i(h)$.
Let $\widehat\Gamma_r$ be the Toeplitz matrix with $(a,b)$ entry $\widehat\gamma(|a-b|)$ and let $\widehat g_r=(\widehat\gamma(1),\ldots,\widehat\gamma(r))^\top$.
The covariance Yule--Walker estimator is
\[
    \widehat\phi^{\mathrm{cov}}_r
    =
    \widehat\Gamma_r^{-1}\widehat g_r .
\]
The coefficient intervals in Figure~\ref{fig:rr_coefficients} are computed by resampling subjects with replacement and rerunning the same preprocessing-to-estimation map on each bootstrap replicate.

The third-order portmanteau comparison follows the use of squared bicorrelations as diagnostics for nonlinear serial dependence in time-series fitting errors \citep{ashley1986diagnostic,hinich1996testing}, but it is adapted to the particular moment restrictions used here. Let $T_i^{\mathrm{tr}}=\lfloor0.8T_i\rfloor$. The first $T_i^{\mathrm{tr}}$ observations of every subject are used to compute the centering and scaling constants and to estimate common AR(3) coefficient vectors for the tapered cumulant, raw cumulant, and covariance methods. The taper bandwidth is also reselected using only these training segments. The final observations are not used at any stage of model fitting.

For method $M\in\{\mathrm{tap},\mathrm{raw},\mathrm{cov}\}$, define the one-step residual
\[
    \widehat\varepsilon_{i,t}^{M}
    =
    Y_{i,t}
    -
    \sum_{j=1}^3
    \widehat\phi_j^{M}Y_{i,t-j}.
\]
Set $H=8$ and let
\[
    \mathcal T_i^{\mathrm{ev}}
    =
    \{T_i^{\mathrm{tr}}+H+1,\ldots,T_i\}.
\]
Starting the evaluation at $T_i^{\mathrm{tr}}+H+1$ ensures that the response and every lagged variable entering the score belong to the held-out segment. For each unique lag pair $1\le h_1\le h_2\le H$, define the standardized residual--past third-order moment
\[
    \widehat c_i^{M}(h_1,h_2)
    =
    \frac{
    |\mathcal T_i^{\mathrm{ev}}|^{-1}
    \sum_{t\in\mathcal T_i^{\mathrm{ev}}}
    (\widehat\varepsilon_{i,t}^{M}-\overline{\widehat\varepsilon}_i^{M})
    (Y_{i,t-h_1}-\overline Y_{i,h_1})
    (Y_{i,t-h_2}-\overline Y_{i,h_2})
    }{
    s(\widehat\varepsilon_i^{M})
    s(Y_{i,\cdot-h_1})
    s(Y_{i,\cdot-h_2})
    },
\]
where the bars and standard deviations are computed over the common evaluation indices $\mathcal T_i^{\mathrm{ev}}$. Since a third-order cumulant equals the corresponding centered third-order product, $\widehat c_i^{M}(h_1,h_2)$ is the dimensionless empirical version of the restriction $\operatorname{cum}(\varepsilon_{i,t}^{M},Y_{i,t-h_1},Y_{i,t-h_2})=0$.

The subject-level score is
\[
    Q_i^{M}
    =
    \frac{2}{H(H+1)}
    \sum_{1\le h_1\le h_2\le H}
    \{\widehat c_i^{M}(h_1,h_2)\}^2.
\]
The restriction $h_1\le h_2$ removes duplicate equations implied by symmetry of the third-order cumulant. The score equally weights the resulting $H(H+1)/2=36$ distinct restrictions. At diameter $D=\max(h_1,h_2)$, there are exactly $D$ unique lag pairs, so the total contribution of diameter $D$ grows linearly with $D$ without introducing an additional data-tuned weighting exponent. This gives the noisier large-diameter restrictions greater aggregate representation and makes the score directly relevant for assessing the regularization supplied by tapering.

Unlike the formally calibrated portmanteau tests in \cite{ashley1986diagnostic} and \cite{hinich1996testing}, $Q_i^M$ is used only as an out-of-sample comparison score; no null distribution or $p$-value is attached to it. Figure~\ref{fig:rr_third_order_portmanteau} displays the subject-level scores with lines pairing the three methods within subject, together with the across-subject means and 95\% $t$ intervals. The implementation also reports paired subject-bootstrap 95\% intervals for each difference in mean score.

\subsection{Air Quality Preprocessing and Moving-Average Cumulant Implementation}
\label{subsec:ma-air-implementation-supp}

This subsection gives the preprocessing and implementation details for Section~\ref{subsec:ma_air_quality}.
Let $X_t$ denote the raw hourly \texttt{PT08.S1(CO)} sensor response after removing observations with missing sensor values.
The logarithmic series is
\[
    V_t
    =
    \log(X_t).
\]
We remove calendar and low-frequency components by the regression
\[
    V_t
    =
    \alpha
    +
    \sum_{a=1}^{23}
    \gamma_a
    \mathbf 1\{\operatorname{hour}(t)=a\}
    +
    \sum_{b=1}^{6}
    \delta_b
    \mathbf 1\{\operatorname{dow}(t)=b\}
    +
    f(t)
    +
    R_t,
\]
where $f$ is a natural cubic spline in time with 28 degrees of freedom and $R_t$ is the fitted residual.
The analysis then uses the first-differenced residual
\[
    Y_t^\circ
    =
    R_t-R_{t-1}.
\]
To limit the influence of isolated extreme observations on the empirical third-order cumulants, the differenced residuals are winsorized at their empirical $0.001$ and $0.999$ quantiles.
Let $\widetilde Y_t^\circ$ denote the resulting winsorized series.
After centering and standardization,
\[
    Y_t
    =
    \frac{
        \widetilde Y_t^\circ
        -
        \overline{\widetilde Y^\circ}
    }{
        \widehat\sigma_Y
    },
\]
the sequence $Y_t$ is treated as one long realization of a stationary non-Gaussian time series.

For a fixed maximum lag $L$, define
\[
    Z_t^{(L)}
    =
    (Y_t,Y_{t-1},\ldots,Y_{t-L})^\top
\]
and let $\bcK_3^{(L)}$ be the third-order cumulant tensor of $Z_t^{(L)}$.
For lags $0\leq h_1,h_2\leq L$, write
\[
    \kappa_3(0,h_1,h_2)
    =
    (\bcK_3^{(L)})_{1,h_1+1,h_2+1}.
\]
Since the processed series is centered, the empirical cumulant is the time average
\[
    \widehat\kappa_3(0,h_1,h_2)
    =
    \frac{1}{T-L}
    \sum_{t=L+1}^T
    Y_tY_{t-h_1}Y_{t-h_2}.
\]

The Air Quality analysis uses the lag set
\[
    \mathcal H_{\mathrm{air}}
    =
    \left\{
        (h_1,h_2):
        0\leq h_1\leq h_2\leq h_{\max}
    \right\}
    \setminus
    \{(0,0)\},
    \qquad
    h_{\max}=10.
\]
The exclusion of $(0,0)$ prevents the minimum-distance criterion from being dominated by the marginal skewness term.

We use the tapered empirical cumulants with bandwidth $k=10$.
Because
\[
    \operatorname{diam}(0,h_1,h_2)
    =
    \max(h_1,h_2),
\]
the taper weight is
\[
    w_{10}(m)
    =
    \begin{cases}
        1,
        & 0\leq m\leq 5,\\
        \dfrac{10-m}{5},
        & 5<m<10,\\
        0,
        & m\geq 10.
    \end{cases}
\]
For $(h_1,h_2)\in\mathcal H_{\mathrm{air}}$, define
\[
    \widehat\kappa_{3,T,10}(0,h_1,h_2)
    =
    w_{10}\left(\max(h_1,h_2)\right)
    \widehat\kappa_3(0,h_1,h_2),
\]
and let $\widehat b_{T,10}$ collect these tapered empirical cumulants over $\mathcal H_{\mathrm{air}}$.

In addition to tapering the cumulants, we balance the minimum-distance equations across lag diameters.
For $r=1,\ldots,10$, let
\[
    N_r
    =
    \left|
        \left\{
            (h_1,h_2)\in\mathcal H_{\mathrm{air}}:
            \max(h_1,h_2)=r
        \right\}
    \right|.
\]
The equation corresponding to $(h_1,h_2)$ receives weight
\[
    \omega_{h_1,h_2}
    =
    \frac{1}{
        N_{\max(h_1,h_2)}
    }.
\]
Thus every lag diameter has the same total weight in the criterion.
Let $\Omega$ be the diagonal matrix containing these equation weights, and define
\[
    \langle u,v\rangle_\Omega
    =
    u^\top\Omega v,
    \qquad
    \|u\|_\Omega^2
    =
    u^\top\Omega u.
\]
The equation weights $\omega_{h_1,h_2}$ are distinct from the taper weights $w_{10}(\max(h_1,h_2))$: the former balance the minimum-distance equations, whereas the latter regularize the empirical cumulants.

For candidate moving-average order $q$, the fitted model is
\[
    Y_t
    =
    \sum_{\ell=0}^q
    \theta_\ell\varepsilon_{t-\ell},
    \qquad
    \theta_0=1,
\]
where the innovations are centered and have third cumulant $\tau_3$.
By the moving-average cumulant identity \eqref{eq:ma-cumulant-identity}, the model-implied third cumulant for $(h_1,h_2)\in\mathcal H_{\mathrm{air}}$ is
\[
    \tau_3
    \Psi_{h_1,h_2}(\theta)
    =
    \tau_3
    \sum_{s\in\mathbb Z}
    \theta_s
    \theta_{s-h_1}
    \theta_{s-h_2},
\]
where $\theta_0=1$ and $\theta_s=0$ for $s\notin\{0,\ldots,q\}$.
Let $\Psi(\theta)$ collect these entries over $\mathcal H_{\mathrm{air}}$.
For fixed $\theta$ with $\Psi(\theta)\ne0$, the innovation cumulant is profiled out as
\[
    \widehat\tau_{3,10}(\theta)
    =
    \frac{
        \left\langle
            \widehat b_{T,10},
            \Psi(\theta)
        \right\rangle_\Omega
    }{
        \|\Psi(\theta)\|_\Omega^2
    }.
\]
The moving-average coefficient estimator solves
\[
    \widehat\theta_{q,10}
    \in
    \argmin_{\theta\in\Theta_q}
    \left\|
        \widehat b_{T,10}
        -
        \widehat\tau_{3,10}(\theta)
        \Psi(\theta)
    \right\|_\Omega^2,
\]
which is the weighted tapered version of the third-order minimum-distance problem in \eqref{eq:ma-op-theta}.

The optimization is carried out by multistart L-BFGS-B with 80 initial values and componentwise box constraints $|\theta_j|\leq0.97$ \citep{byrd1995limited,nocedal2006numerical}.
The feasible set $\Theta_q$ is restricted to invertible moving-average coefficients.
In computation, parameter values for which the moving-average polynomial has a root with modulus at most $1.03$ are assigned a large penalty.

For $q=1,\ldots,10$, define the weighted tapered residual loss
\[
    L_T(q)
    =
    \frac{
        \left\|
            \widehat b_{T,10}
            -
            \widehat\tau_{3,10}(\widehat\theta_{q,10})
            \Psi(\widehat\theta_{q,10})
        \right\|_\Omega^2
    }{
        \sum_{(h_1,h_2)\in\mathcal H_{\mathrm{air}}}
        \omega_{h_1,h_2}
    }.
\]
The BIC-type order criterion is
\[
    \operatorname{IC}_T(q)
    =
    \log\left(
        L_T(q)+\epsilon
    \right)
    +
    \lambda
    \frac{
        (q+1)\log(n_{\mathrm{eff}})
    }{
        n_{\mathrm{eff}}
    },
\]
where $q+1$ counts the $q$ moving-average coefficients and the profiled innovation cumulant, $\lambda=1$, $\epsilon=10^{-12}$, and
\[
    n_{\mathrm{eff}}
    =
    \max\left(
        10,
        \left\lfloor
            \frac{T}{72}
        \right\rfloor
    \right).
\]
Here $n_{\mathrm{eff}}$ approximates the number of length-72 dependence blocks in the processed trajectory.
The selected order is the minimizer of this criterion, which gives $\widehat q=4$ in the Air Quality analysis.

The bootstrap intervals in Figure~\ref{fig:ma_air_fit_all} are based on 100 circular moving-block bootstrap replicates with block length 72.
Each bootstrap series is recentered and restandardized, after which the tapered empirical cumulants are recomputed using the same bandwidth $k=10$ and the selected MA(4) minimum-distance objective is refitted.
The reported intervals are the $2.5\%$ and $97.5\%$ empirical quantiles of the successful bootstrap estimates.

\subsection{SPE-1 Preprocessing and Localization Implementation}
\label{subsec:spe1-preprocess-supp}

This subsection gives the exact construction used in Section~\ref{subsec:spe1-spatial}.
The external preprocessing reference is the SpikeInterface-style waveform representation in the SPE-1 archive \citep{spe1figshare,buccino2020spikeinterface}; the local-grid embedding and Gaussian background described below are experiment-specific stress-test steps.

For each cell-condition $s$, let
\[
    W^{(s)}_{j,t,m},
    \qquad
    j=1,\ldots,N_s,\quad
    t=1,\ldots,T,\quad
    m=1,\ldots,q_s,
\]
denote the extracted waveform array, where $j$ indexes spike events, $t$ indexes waveform time, and $m$ indexes the sparse waveform channels.
In this experiment, $T=90$ and $q_s=18$ for all retained cell-conditions.
Let $B=\{1,\ldots,\lfloor 0.2T\rfloor\}$ be the pre-spike baseline window.
We first baseline-correct each waveform by
\[
    \widetilde W^{(s)}_{j,t,m}
    =
    W^{(s)}_{j,t,m}
    -
    |B|^{-1}\sum_{u\in B} W^{(s)}_{j,u,m}.
\]

Let $A^{(s)}_{t,r}$ be the average template stored in the archive on the full probe channel index $r$.
The SpikeInterface sparsity file maps the sparse waveform channel $m$ to a full probe channel $r_m$.
We therefore use $A^{(s)}_{t,r_m}$ when aligning sparse waveforms with full-probe geometry.
Let
\[
    (t_s^\star,m_s^\star)
    =
    \argmax_{t,m}
    |A^{(s)}_{t,r_m}|
\]
be the template peak time and reference sparse channel.
For each spike event $j$, draw $Z_j\sim \mathrm{Bernoulli}(\pi)$ independently with $\pi=0.25$.
If $Z_j=1$, set $U_j=t_s^\star$; otherwise draw $U_j$ uniformly from $B$.
The sparse spatial snapshot is
\[
    X^{\mathrm{sp}}_{j,m}
    =
    \widetilde W^{(s)}_{j,U_j,m},
    \qquad
    m=1,\ldots,q_s.
\]
The Bernoulli probability is kept away from $1/2$ so that the centered spike/baseline mixture has a nonzero third cumulant.

Next we embed the sparse channels into a local probe window.
Let $y_r$ and $x_r$ denote the vertical and horizontal coordinates of full probe channel $r$.
We select $p=64$ full probe channels in a local window around the largest template amplitudes and sort them by $(y_r,x_r)$.
Write the selected ordered full-probe indices as $r_1,\ldots,r_p$ and define
\[
    \mathcal A_s
    =
    \{a\in[p]: r_a=r_m\text{ for some sparse waveform channel }m\}.
\]
For observed local positions $a\in\mathcal A_s$, we place the corresponding sparse snapshot into the local grid.
For unobserved positions $a\notin\mathcal A_s$, we draw independent Gaussian background variables with standard deviation $1.25\,\widehat\sigma_s$, where $\widehat\sigma_s$ is the median channelwise standard deviation of the observed sparse snapshots.
Thus the local-grid observation before final scaling is
\[
    X^{\mathrm{loc}}_{j,a}
    =
    \begin{cases}
    X^{\mathrm{sp}}_{j,m}, & r_a=r_m\text{ for some }m,\\
    \eta_{j,a}, & a\notin\mathcal A_s,
    \end{cases}
    \qquad
    \eta_{j,a}\sim N(0,1.25^2\widehat\sigma_s^2).
\]
We center each local coordinate and divide all entries by one global empirical standard deviation, giving $\bar X_{j,a}$.
Finally, to stress covariance-based localization, we add independent Gaussian noise with location-dependent standard deviation
\[
    d_s^\star
    =
    y_{r_{a^\star}},
    \qquad
    a^\star
    \in
    \argmax_{a\in[p]} |y_{r_a}-c_s|,
\]
and
\[
    \sigma_a
    =
    0.75
    \left[
    1
    +
    3
    \exp\left\{
    -
    \frac{(y_{r_a}-d_s^\star)^2}{2(80)^2}
    \right\}
    \right],
\]
where $c_s$ is the ground-truth unit depth.
The final observation is
\[
    X_{j,a}
    =
    \bar X_{j,a}
    +
    \xi_{j,a},
    \qquad
    \xi_{j,a}\sim N(0,\sigma_a^2).
\]
After adding this noise, the observations are centered and globally scaled once more.
The resulting sample is $X_1,\ldots,X_{n_s}\in\mathbb R^p$, where $n_s\le 500$ is the fitting subsample size and $p=64$.

The tapered cumulant score follows the source-localization plug-in rule in \eqref{eq:spatial-location-estimator}.
For each candidate grid position $c\in[p]$, define a normalized spatial template
\[
    g_c(a)
    =
    \frac{\exp\{-|y_{r_a}-y_{r_c}|/80\}\mathbf 1\{|y_{r_a}-y_{r_c}|\le 500\}}
    {\left[\sum_{b=1}^p \exp\{-2|y_{r_b}-y_{r_c}|/80\}\mathbf 1\{|y_{r_b}-y_{r_c}|\le 500\}\right]^{1/2}}.
\]
Let $\widehat{\mathcal K}_{3,T,k}$ be the third-order tapered sample cumulant estimator in \eqref{eq:tapered-cumulant-estimator}, computed from $X_1,\ldots,X_{n_s}$ using the taper weights in \eqref{eq:taper-estimator-w}.
For a fixed $k$, the tapered matched-filter score is
\[
    \widehat Q_k(c)
    =
    \left\langle
    \widehat{\mathcal K}_{3,T,k},
    g_c^{\otimes 3}
    \right\rangle.
\]
In computation we use the equivalent range-decomposed formula
\[
    \widehat Q_k(c)
    =
    \sum_{r=0}^{p-1}
    w_k(r)\,
    \widehat S_r(c),
\]
where $\widehat S_r(c)$ is the sum of empirical third-order cumulant contributions whose three indices have diameter $r$ after weighting by $g_c$.
This avoids forming the full $p^3$ empirical cumulant tensor.
The selected bandwidth $\widehat k$ is chosen from $\{4,6,\ldots,32\}$ by the Lepski-type rule in Section~\ref{subsec:select-bandwidth}, using the same stability threshold as in the numerical implementation.
The tapered location estimate is
\[
    \widehat c_{\mathrm{tap}}
    \in
    \argmax_{c\in[p]}
    |\widehat Q_{\widehat k}(c)|.
\]
We report the estimated depth $y_{r_{\widehat c_{\mathrm{tap}}}}$ and compare it with the ground-truth depth $c_s$ from the archive.

The covariance baseline uses the same spatial templates and the same preprocessed observations.
Let
\[
    \widehat\Sigma
    =
    n_s^{-1}
    \sum_{j=1}^{n_s}
    X_jX_j^\top
\]
be the empirical covariance matrix.
The covariance matched-filter score is
\[
    \widehat Q_{\mathrm{cov}}(c)
    =
    g_c^\top \widehat\Sigma g_c
    =
    n_s^{-1}
    \sum_{j=1}^{n_s}
    (X_j^\top g_c)^2.
\]
The covariance location estimate is
\[
    \widehat c_{\mathrm{cov}}
    \in
    \argmax_{c\in[p]}
    \widehat Q_{\mathrm{cov}}(c).
\]
Both methods are therefore evaluated on the same observations, same depth grid, and same template family; the only difference is whether the matched filter uses third-order tapered cumulants or second-order covariance.

\subsection{Raw and Tapered Third-order Cumulant Scores}
\label{subsec:spe1-raw-taper-supp}

This subsection compares the proposed tapered third-order score with the corresponding untapered score on the same SPE-1 preprocessed observations.
Let $\widehat{\mathcal K}_3$ denote the ordinary plug-in third-order sample cumulant tensor computed from $X_1,\ldots,X_{n_s}$.
The raw cumulant matched-filter score is
\[
    \widehat Q_{\mathrm{raw}}(c)
    =
    \left\langle
    \widehat{\mathcal K}_3,
    g_c^{\otimes 3}
    \right\rangle.
\]
Using the range decomposition introduced in Supplement~\ref{subsec:spe1-preprocess-supp}, this can be written as
\[
    \widehat Q_{\mathrm{raw}}(c)
    =
    \sum_{r=0}^{p-1}
    \widehat S_r(c).
\]
The tapered score replaces the all-one raw weights by the taper weights in \eqref{eq:taper-estimator-w},
\[
    \widehat Q_{\mathrm{tap}}(c)
    =
    \sum_{r=0}^{p-1}
    w_{\widehat k}(r)\widehat S_r(c),
\]
where $\widehat k$ is selected by the same stability rule used in the main experiment.
Thus the raw score and tapered score differ only in how much long-range third-order interaction is used in the matched filter.

Figure~\ref{fig:spe1-raw-taper-supp}(a) shows that the two procedures have similar localization accuracy across degradation conditions.
Over the 176 cell-condition samples, the raw cumulant has mean/median absolute errors 249.7/202.7 $\mu$m, while the tapered cumulant has mean/median absolute errors 254.1/246.3 $\mu$m.
In paired comparisons, the two methods select the same grid point in 147 samples.
The tapered score is more accurate in 11 samples, and the raw score is more accurate in 18 samples.
These numbers indicate that tapering does not substantially change the aggregate localization error in this experiment.

The main advantage of tapering is computational.
The raw score requires all spatial ranges $r=0,\ldots,p-1$, while the tapered score only needs ranges with $r<\widehat k$ because $w_{\widehat k}(r)=0$ for $r\ge \widehat k$.
For the range-decomposed implementation, the raw score therefore stores approximately $p^2$ range summaries, whereas the tapered score stores approximately $p\widehat k$ range summaries.
In this experiment, $p=64$ and the median selected bandwidth is $\widehat k=12$, giving a median raw-to-tapered storage ratio of 5.3 for the range summaries.
For an explicit third-order tensor representation, the analogous comparison is $p^3$ raw tensor entries versus $O(p\widehat k^2)$ tapered entries, with a median raw-to-tapered proxy ratio of 28.4.

We also measured the score-construction cost on the selected \texttt{c28}, \texttt{D11} example.
For this sample, $n_s=354$, $p=64$, and $\widehat k=14$.
The raw score construction took 7.84 seconds and used an 11.6 MB flat score object.
The tapered construction took 2.97 seconds and used a 2.54 MB flat score object.
This corresponds to 2.6$\times$ less time and 4.6$\times$ less memory for the score-building step.

Figure~\ref{fig:spe1-raw-taper-supp}(b) shows why the tapered score can also be preferable for individual samples.
In the \texttt{c28}, \texttt{D11} example, the raw cumulant score has a large off-target peak at 2660 $\mu$m.
The ground-truth depth is 2373.4 $\mu$m, so the raw estimate has error 286.6 $\mu$m.
The tapered score suppresses the off-target peak and selects 2380 $\mu$m, giving error 6.6 $\mu$m.
Thus, in this example, tapering preserves the useful local third-order signal while removing a long-range false peak.

\begin{figure}[ht]
\centering
\includegraphics[width=.8\linewidth]{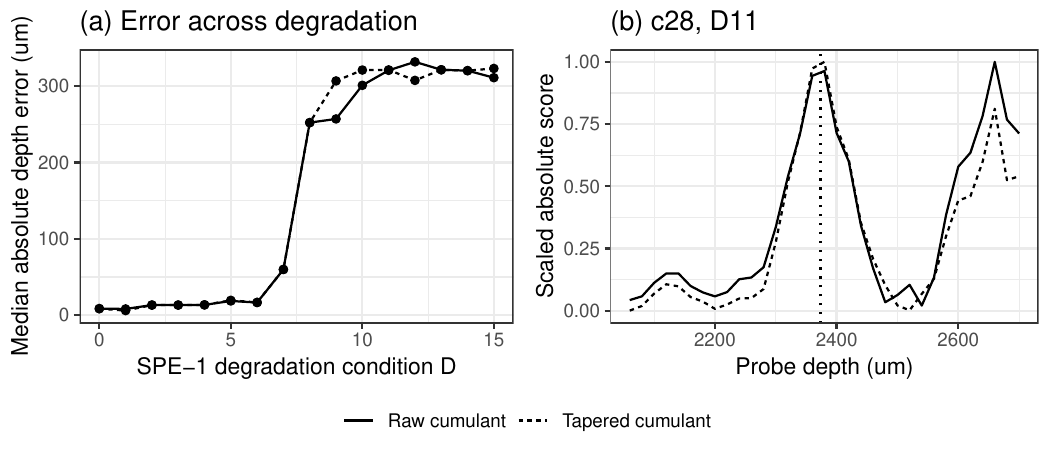}
\caption{Raw and tapered cumulant comparison.}
\label{fig:spe1-raw-taper-supp}
\end{figure}

\section{Proofs}

\subsection{Proof of Theorem \ref{thm:tapered-moment-upper-bound}}

\begin{proof}

Let $\mathcal W_k$ be the deterministic weight tensor defined by
\[
    (\mathcal W_k)_{\bi}
    =
    w_k(\operatorname{diam}(\bi)).
\]
Then
\[
    \widehat{\bcM}_{d,T,k}
    =
    \mathcal W_k*\widehat{\bcM}_d.
\]
Define
\[
    \mathcal E
    =
    \widehat{\bcM}_d
    -
    \bcM_d.
\]
By the triangle inequality,
\begin{equation}
\label{eq:tapered_triangle_decomposition}
    \left\|
    \widehat{\bcM}_{d,T,k}
    -
    \bcM_d
    \right\|
    \le
    \|\mathcal W_k*\mathcal E\|
    +
    \|(\mathcal W_k-\mathbf 1)*\bcM_d\|.
\end{equation}

We first control the bias term.
Set
\[
    \mathcal R_k
    =
    (\mathcal W_k-\mathbf 1)*\bcM_d.
\]
Let $h_k=\lfloor k/2\rfloor$. Since $w_k(m)=1$ for $0\le m\le h_k$,
\[
    (\mathcal R_k)_{\bi}=0
    \qquad
    \text{whenever }
    \operatorname{diam}(\bi)\le h_k.
\]
For $q\ge0$, define the dyadic annular tensor $\mathcal R_{k,q}$ by
\[
    (\mathcal R_{k,q})_{\bi}
    =
    (\mathcal R_k)_{\bi}
    \mathbf I
    \left\{
    2^{q-1}k
    <
    1+\operatorname{diam}(\bi)
    \le
    2^q k
    \right\},
\]
where, for $q=0$, the lower threshold $2^{q-1}k$ is interpreted as $h_k$.
Then
\[
    \mathcal R_k
    =
    \sum_{q\ge0}
    \mathcal R_{k,q}.
\]
For every $\bi$ in the support of $\mathcal R_{k,q}$,
\[
    1+\operatorname{diam}(\bi)
    >
    2^{q-1}k.
\]
Since $\bcM_d\in\mathcal F_{\alpha,\beta}^d$ and $|1-w_k(m)|\le1$,
\[
    |(\mathcal R_{k,q})_{\bi}|
    \le
    \beta
    (2^{q-1}k)^{-(\alpha+d-1)}
    \le
    C_{\alpha,d}\beta
    (2^q k)^{-(\alpha+d-1)}.
\]
Moreover, $\mathcal R_{k,q}$ is supported on entries satisfying
\[
    \operatorname{diam}(\bi)
    \le
    2^q k.
\]
Applying Lemma~\ref{lem:local-support-tensor-norm} with
\[
    m=2^q k,
    \qquad
    a=C_{\alpha,d}\beta(2^q k)^{-(\alpha+d-1)},
\]
gives
\[
\begin{split}
    \|\mathcal R_{k,q}\|
    &\le
    C_d
    \beta
    (2^q k)^{-(\alpha+d-1)}
    (2^q k)^{d/2}  \\
    &=
    C_d
    \beta
    (2^q k)^{-\alpha-d/2+1}.
\end{split}
\]
Since $\alpha+d/2-1>0$, summing over $q\ge0$ yields
\begin{equation}
\label{eq:tapered_bias_bound}
\begin{split}
    \|(\mathcal W_k-\mathbf 1)*\bcM_d\|
    &\le
    \sum_{q\ge0}
    \|\mathcal R_{k,q}\|  \\
    &\le
    C_{\alpha,d}
    \beta
    k^{-\alpha-d/2+1}.
\end{split}
\end{equation}

We now control the stochastic term.
For an integer $m\ge1$ and an integer $\ell$, define
\[
    I_{\ell,m}
    =
    \{j\in[p]:\ell\le j<\ell+m\},
    \qquad
    \ell=1-m,\ldots,p.
\]
Let $\Pi_{\ell,m}$ denote the projection that keeps only entries whose indices all belong to $I_{\ell,m}$.
For any tensor $\mathcal A\in(\R^p)^{\otimes d}$, define
\[
    S_m(\mathcal A)
    =
    \sum_{\ell=1-m}^p
    \Pi_{\ell,m}(\mathcal A).
\]
Let $a(\bi)=\min_{j\in[d]}i_j$ and $b(\bi)=\max_{j\in[d]}i_j$.
The interval $I_{\ell,m}$ contains all coordinates of $\bi$ if and only if
\[
    b(\bi)-m+1
    \le
    \ell
    \le
    a(\bi).
\]
Therefore the number of such intervals is
\[
    \left(m-\operatorname{diam}(\bi)\right)_+.
\]
For any tensor $\mathcal A$, the preceding counting argument gives
\[
    S_m(\mathcal A)_{\bi}
    =
    \left(m-\operatorname{diam}(\bi)\right)_+
    \mathcal A_{\bi}.
\]
Therefore, for $r=\operatorname{diam}(\bi)$,
\[
    \frac{
    (k-r)_+-(h_k-r)_+
    }{
    k-h_k
    }
    =
    w_k(r).
\]
Hence
\begin{equation}
\label{eq:tapered_weight_identity}
    \mathcal W_k*\mathcal A
    =
    \frac{1}{k-h_k}
    \left\{
    S_k(\mathcal A)
    -
    S_{h_k}(\mathcal A)
    \right\}.
\end{equation}
Applying \eqref{eq:tapered_weight_identity} to $\mathcal E$ gives
\begin{equation}
\label{eq:tapered_stochastic_split}
    \|\mathcal W_k*\mathcal E\|
    \le
    \frac{1}{k-h_k}
    \|S_k(\mathcal E)\|
    +
    \frac{1}{k-h_k}
    \|S_{h_k}(\mathcal E)\|.
\end{equation}

For fixed $m$, split the intervals $I_{\ell,m}$ according to the residue class of $\ell$ modulo $m$.
Within each residue class, the intervals are disjoint.
By Lemma~\ref{lem:block-diagonal-tensor}, for each residue class, the spectral norm of the sum of the corresponding block tensors is bounded by the maximum block norm.
Since there are at most $m$ residue classes,
\begin{equation}
\label{eq:tapered_Sm_block_bound}
    \|S_m(\mathcal E)\|
    \le
    m
    \max_{\ell}
    \|\Pi_{\ell,m}(\mathcal E)\|.
\end{equation}
Combining \eqref{eq:tapered_stochastic_split} and \eqref{eq:tapered_Sm_block_bound}, and using $h_k\le k$ and $k-h_k\ge k/2$,
\begin{equation}
\label{eq:tapered_weighted_noise_local_bound}
    \|\mathcal W_k*\mathcal E\|
    \le
    3
    \max_{m\in\{k,h_k\}}
    \max_{\ell}
    \|\Pi_{\ell,m}(\mathcal E)\|.
\end{equation}
By Lemma~\ref{lem:local-empirical-moment-bound}, applied with $m=k$ and $m=h_k$, and by a union bound over the two choices of $m$, \eqref{eq:tapered_weighted_noise_local_bound} implies that, with probability at least $1-Ce^{-x}$,
\begin{equation}
\label{eq:tapered_stochastic_bound}
    \|\mathcal W_k*\mathcal E\|
    \le
    C_{\gamma,d,K}
    \left\{
    \sqrt{\frac{k+\log p+x}{n}}
    +
    \frac{(k+\log p+x)^{d/\gamma}}{n}
    \right\}.
\end{equation}
Combining \eqref{eq:tapered_triangle_decomposition}, \eqref{eq:tapered_bias_bound}, and \eqref{eq:tapered_stochastic_bound} proves the high-probability bound.

It remains to prove the expectation bound.
Let
\[
    Y
    =
    \left\|
    \widehat{\bcM}_{d,T,k}
    -
    \bcM_d
    \right\|,
    \qquad
    a
    =
    k+\log p,
    \qquad
    q
    =
    \frac d\gamma.
\]
The high-probability bound implies that, after enlarging constants,
\[
    \PP
    \left\{
    Y
    >
    C_{\alpha,d}\beta k^{-\alpha-d/2+1}
    +
    C_{\gamma,d,K}
    \left(
    \sqrt{\frac{a+x}{n}}
    +
    \frac{(a+x)^q}{n}
    \right)
    \right\}
    \le
    Ce^{-x}
\]
for all $x\ge0$.
Since $q\ge1$,
\[
    \sqrt{a+x}
    \le
    \sqrt a+\sqrt x,
    \qquad
    (a+x)^q
    \le
    2^{q-1}(a^q+x^q).
\]
Therefore,
\[
    \PP
    \left\{
    Y
    >
    B
    +
    C_{\gamma,d,K}
    \left(
    \sqrt{\frac{x}{n}}
    +
    \frac{x^q}{n}
    \right)
    \right\}
    \le
    Ce^{-x},
\]
where
\[
    B
    =
    C_{\alpha,d}\beta k^{-\alpha-d/2+1}
    +
    C_{\gamma,d,K}
    \left(
    \sqrt{\frac{a}{n}}
    +
    \frac{a^q}{n}
    \right).
\]
Let
\[
    W
    =
    (Y-B)_+,
    \qquad
    h(x)
    =
    \sqrt{\frac{x}{n}}
    +
    \frac{x^q}{n}.
\]
The preceding display gives
\[
    \PP\{W>C_{\gamma,d,K}h(x)\}
    \le
    Ce^{-x},
    \qquad
    x\ge0.
\]

By tail integration, we have
\[
    \E W
    =
    \int_0^\infty \PP(W>t)\,dt .
\]
Since
\[
    \PP\{W>C_{\gamma,d,K}h(x)\}
    \le
    Ce^{-x},
    \qquad x\ge0,
\]
and $h$ is nondecreasing, the usual change-of-variables argument gives
\[
    \E W
    \le
    C_{\gamma,d,K}
    \E h(\xi),
\]
where $\xi$ is a nonnegative random variable satisfying
\[
    \PP(\xi>x)\le Ce^{-x},
    \qquad x\ge0.
\]
Therefore,
\[
    \E W
    \le
    C_{\gamma,d,K}
    \left(
    \frac{\E \xi^{1/2}}{\sqrt n}
    +
    \frac{\E \xi^q}{n}
    \right)
    \le
    C_{\gamma,d,K}
    \left(
    \frac1{\sqrt n}
    +
    \frac1n
    \right),
\]
because $\E \xi^{1/2}\le C$ and $\E \xi^q\le C_q$. 

Since $a=k+\log p\ge1$, we have
\[
    \frac1{\sqrt n}
    +
    \frac1n
    \le
    C
    \left(
    \sqrt{\frac{a}{n}}
    +
    \frac{a^q}{n}
    \right).
\]
Therefore,
\[
    \E Y
    \le
    C_{\alpha,d}\beta k^{-\alpha-d/2+1}
    +
    C_{\gamma,d,K}
    \left(
    \sqrt{\frac{k+\log p}{n}}
    +
    \frac{(k+\log p)^{d/\gamma}}{n}
    \right).
\]
This proves the expectation bound and completes the proof.
\end{proof}

\subsection{Proof of Theorem \ref{thm:tapered-cumulant-upper-bound}}

\begin{proof}
Let $\mathcal W_k$ be the deterministic weight tensor defined in the proof of Theorem~\ref{thm:tapered-moment-upper-bound}, and define
\[
    \mathcal E_{\bcK}
    =
    \widehat{\bcK}_d-\bcK_d.
\]
Since $\widehat{\bcK}_{d,T,k}=\mathcal W_k*\widehat{\bcK}_d$, the triangle inequality gives
\begin{equation}
\label{eq:tapered_cumulant_triangle}
    \left\|
    \widehat{\bcK}_{d,T,k}
    -
    \bcK_d
    \right\|
    \le
    \|\mathcal W_k*\mathcal E_{\bcK}\|
    +
    \|(\mathcal W_k-\mathbf 1)*\bcK_d\|.
\end{equation}

We first control the bias term. Since $\bcK_d\in\mathcal K_{\alpha,\beta}^d$, the dyadic annulus argument leading to \eqref{eq:tapered_bias_bound}, with $\bcM_d$ replaced by $\bcK_d$, gives
\begin{equation}
\label{eq:tapered_cumulant_bias}
    \|(\mathcal W_k-\mathbf 1)*\bcK_d\|
    \le
    C_{\alpha,d}
    \beta
    k^{-\alpha-d/2+1}.
\end{equation}
This step uses only the entrywise bandability of the target tensor and the deterministic tapering weights.

We now control the stochastic term. For an integer $m\ge1$ and an integer $\ell$, define
\[
    I_{\ell,m}
    =
    \{j\in[p]:\ell\le j<\ell+m\},
    \qquad
    \ell=1-m,\ldots,p.
\]
Let $\Pi_{\ell,m}$ denote the order-$d$ tensor projection that keeps only entries whose indices all belong to $I_{\ell,m}$. For any tensor $\bcA\in(\R^p)^{\otimes d}$, define
\[
    S_m(\bcA)
    =
    \sum_{\ell=1-m}^p
    \Pi_{\ell,m}(\bcA).
\]
By the counting identity \eqref{eq:tapered_weight_identity}, for any order-$d$ tensor $\bcA$,
\begin{equation}
\label{eq:tapered_cumulant_weight_identity}
    \mathcal W_k*\bcA
    =
    \frac{1}{k-h_k}
    \left\{
    S_k(\bcA)
    -
    S_{h_k}(\bcA)
    \right\}.
\end{equation}
Applying \eqref{eq:tapered_cumulant_weight_identity} to $\mathcal E_{\bcK}$ gives
\[
    \|\mathcal W_k*\mathcal E_{\bcK}\|
    \le
    \frac{1}{k-h_k}
    \|S_k(\mathcal E_{\bcK})\|
    +
    \frac{1}{k-h_k}
    \|S_{h_k}(\mathcal E_{\bcK})\|.
\]
For fixed $m$, the same residue-class block decomposition used in \eqref{eq:tapered_Sm_block_bound} gives
\[
    \|S_m(\mathcal E_{\bcK})\|
    \le
    m
    \max_{\ell}
    \|\Pi_{\ell,m}(\mathcal E_{\bcK})\|.
\]
Therefore, as in \eqref{eq:tapered_weighted_noise_local_bound},
\begin{equation}
\label{eq:tapered_cumulant_local_reduction}
    \|\mathcal W_k*\mathcal E_{\bcK}\|
    \le
    3
    \max_{m\in\{k,h_k\}}
    \max_{\ell}
    \|\Pi_{\ell,m}(\mathcal E_{\bcK})\|.
\end{equation}

It remains to prove a local empirical cumulant bound. Fix $m\in[p]$ and $x\ge0$. For an interval $I\subset[p]$, let $\Pi_I^{(r)}$ denote the order-$r$ tensor projection that keeps only entries whose indices all belong to $I$. Define
\[
    \mathcal I_m
    =
    \{I\subset[p]: I \text{ is an interval and } |I|\le m\}
\]
and
\begin{equation}
\label{eq:local_delta_mx}
    \Delta_{m,x}
    =
    \sqrt{\frac{m+\log p+x}{n}}
    +
    \frac{(m+\log p+x)^{d/\gamma}}{n}.
\end{equation}
For $r=1,\ldots,d$, define
\[
    \mathcal E_r
    =
    \widehat{\bcM}_r-\bcM_r.
\]
By Lemma~\ref{lem:uniform-local-empirical-moment-all-orders}, with probability at least $1-Ce^{-x}$,
\begin{equation}
\label{eq:uniform_local_moment_errors_all_orders}
    \max_{1\le r\le d}
    \max_{I\in\mathcal I_m}
    \|\Pi_I^{(r)}(\mathcal E_r)\|
    \le
    C_{\gamma,d,K}
    \Delta_{m,x}.
\end{equation}

We also need deterministic bounds for the population moment tensors. For every $1\le r\le d$ and every interval $I\subset[p]$,
\[
\begin{split}
    \|\Pi_I^{(r)}(\bcM_r)\|
    &\le
    \|\bcM_r\|  \\
    &=
    \sup_{u_1,\ldots,u_r\in\mathbb S^{p-1}}
    \left|
    \E
    \prod_{a=1}^r u_a^\top X
    \right|  \\
    &\le
    \sup_{u_1,\ldots,u_r\in\mathbb S^{p-1}}
    \prod_{a=1}^r
    \left(
    \E |u_a^\top X|^r
    \right)^{1/r}
    \le
    C_{\gamma,d,K}.
\end{split}
\]
Hence, on the event \eqref{eq:uniform_local_moment_errors_all_orders},
\begin{equation}
\label{eq:local_population_and_sample_moment_bounds}
    \max_{1\le r\le d}
    \max_{I\in\mathcal I_m}
    \|\Pi_I^{(r)}(\bcM_r)\|
    \le
    C_{\gamma,d,K},
    \qquad
    \max_{1\le r\le d}
    \max_{I\in\mathcal I_m}
    \|\Pi_I^{(r)}(\widehat{\bcM}_r)\|
    \le
    C_{\gamma,d,K}(1+\Delta_{m,x}).
\end{equation}

Let $\Pi_d$ be the set of all partitions of $[d]$. For a partition $\pi=\{B_1,\ldots,B_s\}\in\Pi_d$, define
\[
    c_\pi=(-1)^{s-1}(s-1)!.
\]
Let $\bcM_\pi$ denote the order-$d$ tensor obtained by placing $\bcM_{|B_j|}$ on the modes indexed by $B_j$ and taking the tensor product over $j=1,\ldots,s$. Define $\widehat{\bcM}_\pi$ in the same way with $\widehat{\bcM}_{|B_j|}$ in place of $\bcM_{|B_j|}$. Then
\[
    \bcK_d
    =
    \sum_{\pi\in\Pi_d} c_\pi \bcM_\pi,
    \qquad
    \widehat{\bcK}_d
    =
    \sum_{\pi\in\Pi_d} c_\pi \widehat{\bcM}_\pi.
\]

Now fix $I\in\mathcal I_m$ and a partition $\pi=\{B_1,\ldots,B_s\}\in\Pi_d$. After a permutation of tensor modes,
\[
    \Pi_I^{(d)}(\bcM_\pi)
    =
    \bigotimes_{j=1}^s
    \Pi_I^{(|B_j|)}(\bcM_{|B_j|}),
\]
and the same identity holds with $\widehat{\bcM}$ in place of $\bcM$. The spectral norm is invariant under permutations of tensor modes, and the spectral norm of a tensor product is the product of the spectral norms of its factors. Therefore the telescoping identity gives
\[
\begin{split}
    &
    \left\|
    \Pi_I^{(d)}
    \left(
    \widehat{\bcM}_\pi-\bcM_\pi
    \right)
    \right\|  \\
    &\le
    \sum_{q=1}^s
    \left(
    \prod_{j<q}
    \|\Pi_I^{(|B_j|)}(\widehat{\bcM}_{|B_j|})\|
    \right)
    \|\Pi_I^{(|B_q|)}(\widehat{\bcM}_{|B_q|}-\bcM_{|B_q|})\|
    \left(
    \prod_{j>q}
    \|\Pi_I^{(|B_j|)}(\bcM_{|B_j|})\|
    \right).
\end{split}
\]
Using \eqref{eq:uniform_local_moment_errors_all_orders} and \eqref{eq:local_population_and_sample_moment_bounds}, we obtain
\begin{equation}
\label{eq:single_partition_local_bound}
    \left\|
    \Pi_I^{(d)}
    \left(
    \widehat{\bcM}_\pi-\bcM_\pi
    \right)
    \right\|
    \le
    C_{\gamma,d,K}
    \Delta_{m,x}
    (1+\Delta_{m,x})^{s-1}.
\end{equation}
Since $s\le d$ and $|\Pi_d|$ depends only on $d$, summing \eqref{eq:single_partition_local_bound} over $\pi\in\Pi_d$ yields
\begin{equation}
\label{eq:local_empirical_cumulant_bound}
    \max_{I\in\mathcal I_m}
    \|\Pi_I^{(d)}(\widehat{\bcK}_d-\bcK_d)\|
    \le
    C_{\gamma,d,K}
    \Delta_{m,x}
    (1+\Delta_{m,x})^{d-1}
\end{equation}
with probability at least $1-Ce^{-x}$.

Apply \eqref{eq:local_empirical_cumulant_bound} with $m=k$ and $m=h_k$, and take a union bound over these two choices of $m$. Since $\Delta_{h_k,x}\le\Delta_{k,x}$, \eqref{eq:tapered_cumulant_local_reduction} gives
\begin{equation}
\label{eq:tapered_cumulant_stochastic_bound}
    \|\mathcal W_k*\mathcal E_{\bcK}\|
    \le
    C_{\gamma,d,K}
    \Delta_{k,x}
    (1+\Delta_{k,x})^{d-1}
\end{equation}
with probability at least $1-Ce^{-x}$. Combining \eqref{eq:tapered_cumulant_triangle}, \eqref{eq:tapered_cumulant_bias}, and \eqref{eq:tapered_cumulant_stochastic_bound} proves the high-probability bound.

We now prove the expectation bound. Let
\[
    Y
    =
    \left\|
    \widehat{\bcK}_{d,T,k}
    -
    \bcK_d
    \right\|,
    \qquad
    a=k+\log p,
    \qquad
    q=\frac d\gamma.
\]
The high-probability bound implies that, after enlarging constants,
\[
    \PP
    \left\{
    Y
    >
    C_{\alpha,d}\beta k^{-\alpha-d/2+1}
    +
    C_{\gamma,d,K}
    H(x)
    \right\}
    \le
    Ce^{-x},
    \qquad x\ge0,
\]
where
\[
    H(x)
    =
    \delta_x(1+\delta_x)^{d-1},
    \qquad
    \delta_x
    =
    \sqrt{\frac{a+x}{n}}
    +
    \frac{(a+x)^q}{n}.
\]
Since $H(x)$ is nondecreasing, the standard tail-integration argument gives
\[
    \E
    \left[
    Y
    -
    C_{\alpha,d}\beta k^{-\alpha-d/2+1}
    \right]_+
    \le
    C_{\gamma,d,K}
    \E H(\xi),
\]
where $\xi$ is a nonnegative random variable satisfying
\[
    \PP(\xi>x)\le Ce^{-x},
    \qquad x\ge0.
\]
Since
\[
    H(\xi)
    \le
    C_d
    \sum_{j=1}^d \delta_\xi^j,
\]
it remains to bound the moments of $\delta_\xi$. For each fixed $1\le j\le d$,
\[
    \E \delta_\xi^j
    \le
    C_{\gamma,d}
    \left\{
    \left(\frac{a}{n}\right)^{j/2}
    +
    \left(\frac{a^q}{n}\right)^j
    \right\}.
\]
Here we used $a\ge1$ and the fact that all moments of $\xi$ are bounded by constants depending only on $j$ and $q$. Therefore, with
\[
    \Delta_{k,0}
    =
    \sqrt{\frac{a}{n}}
    +
    \frac{a^q}{n},
\]
we have
\[
    \E H(\xi)
    \le
    C_{\gamma,d}
    \sum_{j=1}^d
    \Delta_{k,0}^j
    \le
    C_{\gamma,d}
    \Delta_{k,0}(1+\Delta_{k,0})^{d-1}.
\]
Thus
\[
    \E Y
    \le
    C_{\alpha,d}\beta k^{-\alpha-d/2+1}
    +
    C_{\gamma,d,K}
    \Delta_{k,0}(1+\Delta_{k,0})^{d-1}.
\]
This proves the expectation bound.

Finally, if $\Delta_{k,x}\le1$, then
\[
    \Delta_{k,x}(1+\Delta_{k,x})^{d-1}
    \le
    C_d\Delta_{k,x},
\]
which gives the simplified high-probability bound. The simplified expectation bound follows in the same way when $\Delta_{k,0}\le1$. This completes the proof.
\end{proof}

\subsection{Proof of Theorem \ref{thm_bandable_cumulant_lower_bound}}

\begin{proof}
We first state the perturbation claim used in the proof. Its proof is given after the main argument.

\medskip
\noindent
\textbf{Claim.}
There exist constants $\theta_0>0$, $c_d>0$, and $C_d>0$, depending only on $d$, such that, for every $p\ge1$, every $u\in\mathbb S^{p-1}$, and every $\theta\in(0,\theta_0]$, there exists a random vector $X_{u,\theta}\in\R^p$ with law $P_{u,\theta}$ satisfying
\begin{equation}
\label{eq_hermite_tilt_orlicz}
    \sup_{v\in\mathbb S^{p-1}}
    \|v^\top X_{u,\theta}\|_{\Phi_2}
    \le
    C_d,
\end{equation}
\begin{equation}
\label{eq_hermite_tilt_cumulant_rank_one}
    \bcK_d(X_{u,\theta})
    =
    \lambda_\theta u^{\otimes d},
    \qquad
    c_d\theta
    \le
    |\lambda_\theta|
    \le
    C_d\theta,
\end{equation}
and
\begin{equation}
\label{eq_hermite_tilt_KL_to_gaussian}
    D(P_{u,\theta}\|\gamma_p)
    \le
    C_d\theta^2,
\end{equation}
where $\gamma_p=N(0,I_p)$ and $\lambda_\theta$ does not depend on $u$.

\medskip
\noindent
\textbf{Proof of the theorem.}
Fix an integer $k_0\le k\le p$, where $k_0=k_0(d)$ is a sufficiently large constant. Let
\[
    L_k
    =
    \left\lfloor\frac pk\right\rfloor
\]
and define the disjoint coordinate blocks
\[
    I_\ell
    =
    \{(\ell-1)k+1,\ldots,\ell k\},
    \qquad
    \ell=1,\ldots,L_k.
\]
By Lemma~\ref{lemma_balanced_sign_packing}, there exist constants $c_{\mathrm{code}}>0$ and $\rho_0\in(0,1)$ and a set $\mathcal E_k\subset\{-1,1\}^k$ such that
\[
    |\mathcal E_k|
    \ge
    \exp(c_{\mathrm{code}}k)
\]
and, for every distinct $\varepsilon,\varepsilon'\in\mathcal E_k$,
\[
    \left|
    \frac1k
    \sum_{r=1}^k
    \varepsilon_r\varepsilon_r'
    \right|
    \le
    \rho_0.
\]
For every $\ell\in[L_k]$ and $\varepsilon\in\mathcal E_k$, define
\[
    u_{\ell,\varepsilon}
    =
    \frac1{\sqrt k}
    \sum_{r=1}^k
    \varepsilon_r
    e_{(\ell-1)k+r}
    \in
    \mathbb S^{p-1}.
\]
If $\ell=\ell'$ and $\varepsilon\ne\varepsilon'$, then
\[
    \left|
    \left\langle
    u_{\ell,\varepsilon},
    u_{\ell',\varepsilon'}
    \right\rangle
    \right|
    \le
    \rho_0.
\]
If $\ell\ne\ell'$, then the two vectors have disjoint supports and hence are orthogonal. Therefore, for every distinct pair $(\ell,\varepsilon)\ne(\ell',\varepsilon')$,
\begin{equation}
\label{eq_location_packing_inner_product}
    \left|
    \left\langle
    u_{\ell,\varepsilon},
    u_{\ell',\varepsilon'}
    \right\rangle
    \right|
    \le
    \rho_0.
\end{equation}

The total number of alternatives is
\[
    M_k
    =
    L_k|\mathcal E_k|.
\]
There exists a constant $c>0$ depending only on $d$ such that
\begin{equation}
\label{eq_location_packing_entropy}
    \log M_k
    \ge
    c
    \left\{
    k+\log(ep/k)
    \right\}.
\end{equation}
Indeed, if $p/k\ge2$, then $L_k\ge p/(2k)$ and
\[
    \log M_k
    \ge
    c_{\mathrm{code}}k
    +
    \log(p/k)
    -
    \log2.
\]
If $1\le p/k<2$, then $L_k=1$ and $\log(ep/k)$ is bounded by an absolute constant, which is absorbed by the term $c_{\mathrm{code}}k$. Increasing $k_0$ if necessary proves \eqref{eq_location_packing_entropy}.

Choose
\begin{equation}
\label{eq_location_lb_theta_choice}
    \theta
    =
    c_\theta
    \min
    \left\{
    \beta k^{-\alpha-d/2+1},
    \sqrt{\frac{k+\log(ep/k)}{n}},
    1
    \right\},
\end{equation}
where $c_\theta>0$ is a sufficiently small constant depending only on $d$. Taking $c_\theta\le\theta_0$ ensures that $\theta\le\theta_0$.

For each $(\ell,\varepsilon)$, let $X_{\ell,\varepsilon}=X_{u_{\ell,\varepsilon},\theta}$ be the random vector from the perturbation claim, and define
\[
    Y_{\ell,\varepsilon}
    =
    a_d
    \left(
    X_{\ell,\varepsilon}
    -
    \E X_{\ell,\varepsilon}
    \right),
\]
where $a_d\in(0,1]$ is a sufficiently small constant depending only on $d$. Let $Q_{\ell,\varepsilon}$ denote the law of $Y_{\ell,\varepsilon}$. Then $\E Y_{\ell,\varepsilon}=0$, and invariance of higher-order cumulants under deterministic shifts gives
\begin{equation}
\label{eq_location_lb_cumulant}
    \bcK_d(Y_{\ell,\varepsilon})
    =
    a_d^d\lambda_\theta
    u_{\ell,\varepsilon}^{\otimes d}.
\end{equation}
By \eqref{eq_hermite_tilt_orlicz} and the centering inequality for Orlicz norms,
\[
    \left\|
    v^\top
    \left(
    X_{\ell,\varepsilon}
    -
    \E X_{\ell,\varepsilon}
    \right)
    \right\|_{\Phi_2}
    \le
    C_d,
    \qquad
    v\in\mathbb S^{p-1}.
\]
Taking $a_d>0$ sufficiently small gives
\[
    \|v^\top Y_{\ell,\varepsilon}\|_{\Phi_2}
    \le
    2,
    \qquad
    v\in\mathbb S^{p-1}.
\]

We next verify the bandability constraint. If
\[
    \left(
    u_{\ell,\varepsilon}^{\otimes d}
    \right)_{\bi}
    \ne
    0,
\]
then all coordinates of $\bi$ belong to $I_\ell$, so
\[
    \operatorname{diam}(\bi)
    \le
    k-1.
\]
Moreover,
\[
    \left|
    \left(
    u_{\ell,\varepsilon}^{\otimes d}
    \right)_{\bi}
    \right|
    =
    k^{-d/2}.
\]
By \eqref{eq_hermite_tilt_cumulant_rank_one}, \eqref{eq_location_lb_cumulant}, and \eqref{eq_location_lb_theta_choice},
\[
\begin{split}
    \left|
    \left(
    \bcK_d(Y_{\ell,\varepsilon})
    \right)_{\bi}
    \right|
    &\le
    C_d\theta k^{-d/2} \\
    &\le
    C_dc_\theta
    \beta
    k^{-(\alpha+d-1)}.
\end{split}
\]
After decreasing $c_\theta$ if necessary,
\[
    \left|
    \left(
    \bcK_d(Y_{\ell,\varepsilon})
    \right)_{\bi}
    \right|
    \le
    \beta
    k^{-(\alpha+d-1)}
    \le
    \beta
    \left(
    1+\operatorname{diam}(\bi)
    \right)^{-(\alpha+d-1)}.
\]
Outside the support, the entries are zero. Therefore,
\begin{equation}
\label{eq_location_lb_class_membership}
    Y_{\ell,\varepsilon}
    \in
    \mathcal P_{\alpha,\beta}^d.
\end{equation}

For two distinct alternatives, \eqref{eq_location_packing_inner_product} gives
\[
\begin{split}
    \left\|
    u_{\ell,\varepsilon}^{\otimes d}
    -
    u_{\ell',\varepsilon'}^{\otimes d}
    \right\|
    &\ge
    \left|
    \left\langle
    u_{\ell,\varepsilon}^{\otimes d}
    -
    u_{\ell',\varepsilon'}^{\otimes d},
    u_{\ell,\varepsilon}^{\otimes d}
    \right\rangle
    \right| \\
    &=
    \left|
    1-
    \left\langle
    u_{\ell,\varepsilon},
    u_{\ell',\varepsilon'}
    \right\rangle^d
    \right| \\
    &\ge
    1-\rho_0^d.
\end{split}
\]
Hence, by \eqref{eq_hermite_tilt_cumulant_rank_one} and \eqref{eq_location_lb_cumulant},
\begin{equation}
\label{eq_location_lb_pairwise_separation}
    \left\|
    \bcK_d(Y_{\ell,\varepsilon})
    -
    \bcK_d(Y_{\ell',\varepsilon'})
    \right\|
    \ge
    c_d\theta.
\end{equation}

Let
\[
    \mu_{\ell,\varepsilon}
    =
    \E X_{\ell,\varepsilon}
\]
and let $\widetilde P_{\ell,\varepsilon}$ be the law of $X_{\ell,\varepsilon}-\mu_{\ell,\varepsilon}$. Translation invariance of KL divergence gives
\[
    D(\widetilde P_{\ell,\varepsilon}\|\gamma_p)
    =
    D(P_{u_{\ell,\varepsilon},\theta}\|N(\mu_{\ell,\varepsilon},I_p)).
\]
For any probability distribution $P$ with mean $\mu$,
\[
    \log
    \frac{dN(\mu,I_p)}{d\gamma_p}(x)
    =
    \mu^\top x
    -
    \frac12\|\mu\|_2^2,
\]
and therefore
\[
\begin{split}
    D(P\|N(\mu,I_p))
    &=
    D(P\|\gamma_p)
    -
    \E_P
    \log
    \frac{dN(\mu,I_p)}{d\gamma_p}(X) \\
    &=
    D(P\|\gamma_p)
    -
    \frac12\|\mu\|_2^2 \\
    &\le
    D(P\|\gamma_p).
\end{split}
\]
It follows from \eqref{eq_hermite_tilt_KL_to_gaussian} that
\[
    D(\widetilde P_{\ell,\varepsilon}\|\gamma_p)
    \le
    C_d\theta^2.
\]
Let
\[
    Q_0
    =
    N(0,a_d^2I_p).
\]
Invariance of KL divergence under invertible affine transformations gives
\[
    D(Q_{\ell,\varepsilon}\|Q_0)
    =
    D(\widetilde P_{\ell,\varepsilon}\|\gamma_p)
    \le
    C_d\theta^2.
\]
Thus, for $n$ independent observations,
\[
    D(Q_{\ell,\varepsilon}^{\otimes n}\|Q_0^{\otimes n})
    \le
    C_dn\theta^2.
\]
By \eqref{eq_location_lb_theta_choice},
\[
    n\theta^2
    \le
    c_\theta^2
    \left\{
    k+\log(ep/k)
    \right\}.
\]
Combining this inequality with \eqref{eq_location_packing_entropy} and taking $c_\theta>0$ sufficiently small gives
\[
    \frac1{M_k}
    \sum_{\ell=1}^{L_k}
    \sum_{\varepsilon\in\mathcal E_k}
    D(Q_{\ell,\varepsilon}^{\otimes n}\|Q_0^{\otimes n})
    \le
    \frac18\log M_k.
\]
Increasing $k_0$ if necessary also ensures that
\[
    \log2
    \le
    \frac14\log M_k.
\]
Applying the generalized Fano lemma with the reference distribution $Q_0^{\otimes n}$, the class membership \eqref{eq_location_lb_class_membership}, and the separation \eqref{eq_location_lb_pairwise_separation}, we obtain
\[
\begin{split}
    &
    \inf_{\widehat{\bcK}_d}
    \sup_{X\in\mathcal P_{\alpha,\beta}^d}
    \E
    \left\|
    \widehat{\bcK}_d-\bcK_d(X)
    \right\| \\
    &\qquad\ge
    c_d
    \min
    \left\{
    \beta k^{-\alpha-d/2+1},
    \sqrt{\frac{k+\log(ep/k)}{n}},
    1
    \right\}.
\end{split}
\]
For every $1\le k\le p$,
\[
    \frac12
    \left(
    k+\log p
    \right)
    \le
    k+\log(ep/k)
    \le
    2
    \left(
    k+\log p
    \right).
\]
Consequently, for every $k_0\le k\le p$,
\begin{equation}
\label{eq_location_lb_large_k}
\begin{split}
    &
    \inf_{\widehat{\bcK}_d}
    \sup_{X\in\mathcal P_{\alpha,\beta}^d}
    \E
    \left\|
    \widehat{\bcK}_d-\bcK_d(X)
    \right\| \\
    &\qquad\ge
    c_d
    \min
    \left\{
    \beta k^{-\alpha-d/2+1},
    \sqrt{\frac{k+\log p}{n}},
    1
    \right\}.
\end{split}
\end{equation}

It remains to control the finitely many values $1\le k<k_0$. For $j=1,\ldots,p$, take $u_j=e_j$ and choose
\[
    \theta_{\mathrm{diag}}
    =
    c_\theta
    \min
    \left\{
    \beta,
    \sqrt{\frac{\log p}{n}},
    1
    \right\}.
\]
Apply the perturbation claim with $u=u_j$ and $\theta=\theta_{\mathrm{diag}}$, followed by the same centering and rescaling construction. Denote the resulting centered random vector by $Y_j$ and its law by $Q_j$. Then
\[
    \bcK_d(Y_j)
    =
    a_d^d\lambda_{\theta_{\mathrm{diag}}}
    e_j^{\otimes d}.
\]
The only nonzero tensor entry has diameter zero, and its magnitude is bounded by $\beta$ after decreasing $c_\theta$ if necessary. Therefore,
\[
    Y_j
    \in
    \mathcal P_{\alpha,\beta}^d.
\]
For $j\ne j'$,
\[
\begin{split}
    \left\|
    \bcK_d(Y_j)
    -
    \bcK_d(Y_{j'})
    \right\|
    &=
    a_d^d
    |\lambda_{\theta_{\mathrm{diag}}}|
    \left\|
    e_j^{\otimes d}
    -
    e_{j'}^{\otimes d}
    \right\| \\
    &\ge
    c_d\theta_{\mathrm{diag}},
\end{split}
\]
where
\[
    \left\|
    e_j^{\otimes d}
    -
    e_{j'}^{\otimes d}
    \right\|
    \ge
    \left|
    \left\langle
    e_j^{\otimes d}
    -
    e_{j'}^{\otimes d},
    e_j^{\otimes d}
    \right\rangle
    \right|
    =
    1.
\]
The same KL calculation gives
\[
    D(Q_j^{\otimes n}\|Q_0^{\otimes n})
    \le
    C_dn\theta_{\mathrm{diag}}^2
    \le
    C_dc_\theta^2\log p.
\]
After decreasing $c_\theta$ if necessary, generalized Fano's inequality over the $p$ alternatives gives
\begin{equation}
\label{eq_location_lb_diagonal}
\begin{split}
    &
    \inf_{\widehat{\bcK}_d}
    \sup_{X\in\mathcal P_{\alpha,\beta}^d}
    \E
    \left\|
    \widehat{\bcK}_d-\bcK_d(X)
    \right\| \\
    &\qquad\ge
    c_d
    \min
    \left\{
    \beta,
    \sqrt{\frac{\log p}{n}},
    1
    \right\}.
\end{split}
\end{equation}

For every $1\le k<k_0$,
\[
    \beta k^{-\alpha-d/2+1}
    \le
    \beta
\]
and, since $p\ge k_0$,
\[
    k+\log p
    \le
    C_d\log p.
\]
It follows from \eqref{eq_location_lb_diagonal} that, for every $1\le k<k_0$,
\[
\begin{split}
    &
    \inf_{\widehat{\bcK}_d}
    \sup_{X\in\mathcal P_{\alpha,\beta}^d}
    \E
    \left\|
    \widehat{\bcK}_d-\bcK_d(X)
    \right\| \\
    &\qquad\ge
    c_d
    \min
    \left\{
    \beta k^{-\alpha-d/2+1},
    \sqrt{\frac{k+\log p}{n}},
    1
    \right\}.
\end{split}
\]
Combining this inequality with \eqref{eq_location_lb_large_k} and taking the supremum over $1\le k\le p$ proves the theorem.

\medskip
\noindent
\textbf{Proof of the perturbation claim.}

Let $H_d$ be the probabilists' Hermite polynomial of degree $d$. For $M>0$, define
\[
    h_M(t)
    =
    H_d(t)\mathbf I\{|H_d(t)|\le M\}
    +
    M\operatorname{sign}(H_d(t))
    \mathbf I\{|H_d(t)|>M\}.
\]
We choose $M=M_d$ sufficiently large below. Let $G\sim\gamma_p=N(0,I_p)$. For every $u\in\mathbb S^{p-1}$ and $\theta\in\R$, define the tilted distribution $P_{u,\theta}$ by
\[
    \frac{dP_{u,\theta}}{d\gamma_p}(x)
    =
    \frac{
    \exp\{\theta h_M(u^\top x)\}
    }{
    \E_{\gamma_p}
    \exp\{\theta h_M(u^\top G)\}
    }.
\]
Let $X_{u,\theta}\sim P_{u,\theta}$.

We first establish the uniform sub-Gaussian bound. Since $|h_M|\le M$, for every $x\in\R^p$,
\[
    \frac{dP_{u,\theta}}{d\gamma_p}(x)
    \le
    e^{2|\theta|M}.
\]
Therefore, for every $v\in\mathbb S^{p-1}$ and every $L>\sqrt2$,
\[
\begin{split}
    \E_{P_{u,\theta}}
    \exp
    \left\{
    \frac{(v^\top X_{u,\theta})^2}{L^2}
    \right\}
    &\le
    e^{2|\theta|M}
    \E_{\gamma_p}
    \exp
    \left\{
    \frac{(v^\top G)^2}{L^2}
    \right\} \\
    &=
    e^{2|\theta|M}
    \left(
    1-\frac2{L^2}
    \right)^{-1/2}.
\end{split}
\]
Choose $L=L_d$ sufficiently large and then choose $\theta_0>0$ sufficiently small so that
\[
    e^{2\theta_0M}
    \left(
    1-\frac2{L_d^2}
    \right)^{-1/2}
    \le
    2.
\]
It follows that, uniformly over $u\in\mathbb S^{p-1}$, $\theta\in(0,\theta_0]$, and $v\in\mathbb S^{p-1}$,
\[
    \E_{P_{u,\theta}}
    \exp
    \left\{
    \frac{(v^\top X_{u,\theta})^2}{L_d^2}
    \right\}
    \le
    2.
\]
This proves \eqref{eq_hermite_tilt_orlicz}.

We next compute the order-$d$ cumulant tensor. Under $P_{u,\theta}$, the orthogonal decomposition of $X_{u,\theta}$ satisfies
\[
    X_{u,\theta}
    =
    Su+Z_\perp,
\]
where $S=u^\top X_{u,\theta}$, $Z_\perp=(I_p-uu^\top)X_{u,\theta}$, $S$ and $Z_\perp$ are independent, and
\[
    Z_\perp
    \sim
    N(0,I_p-uu^\top).
\]
The density of $S$ is
\[
    f_{S,\theta}(s)
    =
    \frac{
    e^{\theta h_M(s)}\phi(s)
    }{
    \E_{U\sim N(0,1)}
    e^{\theta h_M(U)}
    }.
\]
For $t\in\R^p$, write $t=t_uu+t_\perp$, where $t_u=\langle t,u\rangle$ and $t_\perp=(I_p-uu^\top)t$. The cumulant generating function of $X_{u,\theta}$ is
\[
    \kappa_X(t)
    =
    \kappa_S(t_u)
    +
    \frac12\|t_\perp\|_2^2,
\]
where
\[
    \kappa_S(t)
    =
    \log
    \E_{U\sim N(0,1)}
    \exp\{\theta h_M(U)+tU\}
    -
    \log
    \E_{U\sim N(0,1)}
    \exp\{\theta h_M(U)\}.
\]
Since $d\ge3$, the Gaussian quadratic term contributes no order-$d$ cumulant. Therefore,
\[
    \bcK_d(X_{u,\theta})
    =
    \kappa_S^{(d)}(0)
    u^{\otimes d}.
\]

We now expand $\kappa_S^{(d)}(0)$ in $\theta$. At $\theta=0$,
\[
    \kappa_S(t)
    =
    \frac{t^2}{2},
\]
so $\kappa_S^{(d)}(0)=0$. Moreover,
\[
    \left.
    \frac{\partial}{\partial\theta}
    \kappa_S(t)
    \right|_{\theta=0}
    =
    \E
    \left[
    h_M(U)e^{tU-t^2/2}
    \right]
    -
    \E h_M(U).
\]
Using the generating function of the probabilists' Hermite polynomials,
\[
    \left.
    \frac{d^d}{dt^d}
    e^{tU-t^2/2}
    \right|_{t=0}
    =
    H_d(U).
\]
Hence,
\[
    \left.
    \frac{\partial^{d+1}}
    {\partial t^d\partial\theta}
    \kappa_S(t)
    \right|_{t=0,\theta=0}
    =
    \E
    \left[
    h_M(U)H_d(U)
    \right]
    =
    C_{M,d},
\]
where
\[
    C_{M,d}
    =
    \E
    \left[
    H_d(U)^2\mathbf I\{|H_d(U)|\le M\}
    +
    M|H_d(U)|\mathbf I\{|H_d(U)|>M\}
    \right].
\]
By dominated convergence,
\[
    C_{M,d}
    \longrightarrow
    \E H_d(U)^2
    >
    0
\]
as $M\to\infty$. Choose $M=M_d$ sufficiently large that $C_{M,d}>0$. Since $h_M$ is bounded, the Taylor remainder is uniform over $u$ and $p$, and
\[
    \kappa_S^{(d)}(0)
    =
    C_{M,d}\theta
    +
    O_d(\theta^2).
\]
After decreasing $\theta_0$ if necessary,
\[
    \bcK_d(X_{u,\theta})
    =
    \lambda_\theta u^{\otimes d},
    \qquad
    c_d\theta
    \le
    |\lambda_\theta|
    \le
    C_d\theta.
\]
The tilted one-dimensional distribution of $u^\top X_{u,\theta}$ depends only on $\theta$ and not on $u$. Therefore, $\lambda_\theta$ is the same for every $u\in\mathbb S^{p-1}$.

It remains to prove the KL bound. Let
\[
    A(\theta)
    =
    \log
    \E_{U\sim N(0,1)}
    \exp\{\theta h_M(U)\}.
\]
Then
\[
    D(P_{u,\theta}\|\gamma_p)
    =
    \theta
    \E_{P_{u,\theta}}
    h_M(u^\top X_{u,\theta})
    -
    A(\theta).
\]
Taylor expansion gives
\[
    A(\theta)
    =
    \theta\E h_M(U)
    +
    \frac{\theta^2}{2}
    \operatorname{Var}(h_M(U))
    +
    O_d(\theta^3)
\]
and
\[
    \E_{P_{u,\theta}}
    h_M(u^\top X_{u,\theta})
    =
    \E h_M(U)
    +
    \theta
    \operatorname{Var}(h_M(U))
    +
    O_d(\theta^2).
\]
Therefore,
\[
    D(P_{u,\theta}\|\gamma_p)
    =
    \frac{\theta^2}{2}
    \operatorname{Var}(h_M(U))
    +
    O_d(\theta^3)
    \le
    C_d\theta^2
\]
for every $\theta\in(0,\theta_0]$, after decreasing $\theta_0$ if necessary. This proves the perturbation claim and completes the proof.
\end{proof}

\subsection{Proof of Theorem \ref{thm:cumulant-yule-walker}}

\begin{proof}
Let $E_A=\widehat A-A$ and $e_b=\widehat b-b$.
Tensor spectral-norm control gives
\[
    \|(E_A)_{a,\cdot}\|_2
    \leq
    \delta_k,
    \qquad
    |(e_b)_a|
    \leq
    \delta_k,
    \qquad
    a=1,\ldots,m.
\]
Therefore,
\[
    \|E_A\|
    \leq
    \sqrt m\,\delta_k,
    \qquad
    \|e_b\|_2
    \leq
    \sqrt m\,\delta_k.
\]
Under \eqref{eq:ar-yw-spectral-small-error}, the singular-value perturbation inequality gives
\[
    \sigma_{\min}(\widehat A)
    \geq
    s_{A,r}-\|E_A\|
    \geq
    \frac{s_{A,r}}{2},
\]
and hence $\|\widehat A^\dagger\|\leq2/s_{A,r}$.
Using \eqref{eq:ar-truncated-linear-system},
\[
    \widehat\phi_{d,k}^{(r)}
    -
    \phi^{(r)}
    =
    \widehat A^\dagger
    \left(
    e_b-E_A\phi^{(r)}+q_r
    \right),
\]
which proves \eqref{eq:ar-yw-spectral-coefficient-bound}.
The high-probability conclusion follows from Lemma~\ref{lem:ar-cumulant-bandability} and Theorem~\ref{thm:tapered-cumulant-upper-bound}.
\end{proof}

\subsection{Proofs of Theorem \ref{thm:cumulant-ma} and Remark \ref{rem:ma-local-global}}
\label{sec:proof-cumulant-ma}

\begin{proof}
Because $\mathcal N_{\rho_0}$ is compact and the objective function
$\eta\mapsto\|\widehat b-M(\eta)\|_2^2$ is continuous, the set of minimizers
over $\mathcal N_{\rho_0}$ is nonempty. Fix any such minimizer
$\widehat\eta$ and define
\[
    \bcE_k
    =
    \widehat{\bcK}_{d,T,k}
    -
    \bcK_d.
\]

For every $a=1,\ldots,m$, the difference $\widehat b_a-b_a$ is an entry of
$\bcE_k$. Since the tensor spectral norm dominates the absolute value of each
entry,
\[
    |\widehat b_a-b_a|
    \le
    \|\bcE_k\|.
\]
Therefore,
\begin{equation}
\label{eq:ma-cumulant-md-b-error}
    \|\widehat b-b\|_2
    \le
    \sqrt m\,\|\bcE_k\|.
\end{equation}

Since $\widehat\eta$ minimizes
$\|\widehat b-M(\eta)\|_2^2$ over $\mathcal N_{\rho_0}$ and
$\eta_\star\in\mathcal N_{\rho_0}$,
\[
    \|\widehat b-M(\widehat\eta)\|_2
    \le
    \|\widehat b-M(\eta_\star)\|_2
    =
    \|\widehat b-b\|_2.
\]
Hence, by the triangle inequality and
\eqref{eq:ma-cumulant-md-b-error},
\begin{equation}
\label{eq:ma-cumulant-md-map-upper}
    \|M(\widehat\eta)-M(\eta_\star)\|_2
    \le
    2\|\widehat b-b\|_2
    \le
    2\sqrt m\,\|\bcE_k\|.
\end{equation}

For any $\eta\in\mathcal N_{\rho_0}$,
\[
    M(\eta)-M(\eta_\star)
    =
    J_\star(\eta-\eta_\star)
    +
    R(\eta),
\]
where
\[
    R(\eta)
    =
    \int_0^1
    \left\{
    \nabla M\bigl(\eta_\star+t(\eta-\eta_\star)\bigr)
    -
    J_\star
    \right\}
    (\eta-\eta_\star)\,dt.
\]
The Lipschitz condition on $\nabla M$ gives
\[
\begin{split}
    \|R(\eta)\|_2
    &\le
    \int_0^1
    Lt\|\eta-\eta_\star\|_2^2\,dt
    \\
    &=
    \frac{L}{2}
    \|\eta-\eta_\star\|_2^2.
\end{split}
\]
Consequently,
\[
\begin{split}
    \|M(\eta)-M(\eta_\star)\|_2
    &\ge
    \|J_\star(\eta-\eta_\star)\|_2
    -
    \|R(\eta)\|_2
    \\
    &\ge
    s\|\eta-\eta_\star\|_2
    -
    \frac{L}{2}
    \|\eta-\eta_\star\|_2^2.
\end{split}
\]
Since $\eta\in\mathcal N_{\rho_0}$ and
$\rho_0\le s/(2L)$,
\[
\begin{split}
    \|M(\eta)-M(\eta_\star)\|_2
    &\ge
    \left(
    s-\frac{L\rho_0}{2}
    \right)
    \|\eta-\eta_\star\|_2
    \\
    &\ge
    \frac{3s}{4}
    \|\eta-\eta_\star\|_2.
\end{split}
\]
Applying this inequality with $\eta=\widehat\eta$ and combining it with
\eqref{eq:ma-cumulant-md-map-upper} yields
\[
    \frac{3s}{4}
    \|\widehat\eta-\eta_\star\|_2
    \le
    2\sqrt m\,\|\bcE_k\|.
\]
Thus
\[
\begin{split}
    \|\widehat\eta-\eta_\star\|_2
    &\le
    \frac{8\sqrt m}{3s}
    \|\bcE_k\|
    \\
    &\le
    \frac{4\sqrt m}{s}
    \|\widehat{\bcK}_{d,T,k}-\bcK_d\|.
\end{split}
\]
This proves the deterministic claim.

For the independent-replicate result,
Lemma~\ref{lem:ma-cumulant-bandability} gives
\[
    \bcK_d
    \in
    \mathcal K_{\alpha,\beta_{\mathrm{MA}}}^d.
\]
Under the assumptions of
Theorem~\ref{thm:tapered-cumulant-upper-bound}, with probability at least
$1-Ce^{-x}$,
\[
    \|\widehat{\bcK}_{d,T,k}-\bcK_d\|
    \le
    C_{\alpha,d}
    \beta_{\mathrm{MA}}
    k^{-\alpha-d/2+1}
    +
    C_{\gamma,d,K}
    \Delta_{k,x}(1+\Delta_{k,x})^{d-1}.
\]
Substituting this inequality into the deterministic perturbation bound gives
\[
    \|\widehat\eta-\eta_\star\|_2
    \le
    \frac{4\sqrt m}{s}
    \left[
    C_{\alpha,d}
    \beta_{\mathrm{MA}}
    k^{-\alpha-d/2+1}
    +
    C_{\gamma,d,K}
    \Delta_{k,x}(1+\Delta_{k,x})^{d-1}
    \right]
\]
on the same event. This completes the proof.
\end{proof}

\begin{proof}[Proof of Remark~\ref{rem:ma-local-global}]
Define
\[
    \bcE_k
    =
    \widehat{\bcK}_{d,T,k}
    -
    \bcK_d.
\]
For every selected cumulant entry,
\[
    |\widehat b_a-b_a|
    \le
    \|\bcE_k\|,
\]
and hence
\[
    \|\widehat b-b\|_2
    \le
    \sqrt m\,\|\bcE_k\|.
\]

Since $\widetilde\eta$ is a global minimizer over $\Theta$ and
$\eta_\star\in\Theta$,
\[
    \|\widehat b-M(\widetilde\eta)\|_2
    \le
    \|\widehat b-M(\eta_\star)\|_2
    =
    \|\widehat b-b\|_2.
\]
It follows that
\[
\begin{split}
    \|M(\widetilde\eta)-M(\eta_\star)\|_2
    &\le
    \|\widehat b-M(\widetilde\eta)\|_2
    +
    \|\widehat b-M(\eta_\star)\|_2
    \\
    &\le
    2\|\widehat b-b\|_2
    \\
    &\le
    2\sqrt m\,\|\bcE_k\|.
\end{split}
\]
Suppose that
\[
    2\sqrt m\,\|\bcE_k\|
    <
    c_0.
\]
If $\|\widetilde\eta-\eta_\star\|_2>\rho_0$, then the separation condition
would imply
\[
    \|M(\widetilde\eta)-M(\eta_\star)\|_2
    \ge
    c_0,
\]
contradicting the preceding upper bound. Therefore,
$\widetilde\eta\in\mathcal N_{\rho_0}$.

Since $\widetilde\eta\in\mathcal N_{\rho_0}$ and
\[
    \|\widehat b-M(\widetilde\eta)\|_2
    \le
    \|\widehat b-M(\eta_\star)\|_2,
\]
the deterministic argument in the proof of
Theorem~\ref{thm:cumulant-ma} applies with $\widehat\eta$ replaced by
$\widetilde\eta$. Consequently,
\[
    \|\widetilde\eta-\eta_\star\|_2
    \le
    \frac{4\sqrt m}{s}
    \|\widehat{\bcK}_{d,T,k}-\bcK_d\|.
\]

Finally, under the independent-replicate assumptions of
Theorem~\ref{thm:cumulant-ma}, with probability at least $1-Ce^{-x}$,
\[
    \|\widehat{\bcK}_{d,T,k}-\bcK_d\|
    \le
    C_{\alpha,d}
    \beta_{\mathrm{MA}}
    k^{-\alpha-d/2+1}
    +
    C_{\gamma,d,K}
    \Delta_{k,x}(1+\Delta_{k,x})^{d-1}.
\]
The displayed condition in Remark~\ref{rem:ma-local-global} ensures that the
deterministic separation condition holds on this event. Substitution into the
deterministic bound completes the proof.
\end{proof}

\subsection{Proof of Theorem~\ref{thm:source-localization}}
\label{subsec:proof-source-localization}

\begin{proof}
We first prove the deterministic localization claim. Consider any realization
for which
\[
    \|\widehat{\bcK}_{d,T,k}-\bcK_d\|
    <
    \frac{\operatorname{Gap}(r)}{2}.
\]
For every $c\in[p]$, the normalization $\|g_c\|_2=1$ and the definition of
the tensor spectral norm give
\[
\begin{split}
    |\widehat Q_k(c)-Q(c)|
    &=
    \left|
    \left\langle
    \widehat{\bcK}_{d,T,k}-\bcK_d,
    g_c^{\otimes d}
    \right\rangle
    \right|
    \\
    &\le
    \|\widehat{\bcK}_{d,T,k}-\bcK_d\|.
\end{split}
\]
Consequently, for every $c$ satisfying $|c-c_0|>r$,
\[
\begin{split}
    \widehat Q_k(c)
    &\le
    Q(c)
    +
    \|\widehat{\bcK}_{d,T,k}-\bcK_d\|
    \\
    &\le
    Q(c_0)
    -
    \operatorname{Gap}(r)
    +
    \|\widehat{\bcK}_{d,T,k}-\bcK_d\|.
\end{split}
\]
At the population source location,
\[
    \widehat Q_k(c_0)
    \ge
    Q(c_0)
    -
    \|\widehat{\bcK}_{d,T,k}-\bcK_d\|.
\]
Since
\[
    \|\widehat{\bcK}_{d,T,k}-\bcK_d\|
    <
    \frac{\operatorname{Gap}(r)}{2},
\]
we have
\[
\begin{split}
    Q(c_0)
    -
    \operatorname{Gap}(r)
    +
    \|\widehat{\bcK}_{d,T,k}-\bcK_d\|
    &<
    Q(c_0)
    -
    \|\widehat{\bcK}_{d,T,k}-\bcK_d\|
    \\
    &\le
    \widehat Q_k(c_0).
\end{split}
\]
Therefore, every $c$ with $|c-c_0|>r$ has strictly smaller empirical score
than $c_0$. No maximizer in
\eqref{eq:spatial-location-estimator} can lie outside the radius-$r$
neighborhood of $c_0$, and every maximizer $\widehat c$ satisfies
\[
    |\widehat c-c_0|
    \le
    r.
\]
This proves the deterministic claim.

For the probabilistic result,
Lemma~\ref{lem:source-cumulant-bandability} gives
\[
    \bcK_d
    \in
    \mathcal K_{\alpha,\beta_{\mathrm{src}}}^d.
\]
Under the assumptions of
Theorem~\ref{thm:tapered-cumulant-upper-bound}, with probability at least
$1-Ce^{-x}$,
\[
    \|\widehat{\bcK}_{d,T,k}-\bcK_d\|
    \le
    C_{\alpha,d}
    \beta_{\mathrm{src}}
    k^{-\alpha-d/2+1}
    +
    C_{\gamma,d,K}
    \Delta_{k,x}
    (1+\Delta_{k,x})^{d-1}.
\]
If
\[
    C_{\alpha,d}
    \beta_{\mathrm{src}}
    k^{-\alpha-d/2+1}
    +
    C_{\gamma,d,K}
    \Delta_{k,x}
    (1+\Delta_{k,x})^{d-1}
    <
    \frac{\operatorname{Gap}(r)}{2},
\]
then the deterministic condition holds on this event. Applying the first part
of the theorem shows that, with probability at least $1-Ce^{-x}$, every
maximizer $\widehat c$ satisfies
\[
    |\widehat c-c_0|
    \le
    r.
\]
This completes the proof.
\end{proof}

\subsection{Proofs of Lemmas in the Main Text}

\subsubsection{Proof of Lemma \ref{lem:ar-cumulant-bandability}}

\begin{proof}
Let
\[
    H=\max_{0\leq i\leq d-1}h_i,
    \qquad
    L=\min_{0\leq i\leq d-1}h_i.
\]
The cumulant formula for a linear process gives
\[
    \kappa_d(h_0,\ldots,h_{d-1})
    =
    \tau_d
    \sum_{s=H}^{\infty}
    \prod_{i=0}^{d-1}
    \psi_{s-h_i}.
\]
Writing $s=H+u$ and using the geometric bound on $\psi_\ell$ yields
\[
\begin{aligned}
    \left|
    \kappa_d(h_0,\ldots,h_{d-1})
    \right|
    &\leq
    |\tau_d|C_\psi^d
    \sum_{u=0}^{\infty}
    \rho^{
    \sum_{i=0}^{d-1}(H+u-h_i)
    }\\
    &=
    |\tau_d|C_\psi^d
    \sum_{u=0}^{\infty}
    \rho^{
    du+\sum_{i=0}^{d-1}(H-h_i)
    }.
\end{aligned}
\]
Because
\[
    \sum_{i=0}^{d-1}(H-h_i)
    \geq
    H-L,
\]
we obtain
\[
    \left|
    \kappa_d(h_0,\ldots,h_{d-1})
    \right|
    \leq
    |\tau_d|C_\psi^d
    \rho^{H-L}
    \sum_{u=0}^{\infty}\rho^{du}
    =
    \frac{
    |\tau_d|C_\psi^d
    }{
    1-\rho^d
    }
    \rho^{H-L}.
\]
The asserted polynomial bandability bound follows from
\[
    \rho^D
    \leq
    \left\{
    \sup_{x\geq 0}
    (1+x)^{\alpha+d-1}\rho^x
    \right\}
    (1+D)^{-\alpha-d+1}.
\]
\end{proof}

\subsubsection{Proof of Lemma \ref{lem:ma-cumulant-bandability}}

\begin{proof}
By \eqref{eq:ma-cumulant-identity}, a summand is nonzero only if
\[
    a-h_\ell
    \in
    \{0,\ldots,q\},
    \qquad
    \ell=0,\ldots,d-1.
\]
If $\operatorname{diam}(h_0,\ldots,h_{d-1})>q$, no integer $a$ can satisfy these constraints for all $\ell=0,\ldots,d-1$. Hence $\kappa_d(h_0,\ldots,h_{d-1})=0$.

If $\operatorname{diam}(h_0,\ldots,h_{d-1})\le q$, then the number of nonzero terms in \eqref{eq:ma-cumulant-identity} is at most $q+1$. Therefore,
\[
    |\kappa_d(h_0,\ldots,h_{d-1})|
    \le
    |\tau_d|
    (q+1)
    (1\vee\|\theta\|_\infty)^d.
\]
Combining this bound with exact bandedness gives, for all $h_0,\ldots,h_{d-1}$,
\[
    |\kappa_d(h_0,\ldots,h_{d-1})|
    \le
    |\tau_d|
    (q+1)
    (1\vee\|\theta\|_\infty)^d
    (1+q)^{\alpha+d-1}
    \left(
    1+\operatorname{diam}(h_0,\ldots,h_{d-1})
    \right)^{-(\alpha+d-1)}.
\]
This proves the claim.
\end{proof}

\subsubsection{Proof of Lemma~\ref{lem:source-cumulant-bandability}}
\label{subsec:proof-source-cumulant-bandability}

\begin{proof}
By \eqref{eq:spatial-cumulant-representation}, for any multi-index $i=(i_1,\ldots,i_d)\in[p]^d$,
\[
    (\mathcal K_d)_i
    =
    \sum_{\ell=1}^L
    \tau_{d,\ell}
    \prod_{a=1}^d
    a_{\ell i_a}.
\]
Let $D=\operatorname{diam}(i)$. For each $\ell$, the triangle inequality gives
\[
    D
    =
    \max_{a,b\in[d]} |i_a-i_b|
    \le
    \sum_{a=1}^d |i_a-c_\ell|.
\]
Using the decay condition \eqref{eq:spatial-loading-decay}, we obtain
\[
    \prod_{a=1}^d |a_{\ell i_a}|
    \le
    A^d
    \rho^{\sum_{a=1}^d |i_a-c_\ell|}
    \le
    A^d\rho^D,
\]
where the last inequality uses $\rho\in(0,1)$. Therefore
\[
    |(\mathcal K_d)_i|
    \le
    \sum_{\ell=1}^L
    |\tau_{d,\ell}|
    \prod_{a=1}^d |a_{\ell i_a}|
    \le
    L
    \max_{1\le \ell\le L}|\tau_{d,\ell}|
    A^d
    \rho^D.
\]
By the definition of $\beta_{\mathrm{src}}$,
\[
    L
    \max_{1\le \ell\le L}|\tau_{d,\ell}|
    A^d
    \rho^D
    \le
    \beta_{\mathrm{src}}
    (1+D)^{-(\alpha+d-1)}.
\]
Since $i\in[p]^d$ was arbitrary, $\mathcal K_d\in\mathcal K_{\alpha,\beta_{\mathrm{src}}}^d$.
\end{proof}

\section{Technical Lemmas}

\begin{lemma} \label{lemma_balanced_sign_packing} There exist absolute constants $k_0\ge 2$, $c_{\rm code}>0$, and $\rho_0\in(0,1)$ such that, for every integer $k\ge k_0$, there exists a set $\mathcal E_k\subset\{-1,1\}^k$ satisfying 
\[ |\mathcal E_k| \ge \exp(c_{\rm code}k), \] 
and, for every distinct $\varepsilon,\varepsilon'\in\mathcal E_k$, 
\[ \left| 1-\frac{2d_H(\varepsilon,\varepsilon')}{k} \right| \le \rho_0. \] 
\end{lemma}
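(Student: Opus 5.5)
A Gilbert--Varshamov-type probabilistic argument suffices, with a Hoeffding tail bound for the inner product of two random sign vectors. The quantity in the statement is
\[
1-\frac{2d_H(\varepsilon,\varepsilon')}{k}=\frac1k\sum_{r=1}^k\varepsilon_r\varepsilon_r'.
\]
So we need an exponentially large family of sign vectors whose pairwise normalized inner products are bounded away from $\pm1$ in absolute value. We fix $\rho_0=1/2$.

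\textbf{Step 1: tail bound for one pair.} Draw $N$ vectors $\varepsilon^{(1)},\ldots,\varepsilon^{(N)}$ independently and uniformly from $\{-1,1\}^k$. For a fixed pair $j\ne j'$, the products $\varepsilon^{(j)}_r\varepsilon^{(j')}_r$, $r=1,\ldots,k$, are i.i.d.\ Rademacher variables. Hoeffding's inequality then gives
\[
\PP\left(\left|\frac1k\sum_{r=1}^k\varepsilon^{(j)}_r\varepsilon^{(j')}_r\right|>\rho_0\right)\le 2e^{-k\rho_0^2/2}=2e^{-k/8}.
\]

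\textbf{Step 2: union bound over pairs.} There are fewer than $N^2/2$ pairs, so the probability that some pair violates the bound is at most $N^2e^{-k/8}$. Take $N=\lfloor e^{k/20}\rfloor$. Then $N^2e^{-k/8}\le e^{-k/40}<1$ for every $k\ge1$. Hence there is a realization in which all pairs satisfy the bound.

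\textbf{Step 3: distinctness.} In that realization the vectors are automatically distinct, since two equal vectors would have normalized inner product $1>\rho_0$. Let $\mathcal E_k$ be this set, so $|\mathcal E_k|=N$.

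\textbf{Step 4: the cardinality constant.} This is the only mildly delicate point. The floor in $N=\lfloor e^{k/20}\rfloor$ has to be absorbed into the exponential bound. For $k\ge k_0$ with $k_0$ large enough (for instance $k_0=40$), we have $\lfloor e^{k/20}\rfloor\ge e^{k/20}/2\ge e^{k/40}$. So $c_{\rm code}=1/40$ works, with $\rho_0=1/2$.

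All three constants $k_0$, $c_{\rm code}$ and $\rho_0$ are absolute, as the statement requires.
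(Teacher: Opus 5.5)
Your proposal is correct and follows essentially the same route as the paper: draw i.i.d.\ uniform sign vectors, bound each pair's failure probability by $2e^{-k/8}$ via Hoeffding, and union-bound over pairs with $N=\lfloor e^{ck}\rfloor$. As in the paper, you also get distinctness from the same event and absorb the floor for $k\ge k_0$, arriving at $\rho_0=1/2$. The differences are cosmetic: you apply Hoeffding to the Rademacher products $\varepsilon_r\varepsilon_r'$ rather than to the Binomial$(k,1/2)$ Hamming distance (the two events coincide), and you fix explicit constants $c=1/20$, $k_0=40$, $c_{\rm code}=1/40$, whereas the paper takes any $c<1/16$ with $c_{\rm code}=c/2$.
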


\begin{proof}
We prove the result by the probabilistic method. Let
$N=\lfloor \exp(c k)\rfloor$, where $c>0$ is a sufficiently small absolute constant to be chosen. Draw
$\varepsilon^1,\ldots,\varepsilon^N$ independently and uniformly from
$\{-1,1\}^k$.

For any fixed pair $a\ne b$, the Hamming distance
$d_H(\varepsilon^a,\varepsilon^b)$ has distribution
$\operatorname{Binomial}(k,1/2)$. By Hoeffding's inequality,
\[
    \PP
    \left\{
    \left|
    \frac{d_H(\varepsilon^a,\varepsilon^b)}{k}
    -
    \frac12
    \right|
    >
    \frac14
    \right\}
    \le
    2\exp(-k/8).
\]
Therefore, by the union bound,
\[
\begin{split}
    \PP
    \left\{
    \exists\, a<b:
    \left|
    \frac{d_H(\varepsilon^a,\varepsilon^b)}{k}
    -
    \frac12
    \right|
    >
    \frac14
    \right\}
    &\le
    \binom{N}{2}2\exp(-k/8)  \\
    &\le
    N^2\exp(-k/8) \\
    &\le
    \exp(2ck-k/8).
\end{split}
\]
Choose $c<1/16$. Then $2c-1/8<0$, and hence the last probability is strictly smaller than $1$ for every $k\ge 1$. Thus, with positive probability, all pairs $a<b$ satisfy
\[
    \left|
    \frac{d_H(\varepsilon^a,\varepsilon^b)}{k}
    -
    \frac12
    \right|
    \le
    \frac14 .
\]
Fix such a realization and set
\[
    \mathcal E_k=\{\varepsilon^1,\ldots,\varepsilon^N\}.
\]
The same event rules out duplicate codewords, since duplicates would have Hamming distance zero and therefore would violate the preceding display. Hence
$|\mathcal E_k|=N$.

It remains only to lower bound $N$. Choose $k_0$ sufficiently large so that, for all $k\ge k_0$,
\[
    \lfloor \exp(c k)\rfloor
    \ge
    \exp(ck/2).
\]
Then for every $k\ge k_0$,
\[
    |\mathcal E_k|
    =
    N
    \ge
    \exp(ck/2).
\]
Thus the cardinality condition holds with $c_{\rm code}=c/2$.

Finally, for every distinct $\varepsilon,\varepsilon'\in\mathcal E_k$,
\[
    \left|
    1-\frac{2d_H(\varepsilon,\varepsilon')}{k}
    \right|
    =
    2\left|
    \frac{d_H(\varepsilon,\varepsilon')}{k}
    -
    \frac12
    \right|
    \le
    \frac12 .
\]
Therefore the lemma holds with $\rho_0=1/2$.
\end{proof}

\begin{lemma}[Block-diagonal tensor norm]
\label{lem:block-diagonal-tensor}
Let $d\ge2$.
Suppose $\mathcal A_1,\ldots,\mathcal A_L$ are order-$d$ tensors supported on disjoint coordinate blocks $I_1^d,\ldots,I_L^d$, where $I_1,\ldots,I_L\subset[p]$ are disjoint.
Then
\[
    \left\|
    \sum_{\ell=1}^L \mathcal A_\ell
    \right\|
    =
    \max_{1\le \ell\le L}
    \|\mathcal A_\ell\|.
\]
\end{lemma}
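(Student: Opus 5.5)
We prove the two inequalities $\le$ and $\ge$ separately. Write $\mathcal A=\sum_{\ell}\mathcal A_\ell$ and, for a unit vector $u\in\mathbb S^{p-1}$, let $u^{(\ell)}$ denote its restriction to the block $I_\ell$ (zero outside $I_\ell$).

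For the lower bound, fix $\ell$ and any unit vectors $u_1,\ldots,u_d$. Because $\mathcal A_\ell$ is supported on $I_\ell^d$, we have $\langle \mathcal A_\ell,u_1\otimes\cdots\otimes u_d\rangle=\langle \mathcal A_\ell,u_1^{(\ell)}\otimes\cdots\otimes u_d^{(\ell)}\rangle$. Hence the supremum defining $\|\mathcal A_\ell\|$ may be taken over unit vectors supported in $I_\ell$. For such vectors, every other $\mathcal A_{\ell'}$ with $\ell'\neq\ell$ contributes zero, since its support $I_{\ell'}^d$ is disjoint from $I_\ell^d$. This gives $\langle\mathcal A,u_1\otimes\cdots\otimes u_d\rangle=\langle\mathcal A_\ell,u_1\otimes\cdots\otimes u_d\rangle$, and therefore $\|\mathcal A\|\ge\|\mathcal A_\ell\|$ for every $\ell$.

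For the upper bound, fix unit vectors $u_1,\ldots,u_d$. By the support argument above,
\[
\langle\mathcal A,u_1\otimes\cdots\otimes u_d\rangle=\sum_\ell\langle\mathcal A_\ell,u_1^{(\ell)}\otimes\cdots\otimes u_d^{(\ell)}\rangle .
\]
By homogeneity, each term is bounded in absolute value by $\|\mathcal A_\ell\|\prod_{j=1}^d\|u_j^{(\ell)}\|_2$. Set $a_{j\ell}=\|u_j^{(\ell)}\|_2$. Since the blocks are disjoint, $\sum_\ell a_{j\ell}^2\le1$ for each $j$. It therefore suffices to show $\sum_\ell\prod_{j=1}^d a_{j\ell}\le1$. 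Since $d\ge2$ and every $a_{j\ell}\le1$, we may drop the factors $j\ge3$ and get
\[
\sum_\ell\prod_{j=1}^d a_{j\ell}\le\sum_\ell a_{1\ell}a_{2\ell}\le\Bigl(\sum_\ell a_{1\ell}^2\Bigr)^{1/2}\Bigl(\sum_\ell a_{2\ell}^2\Bigr)^{1/2}\le1
\]
by Cauchy--Schwarz. Thus $|\langle\mathcal A,u_1\otimes\cdots\otimes u_d\rangle|\le\max_\ell\|\mathcal A_\ell\|$, and taking the supremum gives $\|\mathcal A\|\le\max_\ell\|\mathcal A_\ell\|$.

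The only delicate step is the upper bound, specifically the observation that $d\ge2$ is exactly what makes Cauchy--Schwarz applicable. For $d=1$ the sum $\sum_\ell a_{1\ell}$ could be as large as $\sqrt L$, so the statement would fail. Everything else is bookkeeping with supports.
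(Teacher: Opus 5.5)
Your proof is correct and follows the paper's argument: the same lower bound via test vectors supported on a single block, and the same reduction of the upper bound to $\sum_\ell\prod_j a_{j\ell}\le 1$. The only difference is in that final inequality. The paper applies H\"older with exponent $d$ and then uses $\|\cdot\|_{\ell^d}\le\|\cdot\|_{\ell^2}$, whereas you drop the factors $j\ge3$ using $a_{j\ell}\le1$ and apply Cauchy--Schwarz to the remaining two; both are equally valid.
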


\begin{proof}
The lower bound follows by testing vectors supported on the block attaining the maximum.
For the upper bound, let $u_1,\ldots,u_d\in\mathbb S^{p-1}$ and define
\[
    a_{j\ell}
    =
    \|(u_j)_{I_\ell}\|_2,
    \qquad
    j\in[d],\quad \ell\in[L].
\]
Since the blocks are disjoint, we have
\[
    \sum_{\ell=1}^L a_{j\ell}^2
    \le
    \|u_j\|_2^2
    =
    1,
    \qquad j\in[d].
\]
Therefore,
\be\label{eq_lemma_1_1}
    \left|
    \left\langle
    \sum_{\ell=1}^L \mathcal A_\ell,
    u_1\otimes\cdots\otimes u_d
    \right\rangle
    \right|
    \le
    \max_{1\le \ell\le L}\|\mathcal A_\ell\|
    \sum_{\ell=1}^L
    \prod_{j=1}^d a_{j\ell}.
\ee
By Holder's inequality,
\[
    \sum_{\ell=1}^L
    \prod_{j=1}^d a_{j\ell}
    \le
    \prod_{j=1}^d
    \left(
    \sum_{\ell=1}^L a_{j\ell}^d
    \right)^{1/d}.
\]
Since $d\ge2$ and $a_{j\ell}\ge0$,
\[
    \left(
    \sum_{\ell=1}^L a_{j\ell}^d
    \right)^{1/d}
    \le
    \left(
    \sum_{\ell=1}^L a_{j\ell}^2
    \right)^{1/2}
    \le
    1.
\]
Thus
\[
    \sum_{\ell=1}^L
    \prod_{j=1}^d a_{j\ell}
    \le
    1. 
\]
Together with \eqref{eq_lemma_1_1}, it follows 
\[
    \left\|
    \sum_{\ell=1}^L \mathcal A_\ell
    \right\|
    \le
    \max_{1\le \ell\le L}
    \|\mathcal A_\ell\|.
\]
Combining the lower and upper bounds proves the lemma.
\end{proof}

\begin{lemma}[Corollary 1.4 in \cite{gotzeConcentrationInequalitiesPolynomials2021}]\label{lemma_subexpo_concentration}
     Let $X_1, \ldots, X_n$ be a set of independent, centered random variables with $\left\|X_i\right\|_{\Phi_\alpha} \leq M$ for some $\alpha \in(0,1]$. Let $a \in \mathbb{R}^n$. For any $t \geq 0$ it holds

$$
\mathbb{P}\left(\left|\sum_{i=1}^n a_i X_i\right| \geq t\right) \leq 2 \exp \left(-\frac{1}{C_\alpha} \min \left(\frac{t^2}{M^2\|a\|^2}, \frac{t^\alpha}{M^\alpha \max _i\left|a_i\right|^\alpha}\right)\right). 
$$

\end{lemma}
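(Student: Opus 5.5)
The plan is a moment (Orlicz--Markov) argument: bound the $L_q$ norms of $S=\sum_{i=1}^n a_iX_i$ sharply in $q$, then convert this moment bound into the stated mixed tail by Markov's inequality with an optimized $q$. Concretely, the target intermediate estimate is that for every $q\ge2$,
\[
    \|S\|_{L_q}
    \le
    C_\alpha M
    \left(
    \sqrt q\,\|a\|_2
    +
    q^{1/\alpha}\max_i|a_i|
    \right).
\]
Given this, for $q\ge 2$ Markov's inequality gives $\PP(|S|\ge e\|S\|_{L_q})\le e^{-q}$. I will then choose
\[
    q=\frac1{C_\alpha'}\min\left\{\frac{t^2}{M^2\|a\|_2^2},\ \frac{t^\alpha}{M^\alpha\max_i|a_i|^\alpha}\right\}.
\]
With this choice both terms on the right of the moment bound are at most $t/(2e)$, which yields the tail $e^{-q}$. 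The range $q<2$ is absorbed by the prefactor $2$ and by enlarging $C_\alpha$, since there the claimed bound exceeds $1$ once $C_\alpha$ is large.

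For the moment bound, I first symmetrize. Since the $X_i$ are independent and centered, $\|S\|_{L_q}\le 2\|\sum_i a_i\varepsilon_i X_i\|_{L_q}$ with independent Rademacher signs $\varepsilon_i$. The condition $\|X_i\|_{\Phi_\alpha}\le M$ gives $\PP(|X_i|>s)\le 2e^{-(s/M)^\alpha}$. Hence each $|X_i|$ is stochastically dominated by $M(\log 2+W_i)^{1/\alpha}$, where the $W_i$ are independent standard exponentials, and by contraction I may replace $X_i$ by the symmetric Weibull-type variable $\varepsilon_i M(\log2+W_i)^{1/\alpha}$. The constant shift $\log 2$ only contributes a Rademacher sum $\lesssim_\alpha M\sqrt q\|a\|_2$ by Khintchine. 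For $\alpha\le1$ these variables have log-convex tails $N(s)=s^\alpha$. The Hitczenko--Montgomery-Smith--Oleszkiewicz moment formula for such symmetric variables then gives
\[
    \Bigl\|\sum_i a_i\varepsilon_i W_i^{1/\alpha}\Bigr\|_{L_q}
    \asymp_\alpha
    \sqrt q\,\|a\|_2
    +
    q^{1/\alpha}\|a\|_q.
\]

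The main obstacle is the last step: the natural term is $q^{1/\alpha}\|a\|_q$, not $q^{1/\alpha}\|a\|_\infty$, and crude Rosenthal-type bounds would lose an extra factor of $q$. My approach is interpolation. Since $\|a\|_q\le\|a\|_2^{2/q}\|a\|_\infty^{1-2/q}$, weighted AM--GM gives
\[
    q^{1/\alpha}\|a\|_q
    \le
    \tfrac{2}{q}\,\sqrt q\,\|a\|_2
    +
    \bigl(1-\tfrac2q\bigr)B\,\|a\|_\infty,
    \qquad
    B=q^{(1/\alpha-1/q)/(1-2/q)}.
\]
Here the weights are chosen so that $(\sqrt q)^{2/q}B^{1-2/q}=q^{1/\alpha}$. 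For $q\ge4$ one has $B\le q^{1/\alpha}\cdot q^{O_\alpha(1/q)}\le C_\alpha q^{1/\alpha}$, so the $\|a\|_q$ term is absorbed into the two terms of the target estimate. The case $2\le q<4$ is trivial because there $\|a\|_q\le\|a\|_2$.

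If I prefer not to cite the log-convex moment formula, my fallback is to prove the needed upper bound directly. I would split each $W_i^{1/\alpha}$ at level $q$: the truncated part is bounded by $q^{1/\alpha-1}$ times a sub-exponential variable, and Bernstein controls it. The tail part is handled by counting how many indices exceed level $q$, which is stochastically a Poisson count of mean $\sum_i e^{-q}$. I expect this truncation route to be more tedious, and the delicate part would again be avoiding the extra $q$ factor.
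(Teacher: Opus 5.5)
The paper gives no proof of this lemma. It is imported from G\"otze--Sambale--Sinulis (Corollary~1.4) and used as a black box in Lemmas~\ref{lem:local-empirical-moment-bound} and~\ref{lem:uniform-local-empirical-moment-all-orders}, so on the paper's side the route is just the citation.

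Your argument is correct and self-contained apart from the moment formula you cite. It follows the standard moment path: symmetrization, then stochastic domination, then the HMO moment bound, then interpolation, then Markov at an optimized $q$. The individual steps check out:
\begin{itemize}
\item The domination $\PP(|X_i|>s)\le\min\{1,2e^{-(s/M)^\alpha}\}=\PP\{M(\log 2+W_i)^{1/\alpha}>s\}$ holds. It becomes a contraction step via a quantile coupling, which preserves independence across $i$.
\item The Hitczenko--Montgomery-Smith--Oleszkiewicz bound $\sqrt q\|a\|_2+q^{1/\alpha}\|a\|_q$ applies, since $N(s)=s^\alpha$ is concave for $\alpha\le 1$.
\item In your interpolation, indeed $B=q^{1/\alpha}q^{(2-\alpha)/(\alpha(q-2))}\le 4^{1/\alpha}q^{1/\alpha}$ for $q\ge4$.
\item In the Markov step, the case $q<2$ is vacuous once $C_\alpha\ge 2C_\alpha'/\log 2$.
\end{itemize}

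Two points to tighten. First, the $\log 2$ shift is not literally additive. Use $(\log 2+W_i)^{1/\alpha}\le 2^{1/\alpha-1}\{(\log 2)^{1/\alpha}+W_i^{1/\alpha}\}$ and apply contraction once more, then Khintchine. Second, the truncation fallback is unnecessary, and as you phrased it, it would not close. If the bounded part is controlled through ``$q^{1/\alpha-1}$ times a sub-exponential variable'', the $\psi_1$-norm form of Bernstein loses a factor $q^{1/\alpha-1}$ on the $\sqrt q\|a\|_2$ term. You would need the variance form instead, with variance $O_\alpha(1)$ and scale $q^{1/\alpha-1}$.

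Compared with the citation, your version makes visible where $C_\alpha$ comes from, with the upper half of the HMO moment formula as the only external input. The citation buys brevity.
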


\begin{lemma}[Local empirical moment tensor bound]
\label{lem:local-empirical-moment-bound}
Let $X_1,\ldots,X_n$ be i.i.d. copies of a centered random vector
$X\in\R^p$, and let
\[
    \widehat{\bcM}_d
    =
    \frac1n\sum_{\ell=1}^n X_\ell^{\otimes d},
    \qquad
    \bcM_d=\E X^{\otimes d}.
\]
Assume that, for some $0<\gamma\le d$,
$
    \sup_{u\in\mathbb S^{p-1}}
    \|u^\top X\|_{\Phi_\gamma}
    \le K .
$
For an interval (consecutive integers) $I\subset[p]$, let $\Pi_I(\widehat{\bcM}_d-\bcM_d)$ be the tensor obtained by keeping only the entries of $\widehat{\bcM}_d-\bcM_d$ whose indices all belong to $I$.
Then for every $m\in[p]$ and every $x\ge0$, with probability at least $1-Ce^{-x}$,
\[
    \max_{I:\ |I|\le m}
    \|\Pi_I(\widehat{\bcM}_d-\bcM_d)\|
    \le
    C_{\gamma,d,K}
    \left\{
    \sqrt{\frac{m+\log p+x}{n}}
    +
    \frac{(m+\log p+x)^{d/\gamma}}{n}
    \right\},
\]
where the maximum is over all intervals $I\subset[p]$ with length at most $m$.
\end{lemma}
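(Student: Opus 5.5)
To prove the lemma, we fix one interval $I$ with $|I|=m'\le m$ and bound $\|\Pi_I(\widehat{\bcM}_d-\bcM_d)\|$ at deviation level $x'$. We then take a union bound over the at most $pm$ intervals of length at most $m$, using $x'=x+\log(pm)\lesssim m+\log p+x$. For fixed $I$, the tensor $\Pi_I(\widehat{\bcM}_d-\bcM_d)$ is the order-$d$ empirical moment error of the subvector $X_I\in\R^{m'}$, so
\[
\|\Pi_I(\widehat{\bcM}_d-\bcM_d)\|=\sup_{u_1,\ldots,u_d\in\mathbb S^{m'-1}}\Bigl|\frac1n\sum_{\ell=1}^n\Bigl(\prod_{a=1}^d u_a^\top X_{\ell,I}-\E\prod_{a=1}^d u_a^\top X_I\Bigr)\Bigr|.
\]
Projections of $X_I$ inherit the bound $\|u^\top X_I\|_{\Phi_\gamma}\le K$.

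First we discretize. Take a $1/4$-net $\mathcal N$ of $\mathbb S^{m'-1}$ with $|\mathcal N|\le 9^{m'}$. The standard multilinear argument (write $u_a=v_a+(u_a-v_a)$, expand, and note that each difference term is at most $4^{-1}$ times the norm) gives
\[
\|\Pi_I(\cdot)\|\le C_d\max_{v_1,\ldots,v_d\in\mathcal N}\bigl|\langle\Pi_I(\cdot),v_1\otimes\cdots\otimes v_d\rangle\bigr|.
\]
The union then runs over $9^{dm'}$ tuples, which costs $C_d m$ in the exponent.

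Second, for fixed unit vectors, set $Z_\ell=\prod_a v_a^\top X_\ell-\E\prod_a v_a^\top X$. By the generalized Hölder inequality for Orlicz norms, $\|\prod_a v_a^\top X\|_{\Phi_{\gamma/d}}\le C_{\gamma,d}K^d$, and centering keeps this up to constants. The sum $\frac1n\sum_\ell Z_\ell$ is therefore a sum of i.i.d. centered $\Phi_{\gamma/d}$ variables with $\gamma/d\in(0,1]$. Applying Lemma~\ref{lemma_subexpo_concentration} with $a_\ell=1/n$, so that $\|a\|_2=n^{-1/2}$ and $\max|a_\ell|=n^{-1}$, gives
\[
\Bigl|\frac1n\sum_\ell Z_\ell\Bigr|\le C_{\gamma,d,K}\Bigl(\sqrt{t/n}+t^{d/\gamma}/n\Bigr)
\]
with probability at least $1-2e^{-t}$. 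The $\min$ in that lemma inverts to exactly these two regimes.

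Third, we choose $t=C_d m+\log(pm)+x$. The union bound over the net tuples and over intervals then leaves failure probability at most $Ce^{-x}$. Since $\log m\le\log p$, we have $t\lesssim m+\log p+x$, and the right-hand side becomes the claimed $\sqrt{(m+\log p+x)/n}+(m+\log p+x)^{d/\gamma}/n$.

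The main obstacle is the first-step reduction combined with the heavy-tail exponent. The net argument for a nonsymmetric multilinear form must be done carefully: we take the supremum over $d$ separate vectors and absorb the geometric series. The Orlicz product bound is also delicate because $\Phi_{\gamma/d}$ is only a quasi-norm for $\gamma/d<1$, so constants depending on $\gamma,d$ must absorb the triangle-inequality loss when centering. We would handle this by working with tail probabilities directly: $\PP(|\prod_a v_a^\top X|>s)\le\sum_a\PP(|v_a^\top X|>s^{1/d})$. From this tail bound we read off the $\Phi_{\gamma/d}$ bound, and we treat the mean as at most $C_{\gamma,d}K^d$.
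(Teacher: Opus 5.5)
Your proposal follows the paper's proof essentially step for step. Both use a $1/4$-net on each interval, a $\Phi_{\gamma/d}$ bound for the product of $d$ projections (the paper cites Lemmas A.1 and A.3 of G\"otze et al.\ where you argue via tails), Lemma~\ref{lemma_subexpo_concentration} with $a_\ell=1/n$, and a union bound over net tuples and intervals. Your interval count $pm$ versus the paper's $p^2$ is immaterial.

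One detail in your discretization step needs repair. Write $T=\Pi_I(\widehat{\bcM}_d-\bcM_d)$ and compare $\langle T,u_1\otimes\cdots\otimes u_d\rangle$ with $\langle T,v_1\otimes\cdots\otimes v_d\rangle$ in one shot via $u_a=v_a+(u_a-v_a)$, as you describe. The error terms then total $(d/4)\|T\|$ if you telescope, or $\{(5/4)^d-1\}\|T\|$ if you expand fully. Either can be absorbed into the left side only when $d\le 3$, so with a $1/4$-net this step gives nothing for $d\ge 4$.

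There are two standard fixes. You can take an $\varepsilon$-net with $\varepsilon=1/(2d)$; its cardinality $(1+4d)^{m'}$ still costs only $C_d m$ in the exponent. Alternatively, do what the paper does and pass to the net one vector at a time. With the other arguments fixed, $u_a\mapsto\langle T,\cdots\otimes u_a\otimes\cdots\rangle$ is a linear functional, so each replacement loses only a factor $(1-1/4)^{-1}$, giving $(4/3)^d$ for every $d$.

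Your closing remark about absorbing a geometric series suggests the iterated expansion $u_a=\sum_j c_{a,j}w_{a,j}$ with $w_{a,j}$ in the net and $|c_{a,j}|\le 4^{-j}$. That also works for all $d$, but it is not the argument written in your first step. With any of these fixes the proof is complete.
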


\begin{proof}
Denote
\[
    \mathcal E
    =
    \widehat{\bcM}_d-\bcM_d .
\]
Fix an interval $I\subset[p]$ with $|I|\le m$.
Let $\mathcal N_I$ be a $1/4$-net of $\mathbb S^{|I|-1}$ such that
\[
    |\mathcal N_I|
    \le
    9^{|I|}
    \le
    9^m.
\]

Since $\Pi_I(\mathcal E)$ is supported on $I^d$, its spectral norm can be computed over vectors supported on $I$:
\[
    \|\Pi_I(\mathcal E)\|
    =
    \sup_{u_1,\ldots,u_d\in\mathbb S^{|I|-1}}
    \left|
    \left\langle
    \Pi_I(\mathcal E),
    u_1\otimes\cdots\otimes u_d
    \right\rangle
    \right|.
\]

For fixed $u_2,\ldots,u_d$, the map
\[
    u_1\mapsto
    \langle \Pi_I(\mathcal E),
    u_1\otimes u_2\otimes\cdots\otimes u_d\rangle
\]
is linear. If $v_1\in\mathcal N_I$ satisfies $\|u_1-v_1\|_2\le\varepsilon$, then
\[
\begin{split}
    &\left|
    \left\langle
    \Pi_I(\mathcal E),
    u_1\otimes u_2\otimes\cdots\otimes u_d
    \right\rangle
    \right| \\
    &\le
    \left|
    \left\langle
    \Pi_I(\mathcal E),
    v_1\otimes u_2\otimes\cdots\otimes u_d
    \right\rangle
    \right|
    +
    \varepsilon
    \sup_{\tilde u_1\in\mathbb S^{|I|-1}}
    \left|
    \left\langle
    \Pi_I(\mathcal E),
    \tilde u_1\otimes u_2\otimes\cdots\otimes u_d
    \right\rangle
    \right|.
\end{split}
\]
Taking the supremum over $u_1$, we have
\[
    \sup_{u_1\in\mathbb S^{|I|-1}}
    \left|
    \left\langle
    \Pi_I(\mathcal E),
    u_1\otimes u_2\otimes\cdots\otimes u_d
    \right\rangle
    \right|
    \le
    \frac{1}{1-\varepsilon}
    \max_{v_1\in\mathcal N_I}
    \left|
    \left\langle
    \Pi_I(\mathcal E),
    v_1\otimes u_2\otimes\cdots\otimes u_d
    \right\rangle
    \right|.
\]
Applying this argument successively to $u_1,\ldots,u_d$ yields
\[
    \|\Pi_I(\mathcal E)\|
    \le
    (1-\varepsilon)^{-d}
    \max_{v_1,\ldots,v_d\in\mathcal N_I}
    \left|
    \left\langle
    \Pi_I(\mathcal E),
    v_1\otimes\cdots\otimes v_d
    \right\rangle
    \right|.
\]
Taking $\varepsilon=1/4$ gives
\[
    \|\Pi_I(\mathcal E)\|
    \le
    \left(\frac43\right)^d
    \max_{v_1,\ldots,v_d\in\mathcal N_I}
    \left|
    \left\langle
    \Pi_I(\mathcal E),
    v_1\otimes\cdots\otimes v_d
    \right\rangle
    \right|.
\]
Since $v_1,\ldots,v_d$ are supported on $I$,
\[
    \left\langle
    \Pi_I(\mathcal E),
    v_1\otimes\cdots\otimes v_d
    \right\rangle
    =
    \left\langle
    \mathcal E,
    v_1\otimes\cdots\otimes v_d
    \right\rangle .
\]
Hence
\be\label{eq_lemma_2_enet}
    \|\Pi_I(\mathcal E)\|
    \le
    C_d
    \max_{v_1,\ldots,v_d\in\mathcal N_I}
    \left|
    \left\langle
    \mathcal E,
    v_1\otimes\cdots\otimes v_d
    \right\rangle
    \right|,
\ee
where one may take $C_d=(4/3)^d$.

For fixed $v_1,\ldots,v_d\in\mathcal N_I$, define
\[
    Z_\ell
    =
    \prod_{a=1}^d v_a^\top X_\ell
    -
    \E
    \prod_{a=1}^d v_a^\top X,
    \qquad
    \ell=1,\ldots,n.
\]
Then
\[
    \left\langle
    \mathcal E,
    v_1\otimes\cdots\otimes v_d
    \right\rangle
    =
    \frac1n
    \sum_{\ell=1}^n Z_\ell.
\]
Since
\[
    \|v_a^\top X_\ell\|_{\Phi_\gamma}
    \le
    K,
    \qquad
    a=1,\ldots,d,
\]
the product satisfies
\[
    \left\|
    \prod_{a=1}^d v_a^\top X_\ell
    \right\|_{\Phi_{\gamma/d}}
    \le
    C_{\gamma,d}K^d.
\]
Therefore, by Lemma A.1 and A.3 in \cite{gotzeConcentrationInequalitiesPolynomials2021}, the centered variable $Z_\ell$ also satisfies
\[
    \|Z_\ell\|_{\Phi_{\gamma/d}}
    \le
    C_{\gamma,d,K}.
\]
By Lemma \ref{lemma_subexpo_concentration}, for every $t\ge0$,
\be\label{eq_lemma_2_prob_bound}
    \PP
    \left\{
    \left|
    \frac1n
    \sum_{\ell=1}^n Z_\ell
    \right|
    >
    C_{\gamma,d,K}
    \left(
    \sqrt{\frac{t}{n}}
    +
    \frac{t^{d/\gamma}}{n}
    \right)
    \right\}
    \le
    2e^{-t}.
\ee

Now we take the union bound.
For any fixed interval $I$ and any fixed tuple
$(v_1,\ldots,v_d)\in\mathcal N_I^d$, \eqref{eq_lemma_2_prob_bound} implies, for every $s\ge0$,
\[
    \PP
    \left\{
    \left|
    \left\langle
    \mathcal E,
    v_1\otimes\cdots\otimes v_d
    \right\rangle
    \right|
    >
    C_{\gamma,d,K}
    \left(
    \sqrt{\frac{s}{n}}
    +
    \frac{s^{d/\gamma}}{n}
    \right)
    \right\}
    \le
    2e^{-s}.
\]
Since $|\mathcal N_I|\le 9^m$, we have
\[
    |\mathcal N_I|^d
    \le
    9^{dm}
    =
    \exp(C_d m).
\]
Therefore, for fixed $I$,
\[
\begin{split}
    &\PP
    \left\{
    \max_{v_1,\ldots,v_d\in\mathcal N_I}
    \left|
    \left\langle
    \mathcal E,
    v_1\otimes\cdots\otimes v_d
    \right\rangle
    \right|
    >
    C_{\gamma,d,K}
    \left(
    \sqrt{\frac{s}{n}}
    +
    \frac{s^{d/\gamma}}{n}
    \right)
    \right\}  \\
    &\qquad\le
    2\exp(C_d m-s).
\end{split}
\]
By \eqref{eq_lemma_2_enet}, after enlarging the constant $C_{\gamma,d,K}$ if necessary,
\[
\begin{split}
    &\PP
    \left\{
    \|\Pi_I(\mathcal E)\|
    >
    C_{\gamma,d,K}
    \left(
    \sqrt{\frac{s}{n}}
    +
    \frac{s^{d/\gamma}}{n}
    \right)
    \right\}  \le
    2\exp(C_d m-s).
\end{split}
\]
Let
\[
    \mathcal I_m
    =
    \{I\subset[p]: I \text{ is an interval and } |I|\le m\}.
\]
Since $|\mathcal I_m|\le p^2$, another union bound gives
\[
\begin{split}
    &\PP
    \left\{
    \max_{I\in\mathcal I_m}
    \|\Pi_I(\mathcal E)\|
    >
    C_{\gamma,d,K}
    \left(
    \sqrt{\frac{s}{n}}
    +
    \frac{s^{d/\gamma}}{n}
    \right)
    \right\}  \\
    &\qquad\le
    2p^2\exp(C_d m-s).
\end{split}
\]
Now take
\[
    s
    =
    C_d' m+3\log p+x
\]
with $C_d'>C_d$ sufficiently large.
Then
\[
    2p^2\exp(C_d m-s)
    \le
    Ce^{-x}.
\]
Since $s\lesssim_d m+\log p+x$, we obtain, with probability at least $1-Ce^{-x}$,
\[
    \max_{I:\ |I|\le m}
    \|\Pi_I(\mathcal E)\|
    \le
    C_{\gamma,d,K}
    \left\{
    \sqrt{\frac{m+\log p+x}{n}}
    +
    \frac{(m+\log p+x)^{d/\gamma}}{n}
    \right\}.
\]
This proves the lemma.

\end{proof}

\begin{lemma}[Spectral norm of a locally supported tensor]
\label{lem:local-support-tensor-norm}
Let $d\ge2$.
Let $\mathcal A\in(\R^p)^{\otimes d}$ satisfy
\[
    \mathcal A_{\bi}=0
    \qquad
    \text{whenever }
    \operatorname{diam}(\bi)>m,
\]
and
\[
    \max_{\bi\in[p]^d}|\mathcal A_{\bi}|
    \le
    a.
\]
Then
\[
    \|\mathcal A\|
    \le
    C_d a m^{d/2}.
\]
\end{lemma}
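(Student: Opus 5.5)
The plan is to reduce the local-support bound to the block-diagonal estimate of Lemma~\ref{lem:block-diagonal-tensor}, using the same window-covering device as in the stochastic part of the proof of Theorem~\ref{thm:tapered-moment-upper-bound}.

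First, cover the support of $\mathcal A$ by diagonal cubes. Partition $[p]$ into consecutive intervals $J_1,J_2,\ldots$ of length $m+1$ (the last may be shorter), and set $I_\ell=J_\ell\cup J_{\ell+1}$, an interval of length at most $2(m+1)$. If $\operatorname{diam}(\bi)\le m$, let $\ell$ be the index of the block containing $\min_j i_j$. Then every coordinate of $\bi$ lies in $[\min_j i_j,\min_j i_j+m]\subset J_\ell\cup J_{\ell+1}=I_\ell$, so the entry lies in $I_\ell^d$ for this particular $\ell$. Accordingly, define $\mathcal A^{(\ell)}$ as the tensor keeping the entries of $\mathcal A$ whose minimal coordinate lies in $J_\ell$. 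These tensors partition the entries of $\mathcal A$, so $\mathcal A=\sum_\ell\mathcal A^{(\ell)}$ with $\mathcal A^{(\ell)}$ supported on $I_\ell^d$. Avoiding the overlap-counting weights $(m-\operatorname{diam})_+$ used in the proof of Theorem~\ref{thm:tapered-moment-upper-bound} removes any need to divide by $m$.

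Second, split the indices $\ell$ by parity. Within the even indices the intervals $I_\ell$ are pairwise disjoint, and the same holds within the odd indices. Applying Lemma~\ref{lem:block-diagonal-tensor} to each parity class and then the triangle inequality gives
\[
\|\mathcal A\|\le 2\max_\ell\|\mathcal A^{(\ell)}\|.
\]

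Third, bound each piece crudely by its Frobenius norm. Since $\mathcal A^{(\ell)}$ is supported on $I_\ell^d$, it has at most $(2m+2)^d$ nonzero entries, each bounded by $a$. Hence
\[
\|\mathcal A^{(\ell)}\|\le\|\mathcal A^{(\ell)}\|_F\le a(2m+2)^{d/2}\le C_d\,a\,m^{d/2}
\]
for $m\ge1$. The case $m=0$, where $\mathcal A$ is diagonal, is handled directly: Lemma~\ref{lem:block-diagonal-tensor} with singleton blocks gives $\|\mathcal A\|\le a$, which matches the bound if $m^{d/2}$ is read as $\max(m,1)^{d/2}$.

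The only delicate point is making the decomposition an exact partition of the support, so that no double counting occurs. The minimal-coordinate assignment handles this, and the resulting bound $C_d\,a\,m^{d/2}$ is exactly the one used in the bias step of the proof of Theorem~\ref{thm:tapered-moment-upper-bound}.
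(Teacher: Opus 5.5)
Your proof is correct and is essentially the paper's argument. The paper also writes $\mathcal A$ as a sum of two block-diagonal pieces, using two shifted grids of length-$2m$ intervals where you use parity classes of paired length-$(m+1)$ blocks; it then applies Lemma~\ref{lem:block-diagonal-tensor} to each piece and bounds every block by its Frobenius norm $a|I|^{d/2}$. Your remark that the bound needs $\max(m,1)^{d/2}$ when $m=0$ is a fair point that the paper's statement glosses over, though it is harmless because the lemma is only invoked with $m=2^qk\ge 2$.
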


\begin{proof}
We use two shifted partitions of $[p]$ into intervals of length at most $2m$.
Let
\[
    \mathcal I_0
    =
    \{[2jm+1,2(j+1)m]\cap[p]:j\in\mathbb Z\},
\]
and
\[
    \mathcal I_1
    =
    \{[(2j-1)m+1,(2j+1)m]\cap[p]:j\in\mathbb Z\}.
\]
Every set of indices with diameter at most $m$ is contained in one interval from either $\mathcal I_0$ or $\mathcal I_1$.
Indeed, if all indices lie in an integer interval $[r,r+m]$, then this interval is contained in one interval of one of the two shifted grids.

Decompose
\[
    \mathcal A
    =
    \mathcal A^{(0)}
    +
    \mathcal A^{(1)}
\]
as follows.
Assign an entry of $\mathcal A$ to $\mathcal A^{(0)}$ if all its coordinates are contained in some interval of $\mathcal I_0$.
Assign all remaining nonzero entries to $\mathcal A^{(1)}$.
By the covering property above, every nonzero entry assigned to $\mathcal A^{(1)}$ has all its coordinates contained in some interval of $\mathcal I_1$.

For each $s\in\{0,1\}$, the tensor $\mathcal A^{(s)}$ is block diagonal with respect to the disjoint intervals in $\mathcal I_s$.
By Lemma~\ref{lem:block-diagonal-tensor},
\[
    \|\mathcal A^{(s)}\|
    =
    \max_{I\in\mathcal I_s}
    \|\Pi_I(\mathcal A^{(s)})\|,
\]
where $\Pi_I$ denotes the projection that keeps only entries whose indices all lie in $I$.
For every $I\in\mathcal I_s$, we have $|I|\le 2m$.
Therefore,
\[
    \|\Pi_I(\mathcal A^{(s)})\|
    \le
    \|\Pi_I(\mathcal A^{(s)})\|_F
    \le
    a |I|^{d/2}
    \le
    C_d a m^{d/2}.
\]
Hence
\[
    \|\mathcal A^{(s)}\|
    \le
    C_d a m^{d/2},
    \qquad
    s=0,1.
\]
The triangle inequality gives
\[
    \|\mathcal A\|
    \le
    \|\mathcal A^{(0)}\|
    +
    \|\mathcal A^{(1)}\|
    \le
    C_d a m^{d/2}.
\]
This proves the lemma.
\end{proof}

\begin{lemma}[Uniform local empirical moment bound over intervals]
\label{lem:uniform-local-empirical-moment-all-orders}
Let $d\ge2$ be fixed, and let $X_1,\ldots,X_n$ be i.i.d. copies of a random vector $X\in\R^p$.
Assume that, for some $0<\gamma\le d$,
\[
    \sup_{u\in\mathbb S^{p-1}}
    \|u^\top X\|_{\Phi_\gamma}
    \le K.
\]
For $1\le r\le d$, let
\[
    \bcM_r=\E X^{\otimes r},
    \qquad
    \widehat{\bcM}_r=\frac1n\sum_{\ell=1}^n X_\ell^{\otimes r},
    \qquad
    \mathcal E_r=\widehat{\bcM}_r-\bcM_r.
\]
For an interval $I\subset[p]$, let $\Pi_I^{(r)}$ denote the order-$r$ tensor projection that keeps only entries whose indices all belong to $I$.
Define
\[
    \mathcal I_m
    =
    \{I\subset[p]: I \text{ is an interval and } |I|\le m\}.
\]
Then for every $1\le m\le p$ and every $x\ge0$, with probability at least $1-Ce^{-x}$,
\begin{equation}
\label{eq:uniform_local_moment_errors_all_orders_lemma}
    \max_{1\le r\le d}
    \max_{I\in\mathcal I_m}
    \|\Pi_I^{(r)}(\mathcal E_r)\|
    \le
    C_{\gamma,d,K}
    \left\{
    \sqrt{\frac{m+\log p+x}{n}}
    +
    \frac{(m+\log p+x)^{d/\gamma}}{n}
    \right\}.
\end{equation}
\end{lemma}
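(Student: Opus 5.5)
The plan is to repeat the proof of Lemma~\ref{lem:local-empirical-moment-bound} once for each order $r\in\{1,\ldots,d\}$ and combine the $d$ events with a union bound. Two things change relative to that lemma. First, the variables $\prod_{a=1}^r v_a^\top X_\ell$ have only $r$ factors. Second, $X$ is not assumed to be centered. Neither change affects the argument, because centering is never used: each summand $Z_\ell$ is centered by subtracting its own mean.

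Fix $r$ and an interval $I\in\mathcal I_m$. Take a $1/4$-net $\mathcal N_I$ of $\mathbb S^{|I|-1}$ with $|\mathcal N_I|\le 9^m$. The successive linear-approximation argument in the proof of Lemma~\ref{lem:local-empirical-moment-bound}, applied with $r$ modes, gives
\[
\|\Pi_I^{(r)}(\mathcal E_r)\|\le (4/3)^r\max_{v_1,\ldots,v_r\in\mathcal N_I}\bigl|\langle \mathcal E_r,v_1\otimes\cdots\otimes v_r\rangle\bigr|.
\]
For fixed net vectors, set $Z_\ell=\prod_{a=1}^r v_a^\top X_\ell-\E\prod_{a=1}^r v_a^\top X$. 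By the generalized H\"older inequality for Orlicz norms (Lemmas A.1 and A.3 of \cite{gotzeConcentrationInequalitiesPolynomials2021}), $\|Z_\ell\|_{\Phi_{\gamma/r}}\le C_{\gamma,d,K}$.

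The only delicate step is applying Lemma~\ref{lemma_subexpo_concentration}, which requires the Orlicz exponent to lie in $(0,1]$. Since $\gamma\le d$ but possibly $\gamma>r$, the exponent $\gamma/r$ may exceed $1$. In that case I will use that $\Phi_{\gamma/r}$-boundedness implies $\Phi_1$-boundedness, which is equivalent up to constants and also dominated by $\Phi_{\gamma/d}$. More simply, I apply the lemma with exponent $\gamma/d\le 1$ throughout: $\|Z_\ell\|_{\Phi_{\gamma/d}}\le C\|Z_\ell\|_{\Phi_{\gamma/r}}$ because $\gamma/d\le\gamma/r$. Lemma~\ref{lemma_subexpo_concentration} with $a_\ell=1/n$ then yields
\[
\PP\Bigl\{\bigl|n^{-1}\textstyle\sum_\ell Z_\ell\bigr|>C_{\gamma,d,K}\bigl(\sqrt{s/n}+s^{d/\gamma}/n\bigr)\Bigr\}\le 2e^{-s}.
\]
This bound is uniform in $r$, which is exactly why every order in the statement shares the exponent $d/\gamma$. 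It is harmless that the bound is weaker than optimal for small $r$.

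Next comes the union bound over the at most $9^{rm}\le e^{C_dm}$ net tuples, the at most $p^2$ intervals in $\mathcal I_m$, and the $d$ orders. With $s=C_d'm+3\log p+x+\log d$, the total failure probability is at most $2dp^2e^{C_dm-s}\le Ce^{-x}$. Since $s\lesssim_d m+\log p+x$, and both $\sqrt{\cdot}$ and $(\cdot)^{d/\gamma}$ are monotone with constant-factor scaling, the bound \eqref{eq:uniform_local_moment_errors_all_orders_lemma} follows after enlarging $C_{\gamma,d,K}$.

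I expect no real obstacle. The step that needs attention is the exponent bookkeeping just described, together with checking that dropping the centering assumption is harmless.
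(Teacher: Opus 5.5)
Your proposal is correct and follows the paper's proof essentially step for step: a $1/4$-net argument on each interval, the Orlicz--H\"older bound for the $r$-fold product followed by centering, Lemma~\ref{lemma_subexpo_concentration} with weights $1/n$, and a union bound over net tuples, the at most $p^2$ intervals, and the $d$ orders. The only difference is how the exponent is handled. The paper applies the concentration lemma with $\eta_r=\min\{\gamma/r,1\}$ and then uses $(m+\log p+x)^{1/\eta_r}\le(m+\log p+x)^{d/\gamma}$, whereas you pass directly to $\Phi_{\gamma/d}$ using monotonicity of Orlicz norms in the exponent; this is equally valid and gives the same bound.
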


\begin{proof}
Fix $1\le r\le d$ and $I\in\mathcal I_m$.
For $u_1,\ldots,u_r\in\mathbb S^{p-1}$ supported on $I$, we have
\[
\begin{split}
    \left\langle
    \mathcal E_r,
    u_1\otimes\cdots\otimes u_r
    \right\rangle
    &=
    \frac1n\sum_{\ell=1}^n
    \left\langle
    X_\ell^{\otimes r}-\bcM_r,
    u_1\otimes\cdots\otimes u_r
    \right\rangle  \\
    &=
    \frac1n\sum_{\ell=1}^n
    \left\{
    \prod_{a=1}^r u_a^\top X_\ell
    -
    \E\prod_{a=1}^r u_a^\top X
    \right\}.
\end{split}
\]
By Lemma A.1 and Lemma A.3 in \cite{gotzeConcentrationInequalitiesPolynomials2021}, together with the assumption
\[
    \sup_{u\in\mathbb S^{p-1}}
    \|u^\top X\|_{\Phi_\gamma}
    \le K,
\]
we have
\[
    \left\|
    \prod_{a=1}^r u_a^\top X
    \right\|_{\Phi_{\gamma/r}}
    \le
    C_{\gamma,d,K}.
\]
Since Lemma~\ref{lemma_subexpo_concentration} is stated for Orlicz exponents in $(0,1]$, set
\[
    \eta_r=\min\{\gamma/r,1\}.
\]
Then
\[
    \left\|
    \prod_{a=1}^r u_a^\top X
    \right\|_{\Phi_{\eta_r}}
    \le
    C_{\gamma,d,K}.
\]
After centering, again by Lemma A.3 in \cite{gotzeConcentrationInequalitiesPolynomials2021},
\[
    \left\|
    \left\langle
    X^{\otimes r}-\bcM_r,
    u_1\otimes\cdots\otimes u_r
    \right\rangle
    \right\|_{\Phi_{\eta_r}}
    \le
    C_{\gamma,d,K}.
\]

Applying Lemma~\ref{lemma_subexpo_concentration} to the independent centered random variables
\[
    \left\langle
    X_\ell^{\otimes r}-\bcM_r,
    u_1\otimes\cdots\otimes u_r
    \right\rangle,
    \qquad \ell=1,\ldots,n,
\]
with $a_1=\cdots=a_n=1/n$ gives, for every $t\ge0$,
\[
\begin{split}
    \PP
    \left\{
    \left|
    \left\langle
    \mathcal E_r,
    u_1\otimes\cdots\otimes u_r
    \right\rangle
    \right|
    >t
    \right\}
    &\le
    2\exp\left[
    -\frac1{C_{\gamma,d,K}}
    \min\left\{
    nt^2,
    (nt)^{\eta_r}
    \right\}
    \right].
\end{split}
\]
Equivalently, after enlarging constants, for every $s\ge0$,
\begin{equation}
\label{eq:fixed_direction_local_moment_tail}
    \PP
    \left\{
    \left|
    \left\langle
    \mathcal E_r,
    u_1\otimes\cdots\otimes u_r
    \right\rangle
    \right|
    >
    C_{\gamma,d,K}
    \left(
    \sqrt{\frac{s}{n}}
    +
    \frac{s^{1/\eta_r}}{n}
    \right)
    \right\}
    \le
    2e^{-s}.
\end{equation}

For each interval $I$, let $\mathcal N_I$ be a fixed $1/4$-net of the unit sphere in $\R^I$.
We may choose it so that
\[
    |\mathcal N_I|
    \le
    9^{|I|}
    \le
    9^m.
\]
For an order-$r$ tensor $\mathcal A$ supported on $I^r$, the standard net argument gives
\[
    \|\mathcal A\|
    \le
    \left(\frac43\right)^r
    \max_{u_1,\ldots,u_r\in\mathcal N_I}
    \left|
    \left\langle
    \mathcal A,
    u_1\otimes\cdots\otimes u_r
    \right\rangle
    \right|.
\]
Since $r\le d$, this implies
\[
    \|\mathcal A\|
    \le
    C_d
    \max_{u_1,\ldots,u_r\in\mathcal N_I}
    \left|
    \left\langle
    \mathcal A,
    u_1\otimes\cdots\otimes u_r
    \right\rangle
    \right|.
\]
Applying this to $\Pi_I^{(r)}(\mathcal E_r)$ and using \eqref{eq:fixed_direction_local_moment_tail}, we obtain, for fixed $r$ and fixed $I$,
\[
    \PP
    \left\{
    \|\Pi_I^{(r)}(\mathcal E_r)\|
    >
    C_{\gamma,d,K}
    \left(
    \sqrt{\frac{s}{n}}
    +
    \frac{s^{1/\eta_r}}{n}
    \right)
    \right\}
    \le
    2|\mathcal N_I|^r e^{-s}.
\]
Since $|\mathcal N_I|^r\le 9^{rm}\le \exp(C_d m)$, it follows that
\[
    \PP
    \left\{
    \|\Pi_I^{(r)}(\mathcal E_r)\|
    >
    C_{\gamma,d,K}
    \left(
    \sqrt{\frac{s}{n}}
    +
    \frac{s^{1/\eta_r}}{n}
    \right)
    \right\}
    \le
    2\exp(C_d m-s).
\]

There are at most $p^2$ intervals in $\mathcal I_m$ and $d$ possible values of $r$.
Taking a union bound gives
\[
\begin{split}
    &\PP
    \left\{
    \exists\,1\le r\le d,\ I\in\mathcal I_m:
    \|\Pi_I^{(r)}(\mathcal E_r)\|
    >
    C_{\gamma,d,K}
    \left(
    \sqrt{\frac{s}{n}}
    +
    \frac{s^{1/\eta_r}}{n}
    \right)
    \right\}  \\
    &\qquad\le
    2d p^2\exp(C_d m-s).
\end{split}
\]
Choose
\[
    s=C_d'(m+\log p+x)
\]
with $C_d'$ sufficiently large.
Since $d$ is fixed, after enlarging constants if necessary,
\[
    2d p^2\exp(C_d m-s)
    \le
    Ce^{-x}.
\]
Thus, with probability at least $1-Ce^{-x}$, simultaneously for all $1\le r\le d$ and all $I\in\mathcal I_m$,
\[
    \|\Pi_I^{(r)}(\mathcal E_r)\|
    \le
    C_{\gamma,d,K}
    \left\{
    \sqrt{\frac{m+\log p+x}{n}}
    +
    \frac{(m+\log p+x)^{1/\eta_r}}{n}
    \right\}.
\]
Finally,
\[
    \frac1{\eta_r}
    =
    \max\{r/\gamma,1\}
    \le
    \frac d\gamma,
\]
because $r\le d$ and $\gamma\le d$.
Since $m+\log p+x\ge1$,
\[
    (m+\log p+x)^{1/\eta_r}
    \le
    (m+\log p+x)^{d/\gamma}.
\]
Therefore,
\[
    \max_{1\le r\le d}
    \max_{I\in\mathcal I_m}
    \|\Pi_I^{(r)}(\mathcal E_r)\|
    \le
    C_{\gamma,d,K}
    \left\{
    \sqrt{\frac{m+\log p+x}{n}}
    +
    \frac{(m+\log p+x)^{d/\gamma}}{n}
    \right\}.
\]
This proves the lemma.
\end{proof}

\end{document}